\newcolumntype{C}{>{$}c<{$}} 
\newcolumntype{L}{>{$}l<{$}} 
\newcolumntype{R}{>{$}r<{$}}
\newcommand{\multiline}[1]{%
  \begin{tabularx}{\dimexpr\linewidth-\ALG@thistlm}[t]{@{}X@{}}
    #1
  \end{tabularx}
}
\pgfplotsset{compat=1.14}
\newtheorem{claim}{Claim}
\newtheorem{theorem}{Theorem}
\newtheorem{proposition}{Proposition}
\newtheorem{defin}{Definition}
\newtheorem{lemma}{Lemma}
\DeclareMathOperator{\supp}{supp}
\newcommand{\1}{\mathbb 1}
\newcommand{\ox}[0]{\ensuremath{\otimes}}
\DeclareMathOperator{\im}{\ensuremath{\mathrm{Im}}}
\DeclareMathOperator{\rank}{\ensuremath{\mathrm{rank}}}
\DeclareMathOperator{\tshadow}{\text{-}\mathrm{Shadow}}
\definecolor{grey_pale}{RGB}{157,157,157}
\definecolor{cb-blue}{RGB}{0,119,187}
\definecolor{cb-cyan}{RGB}{51,187,238}
\definecolor{cb-teal}{RGB}{0,153,136}
\definecolor{cb-orange}{RGB}{238,119,51}
\definecolor{cb-red}{RGB}{204,51,17}
\definecolor{cb-magenta}{RGB}{238,51,119}
\definecolor{cb-grey}{RGB}{187,187,187}
\begin{document}
\title{Single-shot error correction of three-dimensional homological product codes}

\author{Armanda O.\ Quintavalle}
\email{armandaoq@gmail.com}
\affiliation{Department of Physics \& Astronomy, University of Sheffield, Sheffield, S3 7RH, United Kingdom}
\author{Michael Vasmer}
\email{mvasmer@perimeterinstitute.ca}
\affiliation{Perimeter Institute for Theoretical Physics, Waterloo, ON N2L 2Y5, Canada}
\affiliation{Institute for Quantum Computing, University of Waterloo, Waterloo, ON N2L 3G1, Canada}
\author{Joschka Roffe}
\email{joschka@roffe.eu}
\affiliation{Department of Physics \& Astronomy, University of Sheffield, Sheffield, S3 7RH, United Kingdom}
\author{Earl T.\ Campbell}
\email{earltcampbell@gmail.com}
\affiliation{Department of Physics \& Astronomy, University of Sheffield, Sheffield, S3 7RH, United Kingdom}
\affiliation{AWS Center for Quantum Computing, Pasadena, CA 91125 USA}

\begin{abstract}
Single-shot error correction corrects data noise using only a single round of noisy measurements on the data qubits, removing the need for intensive measurement repetition. We introduce a general concept of confinement for quantum codes, which roughly stipulates qubit errors cannot grow without triggering more measurement syndromes.  We prove confinement is sufficient for single-shot decoding of adversarial errors and linear confinement is sufficient for single-shot decoding of local stochastic errors. Further to this, we prove that all three-dimensional homological product codes exhibit confinement in their $X$-components and are therefore single-shot for adversarial phase-flip noise. For local stochastic phase-flip noise, we numerically explore these codes and again find evidence of single-shot protection. Our Monte Carlo simulations indicate sustainable thresholds of $3.08(4)\%$ and $2.90(2)\%$ for 3D surface and toric codes respectively, the highest observed single-shot thresholds to date. To demonstrate single-shot error correction beyond the class of topological codes, we also run simulations on a randomly constructed 3D homological product code.

\end{abstract}

\maketitle

\section{Introduction}

Quantum error correction encodes logical quantum information into a codespace \cite{roffe2019quantum}.  Given perfect measurement of the codespace stabilisers we obtain the syndrome of any error present. A suitable decoding algorithm can determine a recovery operation that returns the system to the codespace. Either this recovery is a perfect success, or a failure resulting in a high weight logical error.  However, in real quantum systems the measurements are not perfect and this simple story becomes more involved.  The  three main strategies for tackling noisy measurements are: repeated measurements on the code~\cite{dennis02,FowlerRepMeasure}; performing measurement driven error-correction on a cluster state~\cite{Raussendorf2007,bolt2016foliated,nickerson2018measurement,bombin20182d,brown2019fault,newman2019generating}; or using a single-shot code and decoder~\cite{bombin2015single}.  Focusing on the last strategy, the single-shot approach has the advantage of no additional time cost or cluster-state generation cost and provides a resilience against time-correlated noise~\cite{bombin2016resilience}.  In single-shot error correction, some residual error persists after each round of error correction, but this residual error is kept small and does not rapidly accumulate. However, only a special class of codes support single-shot error correction, but exactly which codes and why is not yet fully understood.

Bomb\'{i}n coined the phrase single-shot error correction and remarked that it ``\textit{is related to self-correction and confinement phenomena in the corresponding quantum Hamiltonian model.}"~\cite{bombin2015single}.  He defined confinement for subsystem codes, and showed that it is sufficient for single-shot error correction with a limited class of subsystem codes. In particular, he proved that the 3D gauge color code supports single-shot error correction, though it is unknown whether the corresponding Hamiltonian exhibits self-correction.  Later single-shot error correction was numerically observed in a variety of higher dimensional topological codes, including: the 3D gauge color code~\cite{brown15}, 4D surface codes~\cite{duivenvoorden2018renormalization} and their hyperbolic cousins~\cite{breuckmann2020single}, and 3D surface codes with phase noise~\cite{Kubica2018a,Kubica2019,vasmer2020cellular}.  Campbell established a general set of sufficient conditions, encapsulated by a code property called good soundness, that ensured adversarial noise~\cite{campbell2019theory} could be suppressed using a single-shot decoder. While Campbell's sufficiency conditions explained single-shot error correction in a wide range of codes, around the same time it was shown that quantum expander codes~\cite{tillich2014quantum,leverrier15,fawzi2018} supported single-shot error correction~\cite{leverrier18}.  However, quantum expander codes lack the soundness property so neither Bomb\'{i}n's notion of confinement or Campbell's notion of soundness is sufficient to encompass all known examples of single-shot error correction.  

We can use different classical algorithms to decode a given quantum code, and this choice will affect the utility of the code. Different decoders have various time complexities and error tolerances, which affects the resources required by a quantum computer based on the code~\cite{fowler12b,terhal2015review,Das2020}. Thus far, single-shot decoders come in two flavours. The first are two-stage decoders~\cite{brown15,duivenvoorden2018renormalization}, where: stage 1 decoding repairs the noisy syndrome using redundancy in the parity check measurements; stage 2 decoding solves the corrected syndrome problem. The second flavour of decoders compute a correction from the noisy syndrome without attempting to repair it. Most examples of such decoders are local decoders, meaning that the whole correction is made up of corrections computed in small local regions of the code using syndrome information in the immediate neighbourhood~\cite{Kubica2018a,Kubica2019,vasmer2020cellular,fawzi2018,grospellier2018numerical,grospellier2020combining,breuckmann2020single}. However, there are some examples of non-local decoders such as belief propagation (BP) being used for single-shot error correction without syndrome repair~\cite{breuckmann2020single,grospellier2020combining}.  A natural question to ask is: what is the optimal decoding strategy for single-shot codes? Even in the simple case of the 3D toric code this is not well-understood. 

The remainder of this article is structured as follows. In Sec.~\ref{sec:summary}, we give a summary of our results. In Sec.~\ref{sec:formal}, we formally state our results on confinement and single-shot decoding. In Sec.~\ref{sec:3d_product}, we detail the construction of 3D product codes.  In Sec.~\ref{sec:numerics}, we present our numerical simulations and analyse their results. Finally, in Sec.~\ref{sec:conc}, we discuss future research directions that flow from this work.

\section{Summary of results \label{sec:summary}}
This article is in two parts: on the one hand, we propose the concept of confinement as an essential characteristic for a code family to display single-shot properties; on the other, we investigate the single-shot performances of the class of 3D homological product codes~\cite{tillich2014quantum, bravyi14, audoux2015tensor}, which we call 3D product codes.
We introduce confinement in Sec.~\ref{sec:formal}. Loosely, confinement stipulates that low-weight qubit errors will result in low-weight syndromes. We then formalise the notion of a code family having good confinement, which we prove is a sufficient condition for single-shot decoding in the adversarial noise setting. In addition to that, we prove that good \emph{linear} confinement is a sufficient condition for a family of codes to exhibit a sustainable single-shot threshold for local stochastic noise (App.~\ref{app:threshold}). 
We review the construction of the 3D product codes in App.~\ref{sec:3d_product}, and show that the 3D surface and toric codes are particular instances of this more general class of codes in Sec.~\ref{app:3dspace}. We prove that all 3D product codes have (cubic) confinement for phase-flip errors (App.~\ref{app:3dconfinement}), and therefore have single-shot error correction for adversarial phase-flip noise. We expect these codes to have single-shot error correction for local stochastic phase-flip noise as well.
In fact, our definition of confinement generalises the definition proposed by Bomb\'{i}n \cite{ bombin2015single} for the gauge color code and the notion of robustness for expander codes \cite{leverrier15}; since both class of codes are proven to have a single-shot threshold for local stochastic noise \cite{bombin2015single, leverrier18} we conjecture that low density parity check (LDPC) codes with good (super linear) confinement have a threshold too. We investigate this case numerically. 

In the single-shot setting, the code always has some residual error present and the error correction procedure introduces noise correlations in subsequent rounds of single-shot error correction.  How then do we assess success or failure of a decoding algorithm? The concept of the sustainable threshold was proposed by Brown, Nickerson and Browne~\cite{brown15} as a metric for single-shot codes and decoders.  We use $p_{\mathrm{th}}(N)$ to denote the threshold of a code-decoder family given $N$ cycles of qubit noise, noisy syndrome extraction and single-shot decoding, with the $N^{\mathrm{th}}$ cycle followed by a single round of noiseless syndrome extraction and decoding.  The final round ensures that we can return the system to the codespace and assess success by the absence of a logical error. We define the sustainable threshold of the code-decoder family to be 
\begin{equation}
    p_{\mathrm{sus}}=\lim_{N\rightarrow\infty}p_{\mathrm{th}}(N).
\end{equation}
Numerically, this is estimated by plotting $p_{\mathrm{th}}(N)$ against $N$ and fitting to the following ansatz,
\begin{equation}
    p_{\mathrm{th}}(N) = p_{\mathrm{sus}}
    [1 - (1 - p_{\mathrm{th}}(0)/p_{\mathrm{sus}})
    e^{-\gamma N}].
    \label{eq:psus_fit}
\end{equation}

\begin{figure}
    \centering
    \includegraphics[width=\columnwidth]{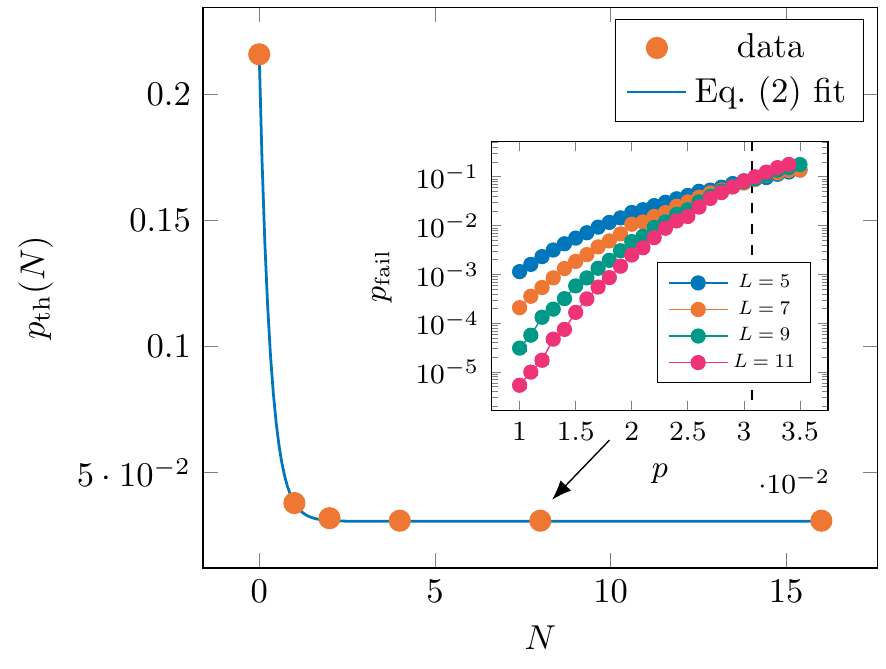}
    \caption{Numerical estimate of the sustainable threshold of the 3D surface code for a two-stage decoder where we repair the syndrome using MWPM, and solve the corrected syndrome problem using BP-OSD. We plot the error threshold $p_{\mathrm{th}}(N)$ for different numbers of cycles, $N$. Using the ansatz in \cref{eq:psus_fit}, we estimate the sustainable threshold to be $p_{\mathrm{sus}}=0.0308(4)$ with $\gamma=3.23$. The inset shows a plot of the logical error rate, $p_{\mathrm{fail}}$, against the phase-flip and measurement error rate, $p$, for $N=8$. The error threshold $p_{\mathrm{th}}(8)$ is the point at which the curves intersect ($L$ is the code distance).}
    \label{fig:3DSC_psus}
\end{figure}

We numerically estimate the sustainable error thresholds of 3D toric and surface codes for two different two-stage decoders. We surpass all previous single-shot error thresholds for these code families, and we also obtain the highest code-capacity noise (no measurement error) threshold; see \cref{tab:pth_compare}. For our single-shot simulations, we use an independent and identically distributed (iid) noise model where each qubit experiences a phase-flip error with probability $p$, and each stabiliser measurement outcome is flipped with probability $q=p$. We investigate two decoding strategies: one where we use minimum-weight perfect matching (MWPM) for stage 1 decoding and belief propagation with ordered statistics decoding (BP+OSD) for stage 2 decoding and another where we use BP+OSD for both decoding stages. Fig.~\ref{fig:3DSC_psus} shows the 3D surface code sustainable threshold fit, using the MWPM \& BP+OSD decoding strategy. We find a comparable sustainable threshold for the 3D surface code using BP+OSD for both decoding stages, as shown in \cref{tab:3D-PC-results}. We achieve very similar performance for the 3D toric code, although there is a subtlety present in stage 1 decoding that is not present in the 3D surface code case; see Sec.~\ref{subsec:2stage}. For both code families, we provide evidence that the performance of stage 1 decoding is the bottleneck of the full decoding procedure, and we achieve near-optimal performance within this constraint. 

\begin{table}[ht!]
    \centering
    \begin{ruledtabular}
	\begin{tabular}{lll}
        Decoder & $q=0$ & $q=p$ \\
        \hline
		Erasure Mapping~\cite{Aloshious2019} & $12.2\%$ & N/A \\
		Toom's Rule~\cite{Kubica2018a} & $14.5\%$ & N/A \\
		Sweep*~\cite{vasmer2020cellular} & $15.5\%$ & $1.7\%$ \\
		Renormalization Group~\cite{duivenvoorden2018renormalization} & $17.2\%$ & 7.3(1)\% \\
		Neural Network~\cite{Breuckmann2018} & 17.5\% & 7.1(3)\% \\
		\textbf{MWPM + BP/OSD*} & $\mathbf{21.55(1)\%}$ & $\mathbf{2.90(2)\%}$ \\
		Optimal~\cite{Ozeki1998,Ohno2004,Hasenbusch2007,takeda2004,Kubica2018} & 23.180(4)\% & 11.0\% 
    \end{tabular}
    \end{ruledtabular}
	\caption{Comparison of the error thresholds of toric code decoders against phase-flip noise with ($q=p$) and without ($q=0$) measurement errors. Starred entries are single-shot decoders.  Unstarred entries use the whole syndrome history.  Text in bold highlights our results.}
	\label{tab:pth_compare}
\end{table}

\begin{table}[ht!]
    \centering
    \begin{ruledtabular}
	\begin{tabular}{lll}
        Code & MWPM \& BP+OSD & BP+OSD x2 \\
        \hline
		Surface & 3.08(4)\% & 2.90(1)\%  \\
	    Toric & 2.90(2)\% & 2.78(2)\%
    \end{tabular}
    \end{ruledtabular}
	\caption{Sustainable thresholds for 3D toric and surface codes for different single-shot decoding strategies. For each entry in the table, we did an analogous simulation to that described in Fig.~\ref{fig:3DSC_psus}. The numbers in brackets are the standard errors.}
	\label{tab:3D-PC-results}
\end{table}

The advantage of using BP+OSD for stage 1 decoding is that, unlike MWPM, this decoder does not rely on the special structure of the loop-like syndrome present in 3D toric and surface codes. Therefore, we anticipate that one can use BP+OSD for single-shot decoding of general 3D product codes. We numerically test this prediction by decoding a family of non-topological 3D product codes using BP+OSD for both decoding stages, achieving sustainable thresholds that are comparable to those of the 3D toric and surface codes. This provides evidence that BP+OSD can be used as a generic two-stage decoder for single-shot LDPC 3D product codes. 

\section{Formal statements \label{sec:formal}}
In this Section we introduce the definition of \emph{confinement} for a stabiliser code and exhibit a theoretical two-stage decoder, the \emph{Shadow decoder}, which we prove is single-shot on confined codes against adversarial noise. We refer the reader to App.~\ref{app:threshold} to see how a variant of the Shadow decoder can be used to prove that good families of codes with linear confinement have a single-shot threshold for local stochastic noise.

A stabiliser code encoding $k$ logical qubits into $n$ physical qubits can be described by its stabiliser group $\mathcal S$ and a syndrome map $\sigma(\cdot )$. The stabiliser group $\mathcal S$ is an Abelian subgroup of the Pauli group $\mathcal P_n$ on $n$ qubits which does not contain $- \mathbb 1$ and has dimension $n-k$. The syndrome map is not unique: any generating set of the group $\mathcal {S}$ defines a valid syndrome map for the code.  If $\{s_1, \dots, s_m\}$ is one of such generating sets, the associate function $\sigma(\cdot)$ maps a qubit operator $p \in \mathcal{P}_n$ into the binary vector $(\bar s_1, \dots, \bar s_m)^T \in \mathbb F_2^m$, where $\bar s_i = 1$ if $s_i$ anti-commutes with $p$ and $0$ otherwise. Importantly, $\sigma(\cdot)$ is linear, meaning that $\sigma(p \cdot q) = \sigma(p) + \sigma(q)$ over $\mathbb F_2^m$.
Because any Pauli operator $p \in \mathcal P_n$ can be factorised as the product of an $X$ and a $Z$-operator $p_X$ and $p_Z$, we can identify it with a binary vector $\bar p = (\bar p_X, \bar p_Z)^T \in \mathbb F_2^{2n}$, where the $i$th entry of $\bar p_{X}/\bar p_{Z}$ is $1$ if and only if $p_X/p_Z$ acts non-trivially on the $i$th qubit.
Given a Pauli operator $p$, its weight $\lvert p \rvert$ is the number of qubits on which its action is not the identity. Fixed a stabiliser code with syndrome function $\sigma(\cdot)$, the reduced weight of a Pauli operator $p \in \mathcal P_n$ on the physical qubits is
\begin{align*}
|p|^{\mathrm{red}} \coloneqq \min\{|p\cdot q|:\, &\sigma(p\cdot q)=\sigma(p),\\ &\text{ for some } q \in \mathcal{P}_n\}.
\end{align*}
A stabiliser code is said to be distance $d$ if $d$ is the minimum weight of a Pauli operator not in $\mathcal S$ that has trivial syndrome. We will refer to a code of length $n$, dimension $k$ and distance $d$ as a $[[n, k, d]]$ code.

For a stabiliser code, we then have
\begin{defin}[Confinement]
Let $t$ be an integer and $f: \mathbb Z \rightarrow \mathbb Z$ some increasing function with $f(0)=0$. We say that a stabiliser code has $(t, f)$-confinement if, for all errors $e$ with $\lvert e\rvert^{\mathrm{red}}\le t$, it holds
\begin{align*}
    f(\lvert \sigma(e)\rvert)  \ge \lvert e \rvert^{\mathrm{red}}.
\end{align*}
\end{defin}
Let us contrast this with Bomb\'{i}n's notion of confinement (Def.~16 of Ref.~\cite{bombin2015single}) that has some similarities but only allows for linear functions of the form $f(x)=\kappa x$ for some constant $\kappa$.  Many codes, including 3D product codes, have superlinear confinement functions, as such Bomb\'{i}n's definition does not encompass them. Moreover, the concept of confinement is closely related to soundness \cite{campbell2019theory} but it is weaker and so able to encompass more families of codes, such as the expander codes~\cite{tillich2014quantum,leverrier15,fawzi2018} which are confined but not sound. Roughly speaking, a code has good confinement if small qubit errors produce small measurement
syndromes; this differs from good soundness which entails that small syndromes can be produced by small errors.

Formally, we define the following notion of good confinement for a family of stabiliser codes
\begin{defin}[Good confinement]\label{def:good_confinement}
Consider an infinite family of stabiliser codes. We say that the family has good confinement if each code in it has $(t, f)$-confinement where:
\begin{enumerate}
\item $t$ grows with the length $n$ of the code: $t\in \Omega(n^{b})$ with $b> 0$;
\item and $f(\cdot)$ is monotonically increasing and independent of $n$.
\end{enumerate}
We say the code family has good $X$-confinement if the above holds only for Pauli-$Z$ errors.
\end{defin}
Our main analytic result is that codes with good confinement are single-shot
\begin{theorem} \label{thm:main}
Consider a family of $[[n,k,d]]$ quantum-LDPC codes with good confinement such that  $d \geq a n^b$ with $a>0$ and $b>0$.  This code family is single-shot for the adversarial noise model. If the code family only has good $X$-confinement then it is single-shot with respect to Pauli-$Z$ noise.
\end{theorem}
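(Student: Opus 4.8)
The plan is to implement the two-stage Shadow decoder and to show that the error surviving one round of (noisy) syndrome extraction plus decoding has reduced weight controlled by the measurement-error weight alone, independently of the size of the incoming data error. Fix one round: let $E$ denote the total Pauli error on the data before syndrome extraction (in a repeated-correction scenario, the residual from the previous round together with the fresh adversarial data error), and let $m$ be the adversarial measurement error, so the recorded syndrome is $\tilde s=\sigma(E)+m$. Stage~1 (syndrome repair) exploits the linear dependencies among the over-complete, LDPC set of stabiliser generators: these dependencies are recorded by a metacheck matrix $M$ with $\ker M=\mathrm{im}\,\sigma$, so the valid syndromes are exactly $\ker M$. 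Since $M\tilde s=Mm$, stage~1 returns a minimum-weight $\hat m$ with $M\hat m=M\tilde s$; as $m$ is feasible, $|\hat m|\le|m|$, and $s'\coloneqq\tilde s+\hat m$ satisfies $Ms'=0$, hence is a genuine syndrome with $|s'+\sigma(E)|=|\hat m+m|\le 2|m|$. Stage~2 applies a minimum-reduced-weight decoder to $s'$, returning a Pauli $c$ with $\sigma(c)=s'$. (This is a theoretical decoder; no efficiency is claimed.)

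Next I would analyse the residual $r\coloneqq E+c$ of the round. Linearity of $\sigma$ gives $\sigma(r)=\sigma(E)+s'=\hat m+m$, so $|\sigma(r)|\le 2|m|$: the residual syndrome is bounded by the measurement noise only. If we also know $|r|^{\mathrm{red}}\le t$, then $(t,f)$-confinement immediately yields $|r|^{\mathrm{red}}\le f\bigl(|\sigma(r)|\bigr)\le f(2|m|)$, a bound depending on $|m|$ alone. To establish the hypothesis $|r|^{\mathrm{red}}\le t$, I would bound the stage-2 output: any preimage of $s'$ of the form $E+g$, where $\sigma(g)=\hat m+m$, competes with $c$, so $|c|^{\mathrm{red}}\le|E|^{\mathrm{red}}+\lambda$ where $\lambda$ is the least reduced weight of a correction of the repaired-syndrome discrepancy $\hat m+m$; arguing that $\lambda\le f(2|m|)$ (again via confinement, once $\lambda\le t$ is known) gives $|r|^{\mathrm{red}}\le 2|E|^{\mathrm{red}}+f(2|m|)$. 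Choosing the adversarial data- and measurement-error budgets to be a sufficiently small power of $n$---small enough that $f$ evaluated on them stays $o(n^{b})$ while $t\in\Omega(n^{b})$---makes this quantity smaller than $t$, which closes the estimate and leaves the clean bound $|r|^{\mathrm{red}}\le f(2|m|)$.

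The rest is iteration and termination. Because $f(2|m_i|)$ does not depend on the incoming residual, an induction over rounds shows that the residual reduced weight stays uniformly bounded (by a constant depending only on the per-round measurement budget) through any number $N$ of noisy rounds, with every intermediate error remaining below $t$ so that the confinement step above is always legitimate. A final noiseless round of syndrome extraction followed by minimum-weight decoding then maps the total remaining error---bounded residual plus one fresh adversarial error---to an operator with trivial syndrome and reduced weight below $d\ge an^{b}$, which is therefore a stabiliser, so no logical fault occurs. This is the assertion that the family is single-shot for adversarial noise; the $X$-only version is obtained by running the identical argument with Pauli-$Z$ errors, the $X$-type syndrome, and the corresponding metachecks, using only good $X$-confinement.

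The step I expect to be the real obstacle is the self-referential weight estimate just used: $(t,f)$-confinement is silent for errors of reduced weight exceeding $t$, so before invoking it---on the residual $r$, and on the stage-2 decoding $g$ of the repaired syndrome---one must already certify that the error in question has reduced weight $\le t$. Making this non-circular forces one to run the induction over rounds in tandem with a conservative choice of the admissible adversarial weights and to track all constants carefully, and in particular to control how the (possibly super-linear) function $f$ composes with these growing-but-sub-$n^{b}$ bounds; establishing that a low-weight corrected syndrome always admits a correction of reduced weight below $t$, so that the stage-2 decoder cannot be derailed, is the delicate ingredient, and is where LDPC-ness and the good distance $d\ge an^{b}$ are genuinely needed.
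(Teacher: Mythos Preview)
Your approach has a structural gap that cannot be closed under the hypotheses of the theorem. The crux is your stage-1 syndrome repair: you project the noisy syndrome onto $\ker M=\mathrm{im}\,\sigma$, guaranteeing only that the repaired syndrome $s'$ is \emph{some} valid syndrome. To bound the stage-2 output you then need that the small-weight vector $\hat m+m$ admits a low-reduced-weight preimage $g$ under $\sigma$. But this is precisely \emph{soundness}---that small syndromes can be explained by small errors---whereas confinement is the converse statement that small errors produce small syndromes. Confinement does not imply soundness; the paper emphasises that quantum expander codes have confinement but not soundness, and on such codes your $\lambda$ can be $\Theta(n)$ even when $|\hat m+m|=O(1)$, so the argument collapses. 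Neither LDPC-ness nor the distance bound $d\ge an^b$ helps here: they constrain logical operators, not minimum-weight preimages of arbitrary valid syndromes. The ``delicate ingredient'' you flag in your last paragraph is thus not merely delicate but unavailable.

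The paper's Shadow decoder is engineered to sidestep exactly this. Instead of repairing to $\mathrm{im}\,\sigma$, its stage~1 repairs to the $t/2$-Shadow $\{\sigma(e):|e|\le t/2\}$, which is not a linear subspace and does not rely on metachecks at all. By definition the repaired syndrome then has a preimage of weight at most $t/2$, so the stage-2 minimum-weight decode automatically returns $e_r$ with $|e_r|\le t/2$. Under the standing assumption $|E|^{\mathrm{red}}\le t/2$, the residual $r=E\cdot e_r$ has reduced weight at most $t$, confinement applies directly, and $|r|^{\mathrm{red}}\le f(2|m|)$ follows without any circularity. The idea you are missing is to bake the small-preimage guarantee into the \emph{definition} of the syndrome-repair target, rather than trying to deduce it after the fact from confinement.
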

We conjecture that  the result of Thm.~\ref{thm:main} can be extended to deal with local stochastic noise and used to show that LDPC codes with good confinement have a single-shot threshold. In this direction, we are able to prove that \emph{linear} confinement is sufficient for codes to exhibit a single-shot threshold in the local stochastic noise setting:
\begin{theorem}\label{thm:threshold}
Consider a family of $[[n,k,d]]$ quantum-LDPC codes with qubit degree at most $\omega -1$ and good linear confinement such that $d \ge an^b$ with $a>0$ and $b>0$.  This code family has a sustainable single-shot threshold for any local stochastic noise model. If the code family only has good $X$-confinement then it has a sustainable single-shot threshold with respect to Pauli-$Z$ noise.
\end{theorem}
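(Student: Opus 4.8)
The plan is to build a concrete single-shot decoder and analyze it percolation-style, mirroring the adversarial argument behind Thm.~\ref{thm:main} but upgrading it to tolerate a constant density of errors. I would take the Shadow decoder (or its variant mentioned in App.~\ref{app:threshold}) and track, over $N$ noisy cycles plus one clean cycle, the residual error $r_t$ carried from round $t$ to round $t+1$. The key structural fact I would invoke is linear confinement: there is a constant $\kappa$ with $|e|^{\mathrm{red}} \le \kappa\,|\sigma(e)|$ for all $e$ with $|e|^{\mathrm{red}} \le t$, and $t \in \Omega(n^b)$. Combined with bounded qubit degree $\omega-1$ (so each qubit touches at most $\omega-1$ checks) and bounded check weight (LDPC), this gives the two-way Lipschitz-type control needed: small syndromes come from small errors (confinement) and small errors produce small syndromes (sparsity). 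The decoder in each round does stage-1 syndrome repair followed by stage-2 recovery; I would show the combined map is \emph{locally good} in the sense that a cluster of $m$ actual faults (qubit errors in this round $+$ measurement errors $+$ residual error inherited from the previous round) is corrected up to a residual whose reduced weight is at most $c\cdot m$ for an absolute constant $c$, provided $m$ stays below the confinement radius $t$.

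Next I would set up the percolation / cluster-expansion bookkeeping. Following the standard approach for local-stochastic noise on LDPC codes (as in Gottesman's overhead argument and the expander-code analyses of Leverrier–Tillich–Zémor and Fawzi–Grospellier–Leverrier that the paper cites), I would define the union of faults across all rounds as a subset of a bounded-degree "spacetime" incidence graph, decompose it into connected clusters, and argue that: (i) the decoder acts independently on well-separated clusters because errors and syndromes have bounded support; (ii) each cluster of size $m < \alpha t$ is fully healed with residual $\le c\,m$ which is re-absorbed into the next round's fault set; and (iii) clusters of size $\ge \alpha n^b$ are needed to cause a logical failure, since $d \ge a n^b$. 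The local-stochastic assumption gives $\Pr[\text{a fixed set } F \text{ is in the fault support}] \le p^{|F|}$, and a union bound over the at-most-$D^m$ connected subgraphs of size $m$ containing a fixed vertex (where $D$ is the spacetime degree, independent of $n$ and $N$) yields: below a threshold $p_{\mathrm{th}} = p_{\mathrm{th}}(\kappa,\omega,D,c)$ that does \emph{not} depend on $N$, the probability that any cluster exceeds size $\alpha n^b$ decays like $n^b \cdot (p/p_{\mathrm{th}})^{\Omega(n^b)} \to 0$. Because $p_{\mathrm{th}}$ is $N$-independent, $p_{\mathrm{sus}} = \lim_{N\to\infty} p_{\mathrm{th}}(N) \ge p_{\mathrm{th}} > 0$, which is exactly a sustainable single-shot threshold. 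The $X$-only version follows verbatim by restricting all fault sets and syndromes to the $X$-component and using good $X$-confinement.

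The step I expect to be the main obstacle is controlling the \emph{accumulation and re-correlation} of residual errors across rounds: the residual $r_t$ fed into round $t+1$ is not local-stochastic (it is an adversarial byproduct of the previous decode), so the clean probabilistic union bound does not directly apply to it. The fix is to absorb $r_t$ into the cluster decomposition rather than treat it as fresh noise — i.e., show that $\mathrm{supp}(r_t)$ is contained in a bounded neighborhood of the previous round's fault clusters (guaranteed by the local-good property with constant $c$), so that a single cluster in the full spacetime graph dominates a whole "history" of a localized fault, and its total size is still bounded by a geometric-type sum $\sum_j c^j \cdot(\text{fresh faults})$ that stays below $\alpha t$ whenever each round's fresh local faults are small. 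This requires the confinement function to be linear (so that the per-round blow-up factor $c$ is a constant rather than growing), which is precisely why Thm.~\ref{thm:threshold} is stated for linear confinement while the superlinear case is left as a conjecture. A secondary technical point is verifying that stage-1 syndrome repair — exploiting the metacheck / redundancy structure of the code's parity checks — itself satisfies a linear-confinement-type bound on the repaired syndrome; I would either assume the needed metacheck soundness as part of the hypotheses or derive it from linear confinement of the associated classical "syndrome code," and then the two stages compose to give the overall constant $c$.
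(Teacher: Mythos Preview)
Your proposal follows a genuinely different route from the paper's, and as written it has a real gap at exactly the point you flag as ``the main obstacle.''

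The paper does \emph{not} build a spacetime incidence graph over $N$ rounds and decompose the total fault history into connected clusters. Instead it works one round at a time, but replaces the Hamming weight by a new weight function, the $\beta$-\emph{closeness} $\|E\|_\beta = \max\{|K\cap E| : K \text{ connected},\ |K|=\beta\}$. It then defines a \emph{Stochastic Shadow decoder} whose stage-1 constraint is not the global $t$-Shadow $\{\sigma(e):|e|\le t\}$ but the local $(\alpha,\beta)$-Shadow $\{\sigma(E):\|E\|_\beta\le\alpha\beta\}$. The nontrivial work is Lemma~5 (closeness preserves confinement), proved via a ``canonical patch'' construction, after which Lemma~6 gives $\|R\|_t^{\mathrm{red}}\le f(2\|S_e\|_{\omega t})$ per round. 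Linear $f$ makes this a bound of the form $\|S_e\|_{\omega t}\le t/(8\kappa)$, and a standard percolation union bound (Lemma~8) shows that for local stochastic noise the probability of violating such a closeness bound is $\le |\mathcal G|(p/p_{\mathrm{th}})^{\Theta(t)}$ with $p_{\mathrm{th}}$ independent of $n$ and $N$.

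Your approach instead keeps the Hamming weight and the adversarial Shadow decoder and tries to recover locality by a spacetime cluster decomposition. The gap is your claim (i): ``the decoder acts independently on well-separated clusters.'' The Shadow decoder's stage-1 asks for $\bar s+\bar s_r$ in the $t$-Shadow, i.e.\ $\bar s+\bar s_r=\sigma(e')$ with $|e'|\le t$ \emph{globally}. Under local stochastic noise at rate $p$ there are $\Theta(pn)$ disjoint small error clusters; their total Hamming weight is $\Theta(n)\gg t\in\Theta(n^b)$, so neither the hypothesis $|e|^{\mathrm{red}}\le t/2$ of Lemma~1 nor the stage-1 constraint factor across clusters. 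Your ``locally good with constant $c$'' premise is therefore not available for this decoder, and without it the absorption-of-residual argument (the $\sum_j c^j$ bookkeeping) has nothing to stand on. The paper's closeness weight is precisely the device that restores a per-patch constraint insensitive to the number of clusters; you would need either that, or to swap in a genuinely local decoder (as Fawzi--Grospellier--Leverrier do with small-set-flip for expander codes) and redo the single-round guarantee for it.

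A smaller point: your secondary worry about metacheck soundness is unnecessary. Neither the Shadow decoder nor its stochastic variant uses metachecks at all; stage-1 repairs the syndrome by projecting onto a Shadow set, not by decoding a metacode. No extra hypothesis on the syndrome code is needed for Theorem~\ref{thm:threshold}.
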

We further prove that 3D product codes have $X$-confinement:
\begin{theorem}\label{thm:3d}
All 3D product codes have $(t, f)$ $X$-confinement where $t$ is equal to the $Z$-distance of the code and $f(x) = x^3/2$ or better.
\end{theorem}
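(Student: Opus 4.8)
The plan is to work directly with the tensor--product structure. A 3D product code is obtained from three classical parity--check matrices $\delta_A,\delta_B,\delta_C$, viewed as length--one chain complexes; their tensor product is a length--three complex $\mathcal D_3\to\mathcal D_2\to\mathcal D_1\to\mathcal D_0$, with the physical qubits sitting on $\mathcal D_1$, the $X$--checks given by $\partial_2^T$, and the $X$--metachecks by $\partial_3^T$. Both the qubit space $\mathcal D_1$ and the $X$--syndrome space $\mathcal D_2$ split as a direct sum of three sectors, labelled by which one (respectively which two) of the three tensor factors carries the nontrivial degree, and in these coordinates the $X$--check map and the metacheck map act blockwise by tensoring a single factor coboundary $\delta^T$ along one direction. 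Because $|e|^{\mathrm{red}}$ depends only on $\sigma(e)$, \cref{thm:3d} is equivalent to the statement: every $X$--syndrome $s$ in the image of the $X$--check map that admits a preimage of weight at most $t$ (the $Z$--distance) admits one of weight at most $|s|^3/2$. I would establish this by explicitly building a light preimage, one sector at a time.

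The heart of the argument is a reduction to low--dimensional sub--problems together with a two--dimensional confinement lemma. Fix a sector, say the one supported on $A_1\otimes B_0\otimes C_0$, and slice it along the $A$--direction: this decouples the sector into independent problems indexed by the basis of $A_1$, each asking to reconstruct a matrix over $B_0\otimes C_0$ from its two partial coboundaries (read off from the matching slices of two syndrome sectors), modulo the syndrome--neutral freedom of adding elements of $\ker\delta_B^T\otimes\ker\delta_C^T$. The two--dimensional lemma then states that a matrix whose partial coboundaries have weights $p$ and $q$ can be reduced to weight $O(pq)$ --- or, more crudely, $O((p+q)^2)$: rows and columns carrying a nonzero partial coboundary are few (at most $q$ and $p$ respectively), while on the complementary block each surviving row and column is a genuine codeword of the relevant factor code, so the factor distances $d_B$ and $d_C$ bound the residual weight. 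This is where the hypothesis $|e|^{\mathrm{red}}\le t=d_Z$ is used: the weight budget caps how far the error can spread along any slice, so the minimal correction per slice really is the local one and does not get rerouted through a nontrivial codeword (hence, globally, through a logical operator). Summing the per--slice bounds over the basis of $A_1$ and applying convexity of $x\mapsto x^2$ --- the per--slice syndrome weights sum to the weights of the syndrome sectors --- controls the contribution of the whole sector.

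Finally one assembles the three sectors, and this is the step I expect to be the main obstacle: the sectors are coupled through the $Z$--stabiliser group, i.e.\ the image of $\partial_3$, so the minimum--weight preimage need not be sector--diagonal, and a naive sector--by--sector construction risks both overcounting and violating the metacheck relations that a genuine syndrome $s$ satisfies. The way through is to use the metacheck constraint linking the three syndrome sectors to fix the sector corrections consistently --- solve for one sector, substitute, and observe that the residual problems in the remaining two sectors have the same form --- so that each tensor direction contributes one factor of the syndrome weight and the bound compounds to $|e|^{\mathrm{red}}\le|s|^3/2$; the inequality is not expected to be tight, which is why the statement says ``or better''. For the 3D surface and toric codes the whole argument collapses to the usual isoperimetric picture (a loop--like syndrome is filled by a membrane of quadratic area), which already gives a quadratic confinement function, consistent with the general cubic bound.
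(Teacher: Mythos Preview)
Your two-dimensional confinement lemma is essentially the paper's 2D result (reshape $v\in C_A^0\otimes C_B^0$ as a matrix $V$; then $|\delta_A V|+|V\delta_B^T|\ge\mathrm{col}(V)+\mathrm{row}(V)$ gives $|s|^2/4\ge|v|$), and it is the right engine. The gap is in the assembly to three dimensions. You propose to slice a qubit sector along its distinguished direction and feed each slice to the 2D lemma, but the partial-coboundary weights $p,q$ that lemma needs are not intrinsic to the syndrome $s$: every $X$-syndrome sector receives contributions from \emph{two} qubit sectors (for instance $C_A^1\otimes C_B^1\otimes C_C^0$ sees both $\mathbb{1}\otimes\delta_B\otimes\mathbb{1}$ acting on the $(\alpha,j,k)$-qubits and $\delta_A\otimes\mathbb{1}\otimes\mathbb{1}$ acting on the $(i,\beta,k)$-qubits), so $p$ and $q$ depend on how the error is split among sectors --- which is precisely what you are trying to determine. ``Solve for one sector, substitute'' does not explain how to make that split light, and invoking the metacheck constraint does not help here: the metacheck relation $\delta_2 s=0$ is automatically satisfied by any valid syndrome and puts no constraint on the qubit-side decomposition.

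The paper instead breaks the three-fold symmetry: it rewrites the 3D complex as the product of the 2D complex $\tilde{\mathcal C}$ (the $\delta_A$--$\delta_B$ product) with the length-one complex $\delta_C$, so the qubit space becomes a pair $(V,W)$ with $V\in\tilde{\mathcal C}_0\otimes C_C^1$ (one sector) and $W\in\tilde{\mathcal C}_1\otimes C_C^0$ (the other two bundled), and the syndrome reads $S_1=\tilde\delta_0 V+W\delta_C^T$, $S_2=\tilde\delta_1 W$. The step your plan is missing is an explicit, iterated addition of stabilisers to $(V,W)$ --- each stabiliser built by copying a chosen column of $V$ onto the columns indexed by the support of a chosen row of $W\delta_C^T$ --- which drives both the row and column support of $W\delta_C^T$ inside that of $S_1$ while provably never enlarging $\mathrm{row}(V)$ or $\mathrm{col}(W)$. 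Only after this cleaning does a pure support count yield $|W^\star|\le|S_1|\,|S_2|$, and only then can the 2D lemma be applied columnwise to $V^\star$ (each column now has weight $\le t$) to get $|V^\star|\le|S_1|^3/4$. This stabiliser-cleaning procedure --- actively controlling how $S_1$ is apportioned between its $V$-contribution and its $W$-contribution --- is the concrete mechanism your outline does not supply.
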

Thm.~\ref{thm:main} and Thm.~\ref{thm:3d} together motivate our numerical experiments reported in Sec.~\ref{sec:numerics}. 

We now proceed to prove Thm.~\ref{thm:main}. To this end, we use the Shadow decoder that we introduce in Def.~\ref{def:ball_decoder}. The Shadow decoder differs from previous single-shot two stage decoders (e.g. the MW single-shot decoder introduced in Def.~6 of \cite{campbell2019theory}) in that it does not rely on metachecks on syndromes. If syndromes are protected by a classical code, as it is the case for $X$-syndromes of 3D product codes introduced in Sec.~\ref{sec:3d_product}, then a single-shot decoding strategy could work as follows: (1) correct the measured syndrome whenever it does not satisfy all the constraints defined by the metacode; (2) find a recovery operator on qubits that has syndrome equal to the one found at point (1). The Shadow decoder, instead, corrects the syndrome both anytime it fails to satisfy all the constraints of the metacode and when it is generated by high weight errors. We do not describe how to implement it or make statements concerning the complexity of decoding. Our proof makes similar assumptions as the Kovalev-Pryadko quantum-LDPC threshold theorem~\cite{Kovalevbadcode} where they assumed a minimum weight decoder without addressing implementation issues.  Indeed, decoding for arbitrary LDPC codes is an NP-complete problem that we do not expect to be efficiently solvable in full generality.

The building blocks of the Shadow decoder are the $t\tshadow$s of the code. A $t\tshadow$ is a set in the syndrome space which contains all the images of Pauli errors $e$ on the physical qubits that have weight at most $t$. In other words, if we identify Pauli errors $e$ on $n$ qubits with $2n$-bit strings and we consider the metric space $\mathcal{M}=\mathbb F_2^{2n}$ endowed with the Hamming distance (i.e. the distance $d(\bar e_1, \bar e_2)$ between the vectors $\bar e_1$ and $\bar e_2$ corresponding to the Pauli errors $e_1$ and $e_2$ respectively is defined as $d(\bar e_1, \bar e_2) = |e_1 + e_2|$) then the $t\tshadow$ of the code is the image, via the syndrome function $\sigma(\cdot)$, of the ball of radius $t$ centered at $0$ in $\mathcal{M}$. Note that, because balls on $\mathcal{M}$ are not vector spaces, the $\mathrm{Shadows}$ are not vector spaces either.

We are now ready to introduce the Shadow decoder.
\begin{defin}[Shadow decoder]
\label{def:ball_decoder}
The Shadow decoder has variable parameter $t>0$. Given an observed syndrome $\bar s = \sigma(e) + \bar s_e$ where $\bar s_e \in \mathbb F_2^{m}$ is the syndrome error, the Shadow decoder of parameter $t$ performs the following 2 steps:
\begin{enumerate}
\item Syndrome repair: find a binary vector $\bar s_r$ of minimum weight $\lvert \bar s_r\rvert$ such that $\bar s + \bar s_r$ belongs to the $t\tshadow$ of the code, where
\begin{equation*}
t\tshadow = \{\sigma(e) : \lvert e \rvert  \le t \}.
\end{equation*}
\item Qubit decode: find $e_r$ of minimum weight $\lvert e_r \rvert$ such that $\sigma(e_r)= \bar s + \bar s_r$.
\end{enumerate}
We call $r =  e_r \cdot  e$ the residual error.
\end{defin}
A key result in proving Thm.~\ref{thm:main} is the following promise on the performance of the Shadow decoder: when a code has confinement, the weight of the residual error after one decoding cycle is bounded by a function of the weight of the syndrome error.
\begin{lemma}\label{lem:ball_decoder}
Consider a stabiliser code that has $(t, f)$-confinement. Provided that the original error pattern $e$ has $\lvert e\rvert^{\mathrm{red}} \le t/2$, on input of the observed syndrome $\bar s = \sigma(e) + \bar s_e$, the residual error $r$ left by the Shadow decoder of parameter $t/2$ satisfies:
\begin{align}
  \lvert r \rvert^{\mathrm{red}} \leq f(2\lvert \bar s_e\rvert)  .  
\end{align}
\end{lemma}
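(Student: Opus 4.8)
The plan is to track the weight of the residual error $r = e_r \cdot e$ through the two stages of the Shadow decoder, using $(t,f)$-confinement to convert syndrome weights back into qubit weights at the end. First I would set up notation: write the observed syndrome as $\bar s = \sigma(e) + \bar s_e$, and let $\bar s_r$ be the minimum-weight repair vector found in step~1, and $e_r$ the minimum-weight qubit correction found in step~2, so that $\sigma(e_r) = \bar s + \bar s_r$. The key linear-algebra identity is that
\begin{align*}
\sigma(r) = \sigma(e_r) + \sigma(e) = (\bar s + \bar s_r) + (\bar s + \bar s_e) = \bar s_r + \bar s_e,
\end{align*}
using linearity of $\sigma(\cdot)$ and $\sigma(e_r \cdot e) = \sigma(e_r) + \sigma(e)$. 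So the residual syndrome has weight at most $\lvert \bar s_r \rvert + \lvert \bar s_e \rvert$.

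Next I would bound $\lvert \bar s_r \rvert$. Since $\lvert e \rvert^{\mathrm{red}} \le t/2$, there is a Pauli operator $e'$ with $\sigma(e') = \sigma(e)$ and $\lvert e' \rvert \le t/2 \le t/2$, hence $\sigma(e) = \sigma(e')$ lies in the $(t/2)\tshadow$. Then $\bar s + \sigma(e') \cdot (\text{nothing})$... more carefully: the vector $\bar s_e$ itself is a candidate repair vector, because $\bar s + \bar s_e = \sigma(e)$ which is in the $(t/2)\tshadow$; hence by minimality $\lvert \bar s_r \rvert \le \lvert \bar s_e \rvert$. Combining with the previous paragraph, $\lvert \sigma(r) \rvert \le \lvert \bar s_r \rvert + \lvert \bar s_e \rvert \le 2\lvert \bar s_e \rvert$.

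It then remains to pass from $\lvert \sigma(r) \rvert$ to $\lvert r \rvert^{\mathrm{red}}$ via confinement, which requires first establishing $\lvert r \rvert^{\mathrm{red}} \le t$ so that the confinement hypothesis applies. Here I would argue that $e_r$ is a minimum-weight preimage of a syndrome in the $(t/2)\tshadow$ (namely $\bar s + \bar s_r$), so $\lvert e_r \rvert \le t/2$; together with $\lvert e \rvert^{\mathrm{red}} \le t/2$ this gives $\lvert r \rvert^{\mathrm{red}} = \lvert e_r \cdot e \rvert^{\mathrm{red}} \le \lvert e_r \rvert + \lvert e \rvert^{\mathrm{red}} \le t$ by subadditivity of reduced weight. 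With $\lvert r \rvert^{\mathrm{red}} \le t$ in hand, $(t,f)$-confinement yields $f(\lvert \sigma(r) \rvert) \ge \lvert r \rvert^{\mathrm{red}}$, and since $f$ is increasing and $\lvert \sigma(r) \rvert \le 2\lvert \bar s_e \rvert$, we get $\lvert r \rvert^{\mathrm{red}} \le f(\lvert \sigma(r)\rvert) \le f(2\lvert \bar s_e \rvert)$, as claimed.

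The main obstacle I anticipate is the bookkeeping around reduced versus ordinary weight and making sure the $t/2$ factors line up: confinement is stated for $\lvert e \rvert^{\mathrm{red}} \le t$, the decoder parameter is $t/2$, and the hypothesis on $e$ is $t/2$, so I need the subadditivity inequality $\lvert p \cdot q \rvert^{\mathrm{red}} \le \lvert p \rvert + \lvert q \rvert^{\mathrm{red}}$ (or $\le \lvert p\rvert^{\mathrm{red}} + \lvert q\rvert^{\mathrm{red}}$) and to verify it is legitimate — this should follow directly from the definition of $\lvert \cdot \rvert^{\mathrm{red}}$ as a minimum over stabiliser-equivalent representatives together with linearity of $\sigma$, but it is the one place where care is needed. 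A secondary subtlety is confirming that $\bar s + \bar s_r$ genuinely lies in the $(t/2)\tshadow$ (it does, by definition of step~1) so that step~2's minimum is over a nonempty set bounded by $t/2$.
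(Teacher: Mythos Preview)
Your proposal is correct and follows essentially the same argument as the paper: you establish $\sigma(r)=\bar s_r+\bar s_e$, bound $\lvert \bar s_r\rvert\le\lvert \bar s_e\rvert$ by noting that $\bar s_e$ is itself a feasible repair vector, check $\lvert r\rvert^{\mathrm{red}}\le t$ via $\lvert e_r\rvert\le t/2$ plus the hypothesis on $e$, and then apply $(t,f)$-confinement together with monotonicity of $f$. The only cosmetic difference is ordering --- the paper verifies $\lvert r\rvert^{\mathrm{red}}\le t$ first and computes $\sigma(r)$ afterward, whereas you do the reverse --- and the paper simply invokes ``the triangular inequality for the weight function'' at the point where you (rightly) flag the subadditivity of reduced weight as the step deserving care.
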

\begin{proof}
Assume $\lvert e\rvert^{\mathrm{red}} \le t/2$. By construction, $e_r$ has minimum weight among all errors with syndrome $\sigma(e) + \bar s_e + \bar s_r \in t/2\tshadow$ of the code. In particular $\lvert e_r\rvert \le t/2$. 
By the triangular inequality for the weight function,
\begin{align}
  \lvert e_r \cdot e \rvert^{\mathrm{red}}  \le \lvert e_r \rvert^{\mathrm{red}}  + \lvert e \rvert^{\mathrm{red}}\le t.
\end{align}
Therefore, we can apply the confinement property on the residual error $r = e_r \cdot e$:
\begin{align}\label{eq:proof_eb}
f\big(\lvert \sigma(e_r \cdot e)\rvert  \big) \ge \lvert e_r \cdot e \rvert^{\mathrm{red}}.
 \end{align}
By linearity of the syndrome function $\sigma (\cdot)$:
\begin{align}\label{eq:proof_sy}
\sigma(e_r \cdot e) = \sigma(e_r) + \sigma (e) = \bar s_e + \bar s_r.
\end{align}
Note that the syndrome error $\bar s_e$ is a possible solution of the syndrome repair step of the Shadow decoder, because by assumption $\lvert e\rvert^{\mathrm{red}} \le t/2$. Thus, $\lvert \bar s_r\rvert \le \lvert \bar s_e\rvert $ and
\begin{align}\label{eq:proof_weight}
  \lvert \bar s_e + \bar s_r\rvert \le \lvert \bar s_e \rvert + \lvert \bar s_r \rvert \le 2 \lvert \bar s_e\rvert.
  \end{align}
Combining these and the monotonicity of $f$ leads to the required bound on the residual error:
\begin{align}
  \lvert e_r \cdot e \rvert^{\mathrm{red}} \le f\big(2\lvert \bar s_e\rvert\big).  
\end{align}
\end{proof}
Thm.~\ref{thm:main} follows directly from Lem.~\ref{lem:ball_decoder}. In particular, Lem.~\ref{lem:ball_decoder} entails that a code with $(t, f)$-confinement is robust against $N$ cycles of qubit noise, noisy syndrome extraction and single-shot decoding, as explained below.

At each cycle $\tau$, we assume that a new error $e^{\tau}$ is introduced in the system and it is added to the residual error $r^{\tau-1}$. We assume that for the new physical error $e^{\tau}$ and the syndrome measurement error $\bar s^{\tau}_e$ the following hold:
\begin{equation}\label{eq:tau_bound}
\lvert e^{\tau}\rvert^{\mathrm{red}} \le t/4 \quad \text{and}\quad f(2 \lvert \bar s_e^{\tau}\rvert )\le t/4.
\end{equation}
We perform syndrome extraction on the state $\hat e^{\tau}= r^{\tau -1}+ e^{\tau}$. The noisy syndrome $\bar s^{\tau} = \sigma(\hat e^{\tau})+ \bar s^{\tau}_e$ is used as input for the Shadow decoder of parameter $t/2$. The recovery operator $e_r^{\tau}$ found by the Shadow decoder is then applied to the state and finally a new cycle starts where $r^{\tau} = \hat e^{\tau} + e^{\tau}_r$. Let $r^0 = 0$, so that the initial state of the system is given by $\hat e^1 = e^1$, $\bar s^1 = \sigma(\hat e^1) + \bar s_e^1$. Note that if
\begin{equation}
    \lvert e^{\tau}\rvert^{\mathrm{red}}, \lvert r^{\tau-1}\rvert^{\mathrm{red}} \le t/4
\end{equation}
then $\lvert \hat e^{\tau} \rvert^{\mathrm{red}}= \lvert e^\tau \cdot r^{\tau-1}\rvert^{\mathrm{red}}\le t/2$ and the hypotheses of Lem.~\ref{lem:ball_decoder} hold. Combining this with the bound on the syndrome error \eqref{eq:tau_bound}, we obtain
\begin{align*}
\lvert r^{\tau}\rvert^{\mathrm{red}} \le f(2 \lvert \bar s^{\tau}_e\rvert) \le \frac{t}{4}.
\end{align*}
In conclusion, provided that the conditions on the physical and the measurement error \eqref{eq:tau_bound} are satisfied for each iteration up to $\tau-1$, the residual error after the $\tau^{\mathrm{th}}$ cycle is kept under control too.

Thm.~\ref{thm:threshold} is proven in App.~\ref{app:threshold}. There, we introduce a novel notion of weight to describe local stochastic errors: the \emph{closeness} weight. We then present the Stochastic Shadow decoder, a variant of the (Adversarial) Shadow decoder of Def.~\ref{def:ball_decoder}. Importantly, on confined codes, it keeps the the closeness weight of the residual error under control over repeated correction cycles. Finally, the proof of Thm.~\ref{thm:threshold} follows by combining these results with some classic percolation theory bounds.

The proof of Thm.~\ref{thm:3d} is very technical and is deferred to App.~\ref{app:3dconfinement}. It is an adaption of the one of soundness for 4D codes given in \cite{campbell2019theory}, and it is reported in this manuscript for completeness. We remind the reader that, for our numerical studies on 3D product codes, we do not use the Shadow decoder, but rather heuristics that perform well in practice. In particular, we use a two-stage decoder that exploits a metacheck structure on syndromes and attempts to repair the syndrome if and only if it does not pass all metachecks.

Our main motivation to introduce the concept of confinement and the Shadow decoder was to find a feature of codes able to encompass all known examples of single-shot codes.
Campbell \cite{campbell2019theory} introduced the notion of soundness and showed that this property is a sufficient condition for codes to show single-shot properties in the adversarial setting. Nonetheless, Fawzi et al. \cite{leverrier18} showed that expander codes have a single-shot threshold for local stochastic noise, even though they do not have the soundness property. As already said though, expander codes do have confinement. In Corollary 9 of \cite{leverrier15} the authors prove that their confinement function is linear and call this property \textit{robustness}. Confinement, in other words, fills the gap leaved by the concept of soundness. Furthermore, as Lem.~\ref{lemma:soundness} states, it is a requirement strictly weaker than soundness: any LDPC family of codes with good soundness has good confinement.
\begin{defin}[Soundness \cite{campbell2019theory}]
Let $t$ be an integer and $f: \mathbb{Z}\rightarrow \mathbb{R}$ be a function with $f(0)=0$. Given a stabiliser code with syndrome map $\sigma(\cdot)$ we say it is $(t, f)$-sound if for all error sets $e$ with $|\sigma(e)|\le t$ it follows that $f(|\sigma(e)|)\ge|e|^{\mathrm{red}}$.
\end{defin}
\begin{defin}[Good soundness \cite{campbell2019theory}]
Consider an infinite family of codes with syndrome maps $\sigma_n(\cdot)$. We say that the family has good soundness if each code in it is $(t, f)$-sound where:
\begin{enumerate}
    \item $t$ grows with $n$ such that $t \in \Omega(n^b)$ with $b > 0$;
    \item and $f(\cdot)$ is monotonically increasing and independent of $n$.
\end{enumerate}
\end{defin}
It follows easily from Campbell's definition of soundness and our definition of confinement that the former entails the latter.
\begin{lemma} \label{lemma:soundness}
Consider a LDPC code that is $(t, f)$-sound with $f$ increasing. If its qubit degree is at most $\omega$, then it has $(\frac{t}{\omega}, f)$-confinement.
\end{lemma}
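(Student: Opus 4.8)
The plan is to show that $(t,f)$-soundness plus bounded qubit degree $\omega$ forces the confinement inequality to hold on the restricted domain $\lvert e\rvert^{\mathrm{red}} \le t/\omega$. The key observation linking the two notions is a purely combinatorial one about the syndrome map: if a Pauli error $e$ acts nontrivially on $\lvert e\rvert^{\mathrm{red}}$ qubits (after reducing to minimum weight within its syndrome coset), then each such qubit participates in at most $\omega$ stabiliser generators, so the total number of generators that can possibly anticommute with $e$ is at most $\omega \lvert e\rvert^{\mathrm{red}}$. Hence $\lvert \sigma(e)\rvert \le \omega\,\lvert e\rvert^{\mathrm{red}}$. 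This is the one-line engine of the proof.

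First I would fix a code that is $(t,f)$-sound with $f$ increasing and qubit degree at most $\omega$, and take an arbitrary error $e$ with $\lvert e\rvert^{\mathrm{red}} \le t/\omega$. Without loss of generality I may replace $e$ by the minimum-weight representative of its syndrome coset, so that $\lvert e\rvert = \lvert e\rvert^{\mathrm{red}}$; this does not change $\sigma(e)$. Next I invoke the degree bound: since $\sigma(\cdot)$ is linear and each qubit touched by $e$ flips at most $\omega$ syndrome bits, $\lvert \sigma(e)\rvert \le \omega \lvert e\rvert \le \omega\cdot(t/\omega) = t$. Now the hypothesis of soundness is met — $\lvert \sigma(e)\rvert \le t$ — so soundness yields $f(\lvert\sigma(e)\rvert) \ge \lvert e\rvert^{\mathrm{red}}$. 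That is exactly the $(\tfrac{t}{\omega}, f)$-confinement inequality, and since this held for all such $e$, the code has $(\tfrac{t}{\omega}, f)$-confinement. The fact that $f$ is increasing is only needed to keep $f$ in the permissible class for the confinement definition (and would matter if one wanted to compose bounds), not for the core implication.

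I do not expect a genuine obstacle here; the statement is essentially an unpacking of definitions once the degree-counting bound $\lvert\sigma(e)\rvert\le\omega\lvert e\rvert$ is in hand. The only point requiring a little care is the reduction step — one must make sure ``qubit degree at most $\omega$'' is being used to bound the number of \emph{stabiliser generators} (equivalently rows of the parity-check matrix) incident to a given qubit, which is the standard LDPC convention, and that we have indeed passed to a minimum-weight coset representative before counting, since it is $\lvert e\rvert^{\mathrm{red}}$ rather than $\lvert e\rvert$ that appears on the right-hand side of the confinement inequality. With that bookkeeping done, the chain $\lvert e\rvert^{\mathrm{red}} \le t/\omega \;\Rightarrow\; \lvert\sigma(e)\rvert \le t \;\Rightarrow\; f(\lvert\sigma(e)\rvert)\ge \lvert e\rvert^{\mathrm{red}}$ closes the argument.
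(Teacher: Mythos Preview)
Your proof is correct and follows essentially the same route as the paper: bound $\lvert\sigma(e)\rvert \le \omega\,\lvert e\rvert^{\mathrm{red}} \le t$ via the qubit-degree constraint, then invoke soundness. The paper's version is terser but identical in substance; your explicit reduction to a minimum-weight representative before applying the degree bound is a helpful clarification the paper leaves implicit.
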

\begin{proof}
If $e$ is an error set with $|e|^{\mathrm{red}}\le \frac{t}{\omega}$, for its syndrome the following holds:
\begin{align}
 |\sigma(e)|\le \frac{t}{\omega} \cdot \omega = t.   
\end{align}
By soundness of the code:
\begin{align}
    f(|\sigma(e)|) \ge |e|^{\mathrm{red}}.
\end{align}
\end{proof}
In conclusion, confinement does answer to the need of finding general and inclusive properties related to single-shot error correction. The reminder of this article is devoted to the study of the 3D product codes. We recall their construction in Sec.~\ref{sec:3d_product} and we numerically assess their single-shot performance under local stochastic noise in Sec.~\ref{sec:numerics}. 
\section{Code construction}
\label{sec:3d_product}
The identification of pauli operators $p \in \mathcal P_n$ with binary vectors $\bar p = (\bar p_X, \bar p_Z) \in \mathbb F_2^{2n}$ is a group homomorphism (i.e.~ multiplication of Pauli operators corresponds to the sum of their vector representation in $\mathbb F_2^{2n}$) and because $\sigma(\cdot)$ is linear, syndrome measurement can be simulated via a matrix-vector product:
\begin{align*}
    \sigma : \mathbb F_2^{2n} &\longrightarrow \mathbb F_2^{m}\\
    \begin{pmatrix}
    \bar p_X\\
    \bar p_Z
    \end{pmatrix}
    &\mapsto
     H  \begin{pmatrix}
    \bar p_X\\
    \bar p_Z
    \end{pmatrix},
\end{align*} 
where the vector $(\bar p_X, \bar p_Z)^T \in \mathbb F_2^{2n}$ represents a Pauli error on the physical qubits.
Following the nomenclature from classical coding theory, we refer to the syndrome matrix $H$ as parity check matrix and we say that a code is LDPC when its parity check is low density.

A stabiliser code is a CSS code if its stabiliser group can be generated by the disjoint union of a set of $X$-operators and a set of $Z$-operators. In this case, its parity check is a block matrix:
\begin{align}
\label{eq:css}
H = \left(\begin{array}{c|c}
    0 & H_X\\
   \hline
    H_Z & 0 
\end{array}\right),
\end{align}
where $H_X$ has size $m_x \times n$ and $H_Z$ has size $m_z \times n$ if the generating set of $X$-stabilisers/$Z$-stabilisers has cardinality $m_x$/$m_z$. Eq.~\eqref{eq:css} entails that syndrome extraction can be performed separately for the $X$-component and for the $Z$-component. In fact, if a Pauli operator has vector representation $(\bar p_X, \bar p_Z)^T = (\bar p_X, 0)^T + (0, \bar p_Z)^T \in \mathbb F_2^{2n}$, then for its syndrome holds:
\begin{align*}
    H \begin{pmatrix}
    \bar p_X\\
    \bar p_Z
    \end{pmatrix}&= H\begin{pmatrix}
    \bar p_X\\
    0
    \end{pmatrix} + H \begin{pmatrix}
    0\\
    \bar p_Z
    \end{pmatrix}\\
    &=\begin{pmatrix}
    0\\
    H_Z \bar p_X
    \end{pmatrix}
    + 
    \begin{pmatrix}
    H_X \bar p_Z\\
    0
    \end{pmatrix}\\
    &=\begin{pmatrix}
    0\\
    \bar s_Z
    \end{pmatrix} + \begin{pmatrix}
    \bar s_X\\
    0
    \end{pmatrix}
\end{align*}
where $\bar s_Z \in \mathbb{F}_2^{m_z}$ and $\bar s_X \in \mathbb{F}_2^{m_x}$. In other words, it is possible to truncate these vectors without loosing information and deal with $X$ and $Z$ operators separately. For this reason, we say that a CSS code is provided with two syndrome maps which correspond to the two blocks/matrices $H_X$ and $H_Z$ respectively. Accordingly, a CSS code will have a $Z$-distance and a $X$-distance and can be compactly be refereed to as a $[[n, k, d_x, d_z]]$ code. 

To our purpose, it is handy to describe CSS codes in terms of chain complexes. A length $\ell$ chain complex is a collection of vector spaces $C^0, \dots, C^\ell$ and linear maps $\delta_i : C^i \rightarrow C^{i + 1}$ with the only constraint
\begin{equation}
\label{eq:chain_complex}
\delta_{i+1}\delta_i = 0,
\end{equation}
for $i = 0, \dots, \ell-1$. If $H_X$ and $H_Z$ are $m_x \times n$ and $m_z \times n$ binary matrices as in Eq.~\eqref{eq:css}, then the chain complex:
\begin{equation}
    \mathbb{F}_2^{m_z} \xrightarrow[]{H_Z^T} \mathbb{F}_2^n \xrightarrow[]{H_X} \mathbb{F}_2^{m_x}
\end{equation}
is well defined. In fact, the commutative condition on the stabilisers of the code entails that the $X$-generators and the $Z$-generators of the code have even overlap, such that they are orthogonal when seen as binary vectors. In other words, $H_X \cdot H_Z^T = 0$ is equivalent to the defining property of chain complexes (see Eq.~\eqref{eq:chain_complex}). In general, we can associate a CSS code to any chain complex $\{C^i, \delta_i\}$ of length at least $3$ by equating $H_Z^T = \delta_i$ and $H_X= \delta_{i + 1}$ for some index $i$. In the chain complex language, we say that the code has length $\dim C_{i + 1}$ and the dimension of the $(i + 1)$-th homology group $\mathcal H_{i + 1} = \ker \delta _{i + 1} / \im \delta_i$. Equivalently, the dimension of the code is the dimension of the $(i + 1)$-th cohomology group $\mathcal{H}^*_{i + 1} = \ker \delta_i ^T / \im \delta_{i+1}^T$. The $X$ and $Z$ distances are given respectively by the minimum weight of any non-zero vector in $\mathcal H^*_{i + 1}$ and $\mathcal H_{i+1}$.

Here, we study some decoding properties of 3D product codes. By this nomenclature we refer to the CSS codes obtained by the homological product of three length-1 chain complexes as described in \cite{pryadko2018hp}. Given three classical linear codes with parity check matrices $\delta_A, \delta_B$ and $\delta_C$ we can build a 3D quantum code as follows. If $\delta_\ell$ is a binary matrix of size $m_\ell \times n_\ell$, it defines a linear map $\delta_\ell : C_\ell^0 \rightarrow C_\ell^1$, where $C_\ell^0$, $C_\ell^1$ are vector spaces over $\mathbb F_2$ of dimension $n_\ell$ and $m_\ell$ respectively; in other words, each linear map defines a length-1 chain complex. The 3D product of the \emph{seed matrices} $\delta_A, \delta_B, \delta_C$, is the length-3 chain complex $\mathcal C$ given by:~\\
\begin{tikzpicture}[transform canvas={scale=0.7}]
\begin{tikzcd}
& C_A^1 \ox C_B^1 \ox C_C^1 \ar[from=2-1]  \ar[from=2-2, swap] \ar[from=2-3, swap]&\\
C^1_A \ox C^1_B \ox C^0_C & C^1_A \ox C^0_B \ox C^1_C & C^0_A \ox C^1_B \ox C^1_C\\
C_A^1 \ox C_B^0 \ox C_C^0  \ar[to=2-1] \ar[to=2-2]& C_A^0 \ox C_B^1 \ox C_C^0 \ar[to=2-1] \ar[to=2-3]& C_A^0 \ox C_B^0 \ox C^1_C \ar[to=2-2] \ar[to=2-3]\\
& C_A^0 \ox C_B^0 \ox C_C^0 \ar[to=3-1]  \ar[to=3-2] \ar[to=3-3, swap]&
\end{tikzcd}\hspace{0.5 cm}
\begin{tikzcd}
\mathcal{C}_3\\
\mathcal{C}_2 \ar[to=1-1, swap, "\delta_{2}"]\\
\mathcal{C}_1 \ar[to=2-1, swap, "\delta_{1}"]\\
\mathcal{C}_{0}\ar[to=3-1, swap, "\delta_{0}"]
\end{tikzcd}
\end{tikzpicture} \noindent
where:
\begin{align*}
\delta_{0}&=\begin{pmatrix}
\delta_A \ox \1 \ox \1 \\ 
\\
\1 \ox \delta_B \ox \1 \\
\\
\1 \ox \1 \ox \delta_C  
\end{pmatrix},  \\ &\\
\delta_1 &= \begin{pmatrix}
\1\ox \delta_B \ox \1 & \delta_A \ox \1 \ox \1 &0 \\
&&\\
\1 \ox \1 \ox \delta_C & 0 &  \delta_A \ox \1 \ox \1 \\
&&\\
0 & \1 \ox \1 \ox \delta_C & \1 \ox \delta_B \ox \1   
\end{pmatrix}, \\ &\\
\delta_2 &= \begin{pmatrix}
\1\ox \1 \ox \delta_C & \1 \ox \delta_B \ox \1&  \delta_A \ox \1 \ox \1
\end{pmatrix}.
\end{align*}
The symbol $\ox$ represents the tensor product. Given two vector spaces $A$ and $B$ over a field $\mathbb F$, their tensor product is the vector space $A \ox B$ generated by the formal sums $\sum a \ox b$  where $a \in A $ and $b \in B$ and the operator $\ox$ is bilinear, i.e. for any $a_1, a_2, b_1, b_2$ in $A$ and $B$ respectively, it holds that
\begin{align*}
    (a_1 + a_2) \ox b_1 &= a_1 \ox b_1 + a_2 \ox b_1,\\
    a_1 \ox (b_1 + b_2) &= a_1 \ox b_1 + a \ox b_2.
\end{align*}
The horizontal stacking of spaces, instead, represents their direct sum.  

It is easy to verify that condition \eqref{eq:chain_complex} is verified for $i = 1,\dots,3$ and therefore the chain complex $\mathcal C$ is well defined. As explained above, we define a CSS code on $\mathcal C$ by equating
\begin{equation*}
    H_Z = \delta_0^T,\quad
    H_X = \delta_1.
\end{equation*} 
We refer to the matrix $M  =\delta_2$ as the metacheck matrix for the $X$-stabilisers. Condition \eqref{eq:chain_complex} entails $M H_X = 0$ and as a consequence we can think of the matrix $M$ as a parity check matrix on the $X$ syndromes: any valid $X$-syndrome satisfies the constraints defined by $M$.

Let $[n_{\ell}, k_{\ell}, \delta_{\ell}]/[n_\ell^T, k_{\ell}^T, d_{\ell}^T]$ be the parameters of the classical linear code with parity check matrix $\delta_\ell/\delta_{\ell}^T$. As showed in \cite{pryadko2018hp}, $\mathcal C$ is thus associated with an $[[n, k, d_x, d_z]]$ code such that, if $k \neq 0$,
\begin{align*}
n &=n_a^T n_b n_c + n_a n_b^T n_c + n_a n_b n_c^T,\\
k&=k_a^T k_b k_c + k_a k_b^T k_c + k_a k_b k_c^T,\\
d_x &= \min\{d_b d_c, d_a d_c, d_a d_b\},\\
d_z &= \min\{d_a^T, d_b^T, d_c^T\}.
\end{align*}
By convention, the the distance of a code with dimension $0$ is $\infty$. 
We define the single-shot distance $d_{ss}$ \cite{campbell2019theory} of the chain complex $\mathcal C$ as the minimum weight of a vector in $\mathcal C_2$ that satisfies all the constraints given by $\delta_2$ (i.e.\ it belongs to the kernel of $\delta_2$) but is not a valid $X$-syndrome (i.e.\ it does not belong to the image of $\delta_1$).  In other words, $d_{ss}$ is the minimum weight of a vector in the second  homology group $\mathcal H _2 = \ker \delta_2 / \im \delta_1 $ of the chain complex $\mathcal C$. Following~\cite{pryadko2018hp} it is easy to verify that $d_{ss}= \min\{d_a, d_b, d_c\}$ if $\mathcal{H}_2\neq 0$ and $\infty$ otherwise.

It is important to note that, if the matrices $\delta_{\ell}$ are LDPC, then their 3D product code is quantum-LDPC. In fact, if $\delta_{\ell}$ has column (row) of weight bounded by $c_\ell$ ($r_\ell$), then $\delta_i$ has column and row weight bounded by $c_i$ and $r_i$ respectively where:
\begin{enumerate}[label=\roman*.]
\item $c_0 \le c_a + c_b + c_c$ and $r_0 \le \max\{r_a, r_b, r_c\};$
\item $\begin{aligned}c_1 \le \max\{c_a+c_b, c_a+c_c, c_b+ c_c\}\end{aligned}$\\ and \\
$r_1 \le \max\{r_a+ r_b, r_a+ r_c, r_b+r_c\}$;
\item $c_2 \le \max\{c_a, c_b, c_c\}$ and $r_2 \le r_a + r_b + r_c $.
\end{enumerate}

\subsection{On geometric locality}
\label{sec:3d_space}
In addition to preserving the LDPC properties of the seed matrices, the 3D product yields local codes when qubits are placed on edges of a 3D cubic lattice. We defer the reader to App.~\ref{app:3dspace} for a thorough discussion on the embedding of 3D product codes on a cubic lattice and we here present a loose summary.

Qubits of a 3D product code associated to the chain complex $\mathcal{C}$ are in bijection with basis elements of the space $\mathcal C_1$; since $\mathcal C_1$ is the direct sum of the three vector spaces $C_A^1\ox C_B^0\ox C_C^0$, $C_A^0\ox C_B^1\ox C_C^0$ and $C_A^0\ox C_B^0\ox C_C^1$ we introduce three different type of qubits: \textit{transverse}, \textit{vertical} and \textit{horizontal}. Qubit types naturally correspond to the three different orientation of edges on a cubic lattice, namely edges parallel to each of the three crystal planes. Referring to this particular display of qubits, the stabilisers of the code defined by $\mathcal C$ have support as follows:
\begin{enumerate}
    \item $X$-stabilisers have support on a 2D cross of qubits of two types out of three, contained in one of the three crystal planes; the crossing is defined by a square face of a cube (see Fig.~\ref{fig:cross2d});
    \item $Z$-stabilisers have support on a 3D cross of qubits, with crossing defined by a vertex of a cube (see Fig.~\ref{fig:cross3d}).
\end{enumerate}
The cubic lattice considered can present some irregularities: in general it is a cubic lattice with some missing edges. Nonetheless, square faces and vertices are uniquely defined and they correspond to a stabiliser every time they contain at least one edge. 
More specifically, a square face identifies two perpendicular lines of edges/qubits on a plane which are the edges parallel to the boundary of the square face itself. The corresponding $X$-stabiliser has support contained on those lines of edges/qubits. Similarly, a vertex identifies three perpendicular lines of qubits and the corresponding $Z$-stabiliser has support there contained. When combined with some locality properties of the seed matrices,
this characteristic `cross shape' of the stabilisers support entails that 3D product codes are local on a cubic lattice (Prop.~\ref{prop:locality} in App.~\ref{app:3dspace}). Here, by locality, we mean that for some positive integer $\rho$, hold: 
\begin{enumerate}
   \item any $X$-stabiliser generator has weight at most $2\rho$ with support contained in a 2D box of size $\rho \times \rho$,
    \item any $Z$-stabiliser generator has weight at most $3\rho$ with support contained in a 3D box of size $\rho \times \rho \times \rho$,
    \end{enumerate}

Interestingly, it follows easily as a corollary of our locality proof that the 3D toric and surface codes are in fact 3D product codes. We now detail an explicit construction of the 3D toric and surface codes as 3D product codes and we remind the reader to App.~\ref{app:3dspace} for further details.

The 3D toric code is the 3D product code obtained by choosing $\delta_A = \delta_B = \delta_C = \delta$ as seed matrices, where $\delta$ is the parity check matrix of the repetition code. For instance, the 3D toric code with linear lattice size $L = 3$ is given by
\begin{equation*}
    \delta = \begin{pmatrix}
    1 & 1 & 0\\
    0  & 1 & 1\\
    1 & 0 & 1 
    \end{pmatrix}.
\end{equation*}
In general, the 3D toric code of lattice size $L$, has parameters 
\begin{equation*}
    [[3L^3,\, 3,\, L^2,\, L]],
\end{equation*}
and single-shot distance $d_{ss}= L$.

The 3D surface code is obtained from this construction by choosing, for linear lattice size $L = 3$,
\begin{equation*}
\delta_A = \delta_B = \begin{pmatrix}
    1 & 1 & 0\\
    0  & 1 & 1
    \end{pmatrix}
\end{equation*}
and 
\begin{equation*}
    \delta_C = \begin{pmatrix}
    1 & 1 & 0\\
    0  & 1 & 1
    \end{pmatrix}^T
\end{equation*}
Therefore, for lattice size $L$, it has parameters 
\begin{equation*}
    [[2L(L-1)^2+L^3,\, 1,\, L^2,\, L]],
\end{equation*}
and single-shot distance $d_{ss} = \infty$.\\
Further details can be found in App.~\ref{app:3dspace}.

\subsection{Non-topological codes}

The 3D product code construction can be used to obtain non-topological codes with non-local interactions. Table \ref{tab:qldpc3d} shows the parameters for a class of non-topological codes based on classical LDPC codes. The specific advantage of non-topological codes is that it is possible to construct code families where the number logical qubits increases with the code length. This is in contrast to 3D toric and surface codes, where the number of logical qubits is fixed for all values of the code distance.
\begin{table}[ht!]
    \centering
    \begin{ruledtabular}
	\begin{tabular}{llll}
        $\delta_A$ & $\delta_B$ & $\delta_C$ & $\mathcal{HGP}(\delta_A,\delta_B,\delta_C)$\\
        $[16,4,6]$ & $[6,1,6]$ & $[6,0,\infty]$ & $[[1336,4,6]]$\\
        $[20,5,8]$ & $[8,1,8]$ & $[8,0,\infty]$ & $[[3100,5,8]]$\\
        $[24,6,10]$ & $[10,1,10]$ & $[10,0,\infty]$ & $[[5964,6,10]]$
    \end{tabular}
    \end{ruledtabular}
	\caption{A family of 3D product codes. The base codes $\{\delta_A,\delta_B,\delta_C\}$ are set as follows: $\delta_A$ is a parity check matrix of an $[n,k,d]$ $(3,4)$-LDPC code; $\delta_B$ is a $[L,1,L]$ repetition code; $\delta_C$ is the transpose of a $[L,1,L]$ repetition code. The code distance is set to $\infty$ for codes of dimension $0$.}
	\label{tab:qldpc3d}
\end{table}

\section{Numerics \label{sec:numerics}}

To assess the single-shot performance of the 3D product codes, we simulate the decoding of phase-flip ($Z$) errors. As 3D product codes are CSS codes, the relevant stabilizers are the $X$-stabilizers. Let $\bar e_Z \in \mathbb F_2^n$ describe the support of a phase-flip error, i.e.\ $(\bar e_Z)_i = 1$ if qubit $i$ has a phase-flip error. The syndrome, $\bar s_X$, of this error is then 
\begin{equation}
    \bar s_X=H_X\bar e_Z\rm,
\end{equation}
where $H_X \in \mathbb F_2^{m\times n}$ is the parity check matrix of the $X$-stabilizers of the code (see Eq.~\eqref{eq:css}).

Owing to the chain-complex structure of 3D product codes (outlined in Sec.~\ref{sec:3d_product}) the syndromes $\bar s_X$ are themselves the 
codewords of a classical linear code with parity check matrix $M$ such that $M\bar s_X=0 $ for all $\bar s_X \in {\im}(H_X)$. 
We refer to such a code on the syndromes as metacode. The metacheck matrix can be used to detect and correct syndrome noise.

In a two-stage single-shot decoder, stage 1 decoding corrects the syndrome noise using $M$ before stage 2 decoding corrects the data qubits. In general, decoding is an NP-complete problem that cannot be solved exactly in polynomial-time. However, good heuristic techniques exist that allow approximate solutions to be efficiently computed. In this work, we use two such decoding methods: minimum weight perfect matching (MWPM) and belief propagation plus ordered statistics decoding (BP+OSD). Both MWPM and BP+OSD are algorithms that run over graphical models that encapsulate the structure of the code. We now briefly describe each decoder.

\subsection{Minimum weight perfect matching (MWPM)}

The minimum weight perfect matching (MWPM) decoder is useful for codes that produce pairs of syndromes at the ends of error chains. The method works by mapping the decoding problem to a graphical model in which nodes represent the code syndromes and weighted edges represent error chains of different lengths. For a given pair of syndrome excitations, the MWPM algorithm deduces the shortest error chain that could have caused it \cite{Edmonds1965}. 

MWPM finds use for a variety of topological codes, most famously for the 2D surface and toric codes ~\cite{dennis02,Fowler12,breuckmann2016,Brown2020,kubica2019efficient}. For 3D codes, MWPM is a suitable candidate for syndrome repair step referred to as stage 1 decoding. Specifically, the syndrome of a phase-flip error can be viewed as a collection of closed loops of edges in a simple cubic lattice\footnote{Namely, the dual lattice of the one described in App.~\ref{app:3dspace}.} (with boundary conditions depending on the code).
Measurement errors cause loops of syndrome to be broken, and the job of stage 1 decoding is to repair these broken syndromes. To obtain the corresponding matching problem, we create a complete graph whose vertices correspond to the break-points of the broken syndrome loops, with edge weights that are equal to the path lengths between the break-points. We use the Blossom V~\cite{Kolmogorov2009} implementation of Edmonds's algorithm to solve this matching problem. The edges in the matching correspond to the syndrome recovery operators.

\subsection{Belief propagation + ordered statistics decoding (BP+OSD)}

Belief propagation (BP) is an algorithm for performing inference on sparse graphs that finds widespread use in high-performance classical coding. Classical LDPC codes, for example, achieve performance close to the Shannon-limit when decoded with BP \cite{mackay1997near}. In the context of quantum coding, BP is useful for codes that do not produce pairs of syndromes and therefore cannot be decoded with MWPM.

 The BP algorithm maps the decoding problem to a bipartite \textit{factor graph} where the two node species represent data qubits and syndromes respectively. Graph edges are drawn between the data and syndrome nodes according to the code's parity check matrix. The factor graph is designed to provide a factorisation of the probability distribution that describes the relationship between syndromes and errors. The BP algorithm proceeds by iteratively passing `beliefs' between data and syndrome nodes, at each step updating the probability that a data node is errored. The algorithm terminates once the probability distribution implies a error pattern that satisfies the inputted syndrome.    

For quantum codes, the standard BP algorithm alone does not achieve good decoding performance due to the presence of degenerate errors. These cause `split-beliefs' and prevent the algorithm from terminating. Fortunately, the problem of split-beliefs can be resolved by incorporating a post-processing technique known as ordered statistics decoding (OSD). The OSD step uses the probability distribution outputted by BP to select a low-weight recovery operator that satisfies the syndrome equation.

The BP+OSD algorithm was first applied to quantum expander codes by Panteleev and Kalachev \cite{panteleev2019degenerate}. Following this, Roffe et al. \cite{roffe2020decoding} demonstrated that the BP+OSD decoder applies more widely across a broad range of quantum-LDPC codes, including the 2D surface and toric codes. For this work, we use the software implementation of BP+OSD from \cite{roffe2020decoding}, which can be downloaded from \cite{bp_osd_github}.

\subsection{The two-stage single-shot decoding algorithm \label{subsec:2stage}}

Our simulations of the two-stage single shot decoder employ two strategies. (1) MWPM \& BP+OSD: stage 1 decoding is performed using MWPM and stage 2 decoding uses BP+OSD. (2) BP+OSD$\times$2: both stages are BP+OSD.

Algorithm \ref{alg:ssec} describes our methodology for the simulations of the two-stage single-shot decoder. 

\begin{algorithm}[H]
\caption{\textsc{Single-shot error correction}}
\label{alg:ssec}
\begin{algorithmic}[1]
        \Require Decoder 1, decoder 2, number of rounds $N$, error rate $p$, 
        $X$ parity check matrix $H_X$, metacheck matrix $M$, modified metacheck matrix $M'$
        \Ensure Success or failure
        \State $\bar e_Z \gets 0$ \Comment{Qubit error}
        \State $\bar s_X \gets 0$ \Comment{Syndrome}
        \State $\bar m \gets 0$ \Comment{Metasyndrome}
            \For{$j\gets 1$ \textbf{to} $N$}
                \State Generate phase-flip error $\bar e'_Z$ according to error rate $p$
                \State $\bar e_Z \gets \bar e_Z + \bar e'_Z$
                \State $\bar s_X \gets H_X \bar e_Z$
                \State Generate syndrome error $\bar s'_X$ according to error rate $p$
                \State $\bar s_X \gets \bar s_X + \bar s'_X$
                \State $\bar m \gets M \bar s_X$
                \State Use decoder 1 to obtain syndrome recovery $\bar r_M$ such that $M\bar r_M=\bar m$
                \State $\bar s_X \gets \bar s_X + \bar r_M$
                
                \If{$\bar s_X \notin {\rm{Im}}(H_X)$} \Comment{Failure-mode subroutine} \label{alg:line:subroutine}
                    \State $\bar s_X \gets \bar s_X + \bar r_M$
                    \State Use decoder 1 to obtain valid $\bar r_M$ s.t.\ $M'\bar r_M=\bar m$
                    \State $\bar s_X \gets \bar s_X + \bar r_M$
                \EndIf
                
                \State Use decoder 2 to obtain qubit recovery $\bar r_Z$ s.t.\ $H_X\bar r_Z=\bar s_X$ 
                \State $\bar e_Z \gets \bar e_Z + \bar r_Z$
            \EndFor
        \State Generate phase-flip error $\bar e'_Z$ according to error rate $p$
        \State $\bar e_Z \gets \bar e_Z + \bar e'_Z$
        \State $\bar s_X \gets H_X \bar e_Z$
        \State Use decoder 2 to obtain qubit recovery $\bar r_Z$ s.t.\ $H_X\bar r_Z=\bar s_X$ 
        \State $\bar e_Z \gets \bar e_Z + \bar r_Z$ 
        \If{$\bar e_Z$ is a stabiliser}
            \State \Return Success
        \EndIf
        \State \Return Failure
\end{algorithmic}
\end{algorithm}

The 3D toric code has a failure mode that is not present in the 3D surface code. In such codes, syndromes $\bar s_X$ exist that satisfy all of the metachecks, $M \bar s_X=0$, but are \textit{invalid} syndromes, meaning that $\bar s_X$ does not belong to the image of $H_X$. In other words, $\bar s_X$ is invalid if there is no error vector $\bar e_Z \in \mathcal{C}_1$ with syndrome $\bar s_X$ but it is a codeword of the metacode.

Referring to the chain complex structure of $\mathcal{C}$:
\begin{align*}
    \mathcal{C}_0 \xrightarrow[H_Z^T]{\delta_0} \mathcal{C}_1 \xrightarrow[H_X]{\delta_1} \mathcal{C}_2 \xrightarrow[M]{\delta_2} \mathcal{C}_3
\end{align*}
we see that these non-valid syndromes are non-trivial elements of the $2$nd homology group:
\begin{align*}
    \mathcal{H}_2 = \ker \delta_2 / \im \delta_1 = \ker M / \im H_X.
\end{align*}
If $k_m$ is the dimension of $\mathcal{H}_2$, the set of invalid syndromes is a vector subspace of $\mathcal{C}_2$ of dimension $k_m$ whose vectors can be written as $\bar u + H_X \bar e_Z$ where $\bar u$ is a representative of the equivalence class $[\bar u] \in \mathcal H_2$ and $\bar e_Z \in \mathcal C_1$. Thus, if $F_M$ is a matrix whose columns are $k_m$ vectors in $\mathcal C_2$ that generate $\mathcal H_2$ (meaning that they belong to $k_m$ different equivalence classes in $\mathcal H_2$), we can write any invalid syndrome $\bar s_X$ as:
\begin{align}
    \label{eq:invalid_syndrome}
\bar s_X = F_M \bar v + H_X \bar e_Z,
\end{align}
where $\bar v \in \mathbb F_2^{k_m}$ is non-zero if and only if $\bar s_X$ is invalid and $\bar e_Z$ is any error vector in $\mathcal C_1$. 

By duality on $\mathcal C$, the $2$nd cohomolgy group:
\begin{align*}
    \mathcal{H}_2^{*}= \ker \delta_1^T / \im \delta_2^T = \ker H_X^T / \im M^T,
\end{align*}
has order $k_m$ too. If $L_M$ is a matrix whose $k_m$ rows generates $\mathcal H_2^*$, then the product $\Pi = L_M F_M$ has full rank $k_m$ because both $L_M$ and $F_M$ have full rank. Moreover, since the rows of $L_M$ in particular belongs to $\ker H_X^T$, it holds $L_M H_X = 0$. Combining these two observations with Eq.~\eqref{eq:invalid_syndrome} yields:
\begin{align*}
    L_M \bar s_X &= L_M F_M \bar v+ L_MH_X \bar e_Z\\
    &=\Pi \bar v
\end{align*}
where $\Pi \bar v = 0$ if and only if $\bar v =0$ because $\Pi$ is full rank. In conclusion, we have found that:
\begin{align*}
    L_M \bar s_X \neq 0
\end{align*}
if and only if $\bar s_X$ is an invalid syndrome. As a consequence, we can assess whether a syndrome is invalid or not by  calculating this product. The meaning of matrices $L_M$ and $F_M$ can be understood by looking at elements in $\mathcal H_2$ and $\mathcal H_2^*$ as logical operators of a CSS code defined on $\mathcal C$ with qubits in $\mathcal C_2$ (see Sec.~\ref{sec:3d_product}). In this settings, the full rank condition $\rank(L_M F_M) =k_m$ translates in the anticommuting relation between logical $X$ and logical $Z$ operators of the code.

In the 3D toric code, these invalid syndromes are loops of edges around one of the handles of the torus, and are equivalent to the logical operators of the metacode. It is therefore possible to check whether stage 1 decoding results in such a failure by checking whether the repaired syndrome anti-commutes with a matrix $L_M$ whose rows generate the relevant group of the logical operators of the metacode. When a metacode failure is encountered, a failure-mode subroutine (line \ref{alg:line:subroutine} of Algorithm \ref{alg:ssec}) is called that forces the repaired syndrome into the correct form. This sub-routine involves using BP+OSD to decode a modified version of the metacheck matrix $M'$ defined as follows
\begin{equation}
M'=\begin{pmatrix}M\\L_M\end{pmatrix}\rm.
\end{equation}

The additional constraints in the modified metacheck matrix ensure that the repaired syndrome is never an invalid syndrome. We call this procedure as a subroutine (rather than all the time) as the $L_M$ component causes $M'$ to have higher maximum row and column weights than $M$, resulting in a reduction in BP decoding performance.  Indeed, the rows of $L_M$ must have weight lower bounded by the transpose distances of the seed codes\footnote{More precisely, rows of $L_M$ are vectors in $\mathcal C_2$ that correspond to elements of the second cohomology group $\mathcal{H}_2^{*}$; hence their weight is lower bounded by
$d_2^* = \min\{d_a^Td_b^T, d_a^T d_c^T, d_b^Td_c^T\}$, see \cite{pryadko2018hp}.}. Since the transpose distances of the seed codes also determine the $Z$-distance of the quantum code (Sec.~\ref{sec:3d_product}), we want these quantities to be growing with the length $n$ of the code and therefore the matrix $L_M$ is not, in general, LDPC.

We find that whilst the failure-mode subroutine does not change the error threshold of the decoder, it does considerably reduce the logical error rate. This is illustrated by Fig.~\ref{fig:meta-check-fail-recovery}, which shows the single-shot logical error rate with and without the failure-mode subroutine. For large syndrome error rates, Fig.~\ref{fig:meta-check-fail-recovery} shows the failure-mode subroutine improves decoding performance by over an order of magnitude.

\begin{figure}
    \includegraphics[width=\columnwidth, keepaspectratio]{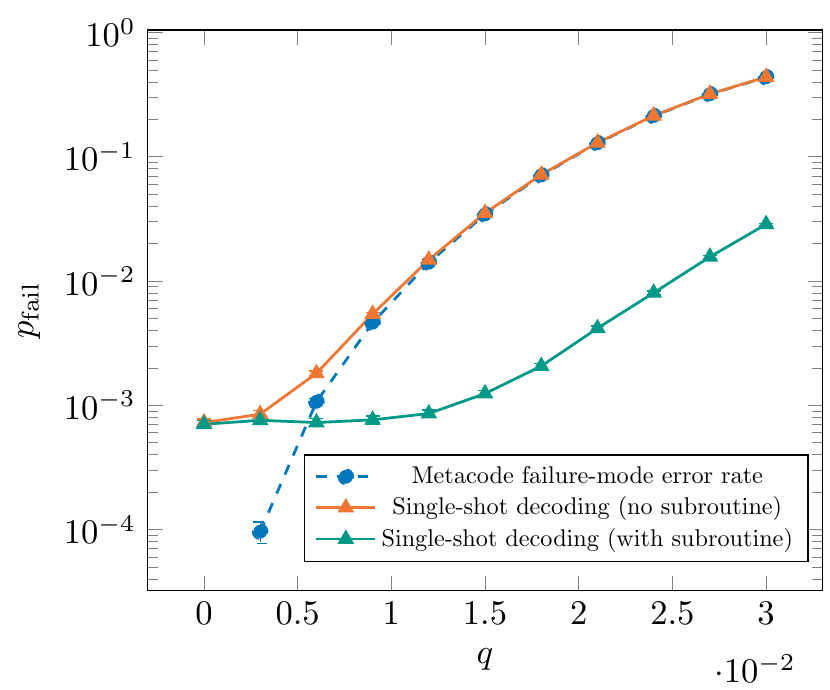}
    \caption{Single-shot decoding
    of the 3D toric code with $L=5$, with and without the metacode failure-mode subroutine. The failure rate $p_{\rm fail}$ is plotted against increasing values of the syndrome error rate $q$, whilst the phase-flip error rate is set to $p=0.1$.  Without the subroutine, the single-shot decoder rapidly converges to the failure-mode error rate (dotted line). For large values of $q$ the subroutine improves the logical failure rate by over an order of magnitude. In this simulation, BP+OSD was used for both stage 1 and 2 decoding.}
    \label{fig:meta-check-fail-recovery}
\end{figure}

\subsection{3D toric and surface codes}

\begin{figure}
    \includegraphics[width=\columnwidth, keepaspectratio]{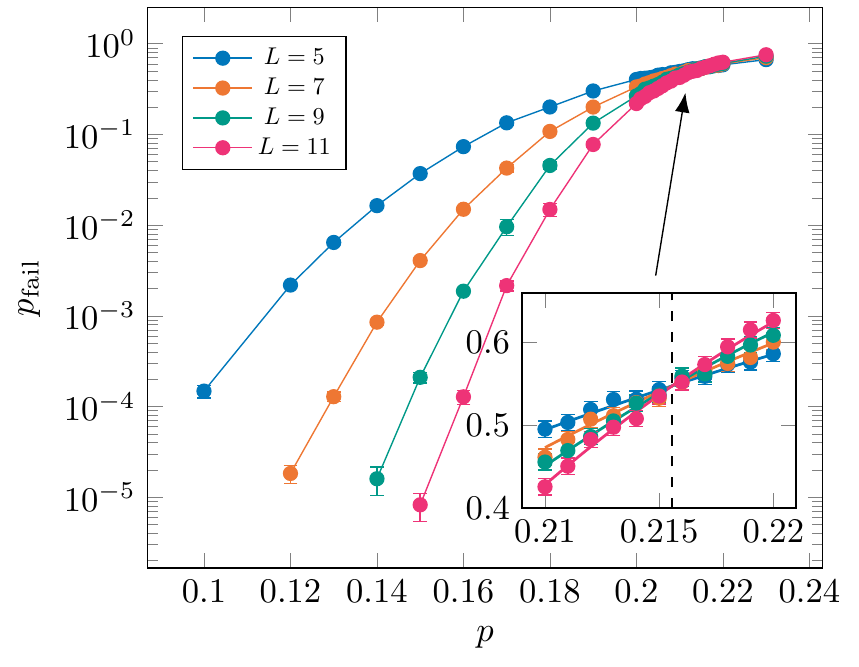}
    \caption{Code capacity threshold of the 3D toric code. We plot the logical error rate $p_{\mathrm{fail}}$ as a function of the phase-flip error rate $p$ for codes with linear lattice size $L$. The inset shows a zoom of the threshold region, where the lines show the threshold fit described in App.~\ref{app:fitting}. All data points have at least 25 failure events. The error bars show the 95\% confidence intervals $p_{\rm fail} = \hat p_{\rm fail} \pm 1.96\sqrt{p_{\rm fail}(1 - p_{\rm fail})/\eta}$, where $\eta \geq 10^4$ is the number of Monte Carlo trials.}
    \label{fig:code-cap}
\end{figure}

We estimate the sustainable threshold of the 3D toric and surface codes using our two decoding strategies. For code-capacity noise (i.e.~perfect syndrome measurements), the syndrome-repair step is not required, so both decoding strategies are the same. For each code family, we observe a code capacity threshold of $p_{\mathrm{th}} \approx 21.6\%$, as illustrated in Fig.~\ref{fig:code-cap}. To obtain our threshold estimates, we use the standard critical exponent method~\cite{Harrington2004} (see App.~\ref{app:fitting} for details). In the single-shot setting, we find similar performance for both our decoding strategies, as summarized in \cref{tab:3D-PC-results}. Our results compare favourably with the performance of other decoders, which we list in \cref{tab:pth_compare}. We obtain the highest reported code-capacity threshold and the highest reported single-shot threshold. 

We remark that the sustainable threshold that we observe for the 3D toric code is very close to the threshold of MWPM for string-like errors in the 3D toric code~\cite{wang03}. This implies that the performance of decoder 1 (the syndrome-repair step) is limiting the performance of the entire decoding procedure, as was suggested in~\cite{duivenvoorden2018renormalization}. Although the sustainable thresholds we observe for 3D surface codes are slightly higher than for 3D toric codes, the codes we consider are relatively small, which means that boundary effects may be having an impact on our sustainable threshold estimates.

We also investigated the suppression of the logical error rate below threshold in the 3D toric code, using MWPM \& BP+OSD. We use the following ansatz for the logical error rate for values of $p < p_{\mathrm{th}}$,
\begin{equation}
    p_\mathrm{fail}(L) \propto (p/p_{\mathrm{th}})^{\alpha L^\beta},   
\end{equation}
where $\alpha$ and $\beta$ are parameters to be determined. The code distance of the 3D toric code for $Z$ errors is $L^2$, so if the decoder is correcting errors up to this size, we would expect $\beta \approx 2$. Using the fitting procedure described in App.~\ref{app:fitting}, we estimate $\beta = 1.91(3)$ for $N=0$ (code capacity) and $\beta = 1.15(3)$ for $N=8$ (eight rounds of single-shot error correction). Therefore, for the (relatively small) codes that we consider, we find evidence that BP+OSD is correcting errors of weight up to the code distance. Viewed as an error correction problem, the distance of the syndrome-repair step of decoding (i.e~the single-shot distance $d_{ss}$) is $L$, which is consistent of our observed value of $\beta$ in the single-shot case. This provides further evidence that the bottle-neck of our single-shot decoding procedure is the syndrome-repair step. 

\subsection{Non-topological codes}

So far, we have focused on the decoding of 3D topological codes. We now show that the BP+OSD decoder can be used for single-shot decoding of more general 3D product codes. Table \ref{tab:qldpc3d} shows a family 3D product codes constructed  by taking the 3D product of a $(3,4)$-LDPC code\footnote{A code whose parity check matrix has rows/columns of weight 3/4.} with two instances of the classical repetition code. The result is a code family where the number of logical qubits is not fixed. This code family was decoded using a two-stage single-shot decoder, BP+OSD $\times 2$,
yielding the threshold plot in Fig.~\ref{fig:qldpc3d_threshold}. The simulation results suggest a sustainable threshold in the region of $2.7\%$.

\begin{figure}
    \centering
    \includegraphics[width=\columnwidth]{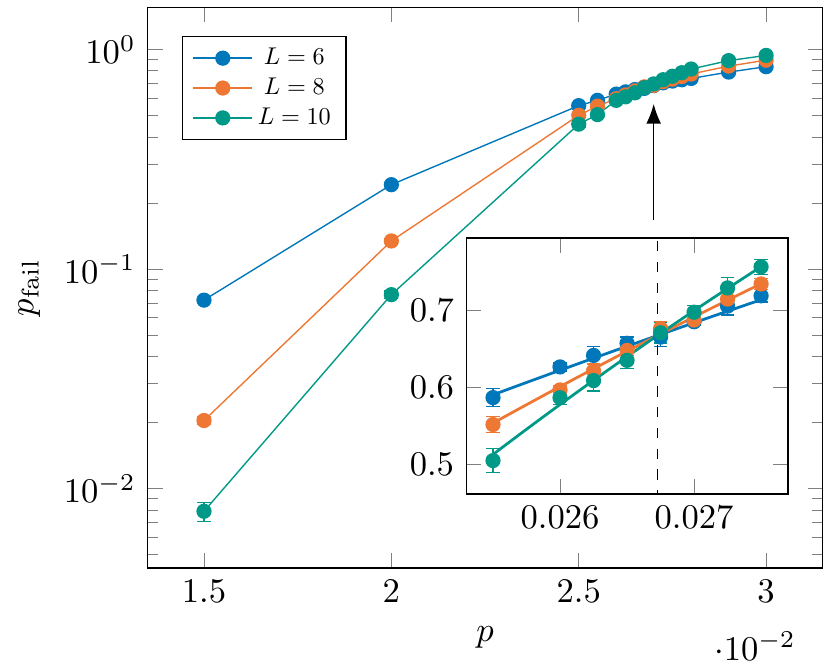}
    \caption{Threshold plot for a family of non-topological 3D product codes after $16$ rounds of single-shot error correction using the the BP+OSD decoder. The simulation results suggest a threshold at $2.7\%$. The error bars show the 95\% confidence intervals $p_{\rm fail} = \hat p_{\rm fail} \pm 1.96\sqrt{p_{\rm fail}(1 - p_{\rm fail})/\eta}$, where $\eta$ is the number of Monte Carlo trials.}
    \label{fig:qldpc3d_threshold}
\end{figure}

\section{Conclusion \label{sec:conc}}

In this article, we investigated single-shot decoding of 3D product codes. We gave a formal definition of confinement in quantum codes and proved that all 3D product codes have confinement for $Z$ errors. We also proved that confinement is sufficient for single-shot error correction against adversarial noise. This is a strengthening of the result of Campbell~\cite{campbell2019theory}, who showed that a property called soundness is sufficient for single-shot error correction, in that soundness implies confinement but the converse is not true.
Remarkably, there are important classes of codes, such as quantum expander codes, which have confinement but not soundness. Further to that, we prove that codes with linear confinement, and so expander codes, do have a single-shot threshold for local stochastic noise.
The obvious open problem arising from our work is how to extend our findings for linear confinement to the super linear case. Is confinement, in general, a sufficient condition for quantum-LDPC codes to exhibit a single-shot threshold? If not, what other requirements should a code satisfy to ensure the existence of a single-shot threshold?

We simulated single-shot error correction for a variety of 3D product codes, concentrating on 3D toric and surface codes. Using MWPM \& BP+OSD, we achieved the best known code capacity error threshold and sustainable single-shot error threshold for this code family (for phase-flip noise). Our results strongly suggest that the bottleneck of two-stage decoders is the first stage where the noisy syndrome is repaired. For the 3D toric code, the optimal threshold of the syndrome repair step is $3.3\%$~\cite{Ohno2004}, whereas the optimal threshold of the entire decoding problem is $11.0\%$~\cite{takeda2004}. This implies that two-stage decoders can never achieve optimal performance in these codes, so perhaps other single-shot decoding methods ought to be investigated in future. 

We also simulated single-shot error correction for a family of non-topological 3D product codes, using BP+OSD for both decoding steps. We achieved performance very close to that of the 3D toric and surface codes, which indicates that BP+OSD is a high-performance single-shot decoder. Furthermore, the versatility of BP+OSD means that we expect it to work as a single-shot decoder for general LDPC 3D product codes. We leave confirmation of this to future work, and we conjecture that BP+OSD will achieve good performance for other classes of quantum-LDPC codes such as topological fracton codes~\cite{Vijay2016,Dua2019}.

\textit{Acknowledgements.-} This work was supported by the Engineering and Physical Sciences Research Council [grant numbers EP/P510270/1 (J.R.S.) and EP/M024261/1 (E.T.C. and J.R.)].
M.V. thanks Aleksander Kubica and Nikolas Breuckmann for illuminating discussions. We thank Rui Chao for comments on an early version of the manuscript.
Research at Perimeter Institute is supported in part by the Government of Canada through the Department of Innovation, Science and Economic Development Canada and by the Province of Ontario through the Ministry of Colleges and Universities.
This research was enabled in part by support provided by Compute Ontario (\url{www.computeontario.ca}) and Compute Canada (\url{www.computecanada.ca}). This work was completed while ETC was at the University of Sheffield.

\bibliographystyle{unsrt}
\bibliography{SSEC_library}


\appendix

\section{Linear confinement and single-shot threshold}
\label{app:threshold}

We present the Stochastic Shadow decoder, a variant of the (Adversarial) Shadow decoder described in Def.~\ref{def:ball_decoder}, and prove that it succeeds in correcting errors that have connected components that are sufficiently sparse and of bounded size, both on the syndrome and the qubits (Lem.~\ref{lem:shadow_sto}). Thm.~\ref{thm:threshold} will then follow from Lem.~\ref{lem:shadow_sto} on the performance of the Stochastic Shadow decoder: a family of codes with good linear confinement has a single-shot threshold under the local stochastic noise model.

This Appendix is organised as follows. After fixing some graph-theory notation in Sec.~\ref{sec:preliminaries}, we introduce a novel weight function for node sets in a graph, the \textit{closeness} function, Sec.~\ref{sec:closeness}. We prove that the closeness weight function preserves confinement and that  the Stochastic Shadow decoder can be used on confined codes to keep the closeness of error under control (Sec.~\ref{sec:confinement_sto}). Crucially, the closeness weight function characterises the structure of local stochastic errors better that the Hamming weight does, as some classic results in percolation theory show. We conclude, in Sec.~\ref{sec:percolation}, by showing that a family of codes with good linear confinement has a sustainable single-shot threshold (Thm.~\ref{thm:threshold}). Our proof is built on the results in \cite{leverrier18, grospellier2019constant}, where the authors prove that expander codes (which have linear confinement) have a single-shot threshold when decoded via the small-set flip decoder.
\subsection{Notation and preliminaries}
\label{sec:preliminaries}
Given a stabiliser code on $n$ qubits with stabiliser group $\mathcal S \subseteq \mathcal P_n$, we associate to it two graphs: $(\mathcal G_q, \sim_q)$, the qubit graph, and $(\mathcal G_s, \sim_s)$, the syndrome graph. The set of nodes are $\mathcal G_q$, the $n$ qubits, and $\mathcal G_s$, a generating set of the stabilizer group $\mathcal S$\footnote{When the code is a CSS code we consider the group generated by the $X$-stabilizers and the $Z$-stabilizers separately. $\mathcal S$ will thus refer either to $\mathcal S_X$ or $\mathcal S_Z$.}. The adjacency relations $\sim_q$ and $\sim_s$ are defined as   
\begin{align*}
q_1 \sim_q q_2 &\Leftrightarrow \exists s \in \mathcal G_s \text{ s.t. } \{q_1, q_2\} \subseteq \supp(s),\\
s_1 \sim_s s_2 &\Leftrightarrow \exists q \in \mathcal G_q \text{ s.t. } q \in \supp(s_1) \cap \supp(s_2);
\end{align*}
where the support $\supp(s) $ of a Pauli operator $s$ in $\mathcal P_{n}$ is the set of all the qubits on which its action is non-trivial. We will use lower-case symbols for Pauli operators in $\mathcal{P}_{n}$ and the corresponding upper-case symbol to indicate its support e.g. $E\coloneqq \supp(e)$. We use the term error to refer interchangeably to a Pauli operator or its support, in particular given two Pauli operators $e_1$ and $e_2$ we use the symbol $+$ to indicate the support of the product operator $e_1 \cdot e_2$ \footnote{E.g. if $e_1$ and $e_2$ are both $X$-operators, $E_1 + E_2$ is the symmetric difference of the sets $E_1$ and $E_2$.}, so that
\begin{align*}
E_1 + E_2 = \supp(e_1\cdot e_2).
\end{align*}
In this picture, the syndrome $\sigma(\cdot)$ maps the set of Pauli operators on $n$ qubits $\mathcal P_n$ into the power set of $\mathcal G_s$, a set of generators for the stabiliser group:
\begin{align*}
\sigma: \mathcal P_n &\longrightarrow \mathscr P(\mathcal G_s) \\
e &\longrightarrow \{s \in \mathcal G_s : se = -es\}.
\end{align*} 
We define the neighbour map $\Gamma$ as
\begin{align*}
\Gamma: \mathcal P_n &\longrightarrow \mathscr P(\mathcal G_s) \\
e &\longrightarrow \{s \in \mathcal G_s : \supp(s)\cap E \neq \emptyset \}.
\end{align*} 
With slight abuse of terminology, we call syndrome any element of $\mathscr P (\mathcal G_s)$, even when such a set does not belong to the image of $\sigma$.  When referring to an error as a set $E$, it is always to be intended as corresponding to a fixed Pauli operator $e \in \mathcal P_n$ such that $E \coloneqq \supp(e)$. We will write interchangeably $\sigma(e)/\sigma(E)$ and $\Gamma(e)/\Gamma(E)$ to indicate the image, via the syndrome map and the neighbour map respectively, of the Pauli error $e$.

Given two syndromes sets in $\mathcal G_s$ we use the symbol $+$ to indicate their symmetric difference. It is easy to verify that the map $\sigma(\cdot)$ preserves the $+$ operation (i.e. it is linear):
\begin{align*}
\sigma(e_1 \cdot e_2) = \sigma(E_1 + E_2) = \sigma(E_1) + \sigma(E_2).
\end{align*}
Moreover, the image via $\sigma$ of disjoint non-connected sets is disjoint. In fact, if $E_1, E_2$ are two disjoint non-connected sets in $\mathcal G_q$ and we suppose that their syndrome sets are not disjoint we find a contradiction.  Let $\hat s$ be a stabiliser in $ \sigma(E_1) \cap \sigma(E_2)$. By definition of $\sigma$, this entails that $e_1$ and $e_2$ both anti-commute with $\hat s$ which is equivalent to saying that their supports have odd overlap with $\supp(\hat s)$. In particular, there exists $q_i \in E_i$ such that $q_i \in \supp(\hat s)$ and, by the definition of the adjacency relation $\sim_q$, $q_1 \in E_1$  and $q_2 \in E_2$ would be connected via $\hat s$, against the assumption. Note that, in general the image via the syndrome map $\sigma(\cdot)$ of a connected set needs not to be connected. However, the neighbour function $\Gamma(\cdot)$ maps connected sets into connected sets. We will make use of these properties in Sec.~\ref{sec:confinement_sto}.
\subsection{The closeness weight function}
\label{sec:closeness}
When errors are local stochastic it can be handy to use definitions of weight other than the cardinality/Hamming weight. For instance, the authors in \cite{leverrier18} define the quantities of Def.~\ref{def:maxconn} and study a related notion of percolation to understand the tolerance to errors of a given connected graph.   
\begin{defin}[$\alpha$-subsets, $\mathrm{MaxConn}_\alpha(E)$ \cite{leverrier18}]\label{def:maxconn}
An $\alpha$-subset of a set $E \subseteq \mathcal G_q$ is a set $K$ such that $|K \cap E|\ge \alpha |K|$. The maximum size of a connected $\alpha$-subset of $E$ is denoted by $\mathrm{MaxConn_{\alpha}(E)}$.
\end{defin}
We here introduce a conceptual cousin to $\mathrm{MaxConn}_{\alpha}(E)$, the $\beta$-closeness of an error set $E$, and prove that it is a well defined weight function (see Lem.~\ref{lemma:bcloseness}). We do not explicitly detail the relations between $\alpha$-subsets and closeness here. However, we will implicitly use them, as our percolation results and ultimately the proof of Thm.~\ref{thm:threshold} heavily rely on those relations and the proofs in \cite{leverrier18, grospellier2019constant}. 
\begin{defin}[$\beta$-closeness]\label{def:closeness}
Let $\mathcal G$ be a connected graph i.e.~a graph in which there exist a path between any two of its nodes. Given a subset $E$ of nodes and a positive integer $\beta$, we define its $\beta$-closeness as the quantity:
\begin{align*}
\|E\|_{\beta} \coloneqq \max \{\lvert K \cap E \rvert : K \text{ is connected}, \, \lvert K\rvert=\beta\}.
\end{align*}
We call any connected subset of $\beta$ nodes a $\beta$-patch and any $\beta$-patch $K$ such that $|K\cap E|=\|E\|_{\beta}$ maximal patch for $E$.
\end{defin}
Since we are interested in the $\beta$-closeness of error sets on a qubit graph $\mathcal G_q$, it is natural to introduce the notion of reduced $\beta$-closeness.
\begin{defin}\label{def:reduced_closeness}
Given a qubit error set $E \subseteq \mathcal G_q$, its reduced $\beta$-closeness $\|E\|^{\mathrm{red}}_{\beta}$ is defined as 
\begin{align*}
\|E\|^{\mathrm{red}}_{\beta} \coloneqq \min\{\|E + T\|_{\beta} :\, &\sigma(E+T)=\sigma(E),\\ &T=\supp(t) \text{ for some } t \in \mathcal{P}_n\}.
\end{align*}
\end{defin}
Crucially, we will see in Lem.~\ref{lem:patches} that the closeness function preserves confinement. As a consequence, we can build a variant of the Shadow decoder (Def.~\ref{def:shadow}) that succeeds in correcting errors of small reduced closeness.

We now prove some basic properties of the $\beta$-closeness weight function $\|\cdot\|_{\beta}$ on a connected graph $\mathcal{G}$. 
\begin{lemma}
\label{lemma:bcloseness}
Let $\mathcal G$ be a connected graph and denote by $|\mathcal G|$ the number of its nodes.
For any positive integer $\beta< |\mathcal G|$, the following hold:
\begin{enumerate}[label=(\roman*)]
\item $\| \cdot \|_{\beta} \le |\cdot|$;
\item \label{prop:size_comp} $\|\cdot \|_{\beta} \le \beta$; the equality holds if and only if the considered set of nodes has a connected component of size at least $\beta$; conversely, if $\|\cdot\|_{\beta} < \beta$ then the connected components of the set all have size less than $\beta$;
\item it is positive:$\|E\|_{\beta} \ge 0$ and equality holds if and only if $E=\emptyset$;
\item it satisfies the triangle inequality: for any $E_1, E_2$, $\|E_1\cup E_2\|_{\beta} \le\|E_1\|_{\beta}  + \|E_2\|_{\beta}$.
\item it is monotonic: if $E_1 \subseteq E_2$ then $\|E_1\|_{\beta} \le \|E_2\|_{\beta}$;
\end{enumerate}
\end{lemma}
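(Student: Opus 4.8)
The plan is to verify the five properties of Lemma~\ref{lemma:bcloseness} one at a time, each being a short argument directly from Definition~\ref{def:closeness}. Throughout, fix a connected graph $\mathcal{G}$ with $|\mathcal{G}|$ nodes and a positive integer $\beta < |\mathcal{G}|$; recall $\|E\|_\beta = \max\{|K \cap E| : K \text{ connected}, |K| = \beta\}$, and note that since $\beta < |\mathcal{G}|$ and $\mathcal{G}$ is connected, at least one connected $\beta$-patch exists, so the maximum is over a nonempty set.

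For (i), any connected $\beta$-patch $K$ satisfies $|K \cap E| \le |E|$, so the max over such $K$ is at most $|E|$. For (ii), $|K \cap E| \le |K| = \beta$ gives $\|E\|_\beta \le \beta$. Equality forces some $\beta$-patch $K$ with $K \subseteq E$; since $K$ is connected of size $\beta$, it lies inside a single connected component of $E$, which therefore has size $\ge \beta$. Conversely, if $E$ has a connected component $C$ with $|C| \ge \beta$, then because $C$ is connected one can grow a connected subset of $C$ one node at a time up to size exactly $\beta$ (add a neighbour within $C$ at each step), yielding a $\beta$-patch inside $E$ and hence $\|E\|_\beta = \beta$; the contrapositive gives the last clause. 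For (iii), positivity is clear, and $\|E\|_\beta = 0$ means every $\beta$-patch misses $E$; since every node of $\mathcal{G}$ lies in some connected $\beta$-patch (again by connectivity and $\beta < |\mathcal{G}|$, grow a patch outward from that node), this forces $E = \emptyset$.

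For (iv), let $K$ be a $\beta$-patch achieving $\|E_1 \cup E_2\|_\beta$. Then
\begin{align*}
\|E_1 \cup E_2\|_\beta = |K \cap (E_1 \cup E_2)| \le |K \cap E_1| + |K \cap E_2| \le \|E_1\|_\beta + \|E_2\|_\beta,
\end{align*}
where the first inequality is inclusion–exclusion (dropping the intersection term) and the second uses that $K$ is an admissible patch in each individual maximisation. For (v), if $E_1 \subseteq E_2$ then $|K \cap E_1| \le |K \cap E_2|$ for every patch $K$, so taking the max over $K$ preserves the inequality.

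I do not expect a genuine obstacle here: every clause reduces to elementary set arithmetic plus the single structural fact that, in a connected graph with more than $\beta$ nodes, any node (and any connected set of size $\le\beta$) extends to a connected $\beta$-patch — which is what makes the maximum well defined and nonempty. The only point needing mild care is the "only if" direction of (ii): one must argue that a connected set of size $\beta$ contained in $E$ sits within one connected component of $E$, which is immediate since connectedness of $K$ means any two of its nodes are joined by a path in $\mathcal{G}$ using only... actually only the nodes of $K$ need be connected \emph{in $\mathcal{G}$}, but since $K \subseteq E$ the component of $E$ containing any node of $K$ contains all of $K$. This is the one place where I would state the reasoning explicitly rather than leave it to the reader.
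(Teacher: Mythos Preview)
Your proof is correct and follows essentially the same line as the paper's: each item is handled by direct set arithmetic on $|K\cap E|$ together with the observation that every node of a connected graph lies in some $\beta$-patch. If anything, your treatment of (ii) is slightly more complete than the paper's, since you explicitly supply the converse direction (a connected component of size $\ge\beta$ yields a $\beta$-patch inside $E$) rather than leaving it implicit.
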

\begin{proof}
In the following, let $K\subseteq \mathcal G$ be a maximal $\beta$-patch for $E$, i.e.~$\|E\|_{\beta} = |K \cap E|$.
\begin{enumerate}[label=(\roman*)]
\item $\|E\|_{\beta} = |K \cap E| \le |E|$.
\item $|K \cap E| \le| K| = \beta$. Equality holds if and only if $K \cap E = K \subseteq E$ which entails that $E$ has a connected component of size at least $\beta$, since $K$ is connected. 
\item If $E$ is non empty then there exists at least one node $g \in E$. Since $\mathcal G$ is connected, for any integer $1\le \beta \le \lvert \mathcal G\rvert$ there exists a $\beta$-patch that contains $g$ so that $\|E\|_{\beta} \ge 1$.
\item Let $J$ be any $\beta$-patch in $\mathcal G$. The following hold:
\begin{align*}
\lvert J \cap (E_1 \cup E_2) \rvert &= \lvert (J \cap E_1) \rvert + \lvert (J \cap F_2)\rvert +\\
&\quad -\lvert J \cap (E_1 \cap E_2)\rvert \\
&\le \lvert J \cap E_1 \rvert + \lvert J \cap E_2 \rvert\\
&\le \| E_1 \|_{\beta} + \|E_2\|_{\beta}.
\end{align*}
Since this holds for any $\beta$-patch, we obtain 
\begin{align*}
    \|F_1 \cup F_2 \|_{\beta} \le \|F_1\|_{\beta} + \|F_2\|_{\beta}.
\end{align*}
\item Let $K_1$, $K_2$ be maximal $\beta$-patches for $E_1$ and $E_2$ respectively. Then 
\begin{align*}
|K_1 \cap E_1 |&\le |K_1 \cap E_2| &\text{because }E_1\subseteq E_2,\\
&\le |K_2 \cap E_2|&\text{ by maximality of } K_2,
\end{align*}
which yields $\|E_1\|_{\beta} \le \|E_2\|_{\beta}$.
\end{enumerate}
\end{proof}
Lem.~\ref{lem:canonical_patch} below states that there exists a canonical form for maximal $\beta$-patches of an error set $E$. Roughly speaking, a canonical $\beta$-patch $K$ will be made up of some entire connected components of $E$, plus at most one connected proper subset of a connected component of $E$, and some other nodes not in $E$ (see Fig.~\ref{fig:patches}). The existence of a canonical $\beta$-patch is key in proving that the closeness function preserves confinement in the sense explained by Lem.~\ref{lem:patches}.

\begin{figure*}
\centering
\subfloat[]{\includegraphics[scale=0.38]{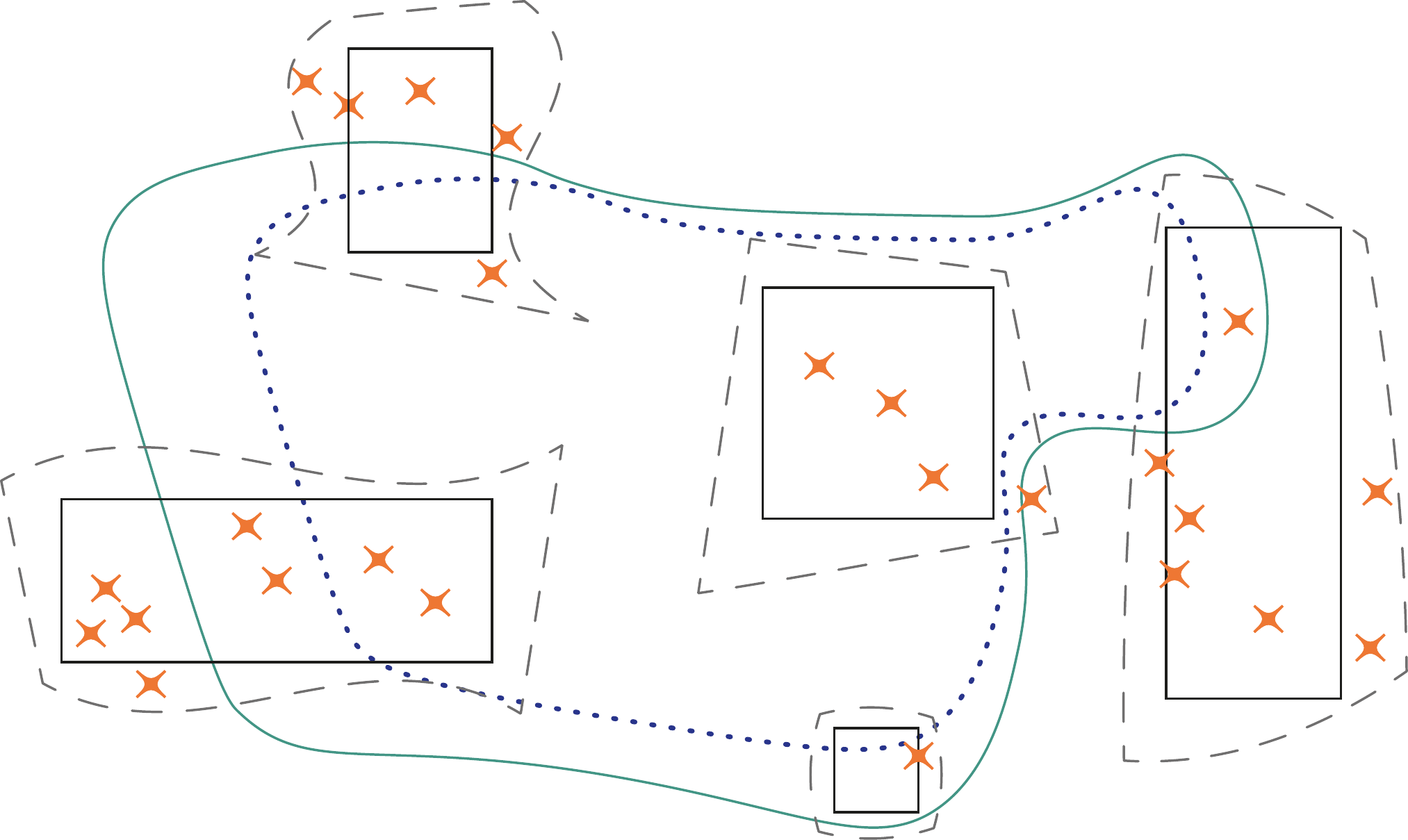}}\hfill
\subfloat[]{\includegraphics[scale=0.38]{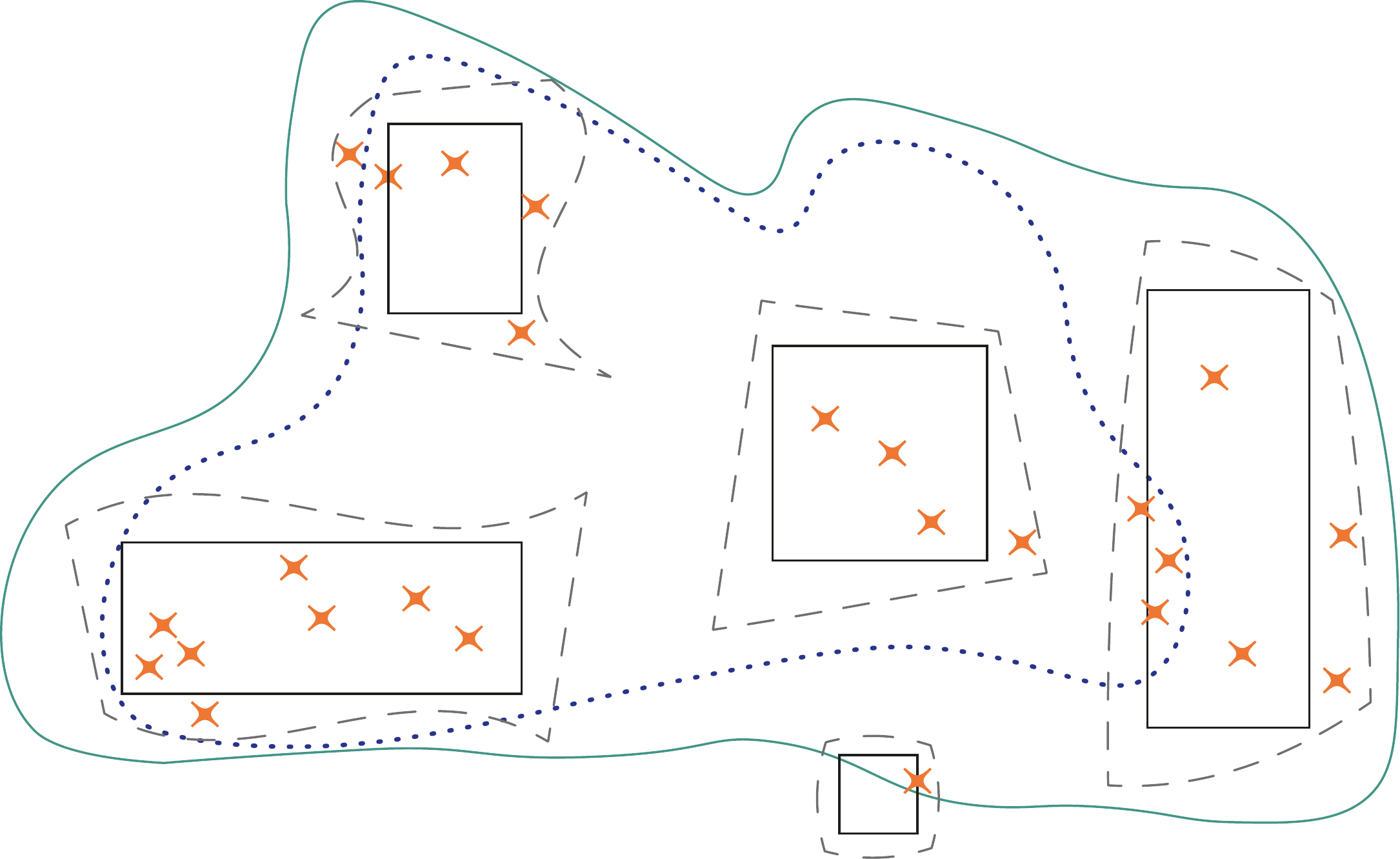}}
\caption{Graphical representation of patches on a graph. To help the visualisation we imagine the qubit graph and the syndrome graph to be superimposed. Black rectangles: connected components of the error $E_1, \dots, E_5$. Dashed grey lines: neighbour sets $\Gamma(E_i)$ of the underlying rectangle/error component. Orange crosses: syndrome nodes in $\sigma(E_i)$. Dotted blue curve: $t$-patches on the qubit graph. Green curve: $\omega t$-patches on the syndrome graph. In (a) the patches are generic while in (b) the dotted/error patch is a canonical patch for the error. The importance of the canonical form for a patch is highlighted in the differences between the patches in (a) and (b). We observe how the crosses/syndrome nodes $\sigma(E_i)$ are scattered inside the dashed curve/neighbour set $\Gamma(E_i)$. For this reason, in order to group enough syndrome nodes inside a patch of bounded size, we need some care in the choice of the error nodes. When we include entire connected components of the error in a patch in $\mathcal G_q$, we are able to build a patch in $\mathcal G_s$ which includes entire neighbour sets and, as a consequence, all the corresponding syndrome nodes. In fact, even if we assume that the dotted blue/error patches in (a) and (b) have same size, when we enlarge them by a factor of $\omega$ to build the green/syndrome patch, we obtain dramatically different results. In (a) since the dotted/error patch contains several incomplete components, the corresponding green/syndrome patch contains incomplete portions of the dashed/neighbour sets $\Gamma(E_i)$. Hence, we have no guarantee on the number of crosses/syndrome nodes included in the patch. In (b) the dotted blue patch is a canonical patch for the error. We can see how the green/syndrome patch entirely contains the dashed/neighbour sets of all but one component of the error contained in the dotted blue/qubit patch. In this way we have the certainty to include in the green/syndrome patch a sufficient number of crosses/syndrome nodes to ensure confinement. }
\label{fig:patches}
\end{figure*}

\begin{lemma}[Canonical $\beta$-patch]
\label{lem:canonical_patch}
For any error $E$ on a qubit graph $\mathcal G$ there exists a maximal $\beta$-patch $T$ such that, for all but one connected component $E_i$ of $E$, it holds:
\begin{align*}
\text{either}\quad E_i \subseteq T &\quad\text{ or}\quad E_i \cap T = \emptyset.
\end{align*}
In other words, if $E_1, \dots, E_m$ are the connected components of $E$, re-ordering if necessary, there exists an index $\nu$ such that:
\begin{align}
    |T \cap E_i| &= |E_i| &\text{if}\quad i<\nu,\nonumber\\\label{eq:cond_patches}
    |T \cap E_i| &\le |E_i| &\text{if }\quad i = \nu, \\
    |T \cap E_i|& = 0 &\text{if }\quad i > \nu. \nonumber
\end{align}
We call any such $T$ a canonical $\beta$-patch for the set $E$.
\end{lemma}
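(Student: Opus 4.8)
The plan is to prove the lemma by an extremal argument over the set of maximal $\beta$-patches for $E$ — which is non-empty, since $\mathcal G$ is connected and $\beta<|\mathcal G|$ guarantees at least one connected set of size $\beta$ — together with a reduction by single-vertex exchanges. Call a component $E_i$ \emph{split} by a patch $T$ if $0<|T\cap E_i|<|E_i|$; the claim is exactly that some maximal patch splits at most one component (then that component, or any intersected one if there are none, plays the role of $E_\nu$, and reordering the remaining components into those contained in $T$ and those disjoint from $T$ gives the form \eqref{eq:cond_patches}). So among all maximal $\beta$-patches let $T$ minimise the number of split components, and suppose for contradiction that $T$ splits two components $E_1$ and $E_2$; I would derive a maximal patch with strictly fewer split components.

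The key structural input is that maximality constrains which vertices of $T$ can be moved. Since $E_1$ is split and connected, there is a vertex $v\in E_1\setminus T$ adjacent to $E_1\cap T$. If some spanning tree $\tau$ of the induced subgraph on $T$ had a leaf $\ell\notin E$, then $(T\setminus\{\ell\})\cup\{v\}$ would be a connected set of size $\beta$ containing $|T\cap E|+1$ error vertices, contradicting maximality; since every non-cut vertex of $T$ is a leaf of some spanning tree, this shows every non-cut vertex of $T$ lies in $E$. The same exchange, applied when a leaf $\ell$ lies in a split component $E_j$ with $j\ne1$, produces $T'=(T\setminus\{\ell\})\cup\{v\}$ which is again a maximal $\beta$-patch, with $|T'\cap E_1|=|T\cap E_1|+1$, $|T'\cap E_j|=|T\cap E_j|-1$, and all other intersections unchanged — so the number of split components never goes up, and it goes down precisely when $E_1$ fills up or $E_j$ empties out.

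Iterating these transfers into $E_1$ (taking a fresh spanning tree at each step), the quantity $|T\cap E_1|$ strictly increases, so after finitely many steps $E_1$ becomes full or some other split component becomes empty — either way the number of split components drops, contradicting the choice of $T$. The one case not covered is when we reach a stage where no spanning-tree leaf lies in a split component other than $E_1$: then all non-cut vertices of $T$ lie in $E_1$ together with the components fully contained in $T$, and one has no immediately ``free'' vertex to move. This is the main obstacle, and I expect the bulk of the work there: one must instead release a vertex from a fully-contained component (temporarily un-completing it) or from $E_1$ and re-route it while keeping $T$ connected and size $\beta$, and argue with a finer potential — for instance the lexicographically ordered vector of intersection sizes $(|T\cap E_i|)_i$, or the pair (number of fully-contained components, number of intersected components) — that the overall complexity still strictly decreases. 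Carrying the reduction through in every case, and then relabelling, yields a maximal $\beta$-patch of the canonical form claimed.
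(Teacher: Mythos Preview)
Your extremal argument with single-vertex exchanges is a natural idea, but the case you flag as ``the main obstacle'' is a genuine gap, and the sketch you give there does not close it. Concretely, consider a path $a\text{--}b\text{--}c\text{--}d\text{--}e$ with $a,e\in E_1$, $c\in E_2$, and $b,d\notin E$. Both $E_1,E_2$ are split, every non-cut vertex (the two endpoints) lies in $E_1$, and any spanning-tree leaf you remove is in $E_1$; swapping it for a new $E_1$-vertex adjacent to the remaining $E_1$-part leaves $|T\cap E_1|$ unchanged. Your suggested fix --- releasing a vertex from a fully-contained component or from $E_1$ and tracking a lexicographic potential --- does not obviously terminate: you have not exhibited a quantity that strictly decreases at every step while connectivity and size $\beta$ are both maintained. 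Without that, the induction/contradiction does not go through.

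The paper sidesteps exactly this difficulty by \emph{decoupling} the size constraint from the balancing step. It builds a meta-tree on the connected pieces $J_i$ of $J\cap E$, then repeatedly empties a meta-leaf $J_\ell$ into other incomplete components; when $J_\ell$ is exhausted it also drops the meta-edge to it, so the working set $T$ is allowed to shrink below $\beta$ while $|T\cap E|$ stays equal to $\|E\|_\beta$. Only at the end is $T$ re-enlarged to size $\beta$ using non-error vertices (which exist, else one would contradict maximality of $\|E\|_\beta$). Allowing $|T|$ to drop is precisely what frees up the connectivity bookkeeping that blocks your exchange argument. If you want to repair your approach, the cleanest route is to adopt this same relaxation: keep $|T\cap E|$ fixed and $T$ connected but let $|T|\le\beta$, and defer the enlargement to a final step.
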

\begin{proof}
Let $J$ be any maximal $\beta$-patch for $E$ i.e.\ $J$ is connected, has size $\beta$ and $|J \cap E| = \|E\|_{\beta}$. Staring from $J$ we build a set $T$ with the desired form.
Write $J \cap E$ as disjoint union of connected sets:
\begin{align*}
J \cap E =J_1 \sqcup \dots \sqcup J_{\nu}.
\end{align*}
We call these $J_i$'s patch-error components. Let $E_1 \dots E_{\mu}$ be the connected components of the error $E$. We recall that a connected component $E_{i}$ of $E$ is a connected set which is connected to no additional nodes in $E\setminus E_{i}$. We say that $E_i$ is \textit{incomplete} with respect to $J$ if it has non trivial overlap with $J$ but it is not entirely contained in $J$, i.e.
\begin{align*}
    J\cap E_i \neq \emptyset \quad \& \quad E_i \not\subseteq J \Rightarrow |J\cap E_i| < |E_i|.
\end{align*}
Note that it can be the case for two disjoint (but internally connected) error-patch components $J_{i_1}$ and $J_{i_2}$ to overlap with the same incomplete error component $E_{i'}$.

We consider a meta-graph $\mathfrak G$ whose meta-nodes are connected sets in $\mathcal G$ and meta-edges are paths in $\mathcal G$. Because the error-patch components are both internally and reciprocally connected in $J$, there exists a meta-spanning-tree $\mathcal T \subseteq \mathfrak G$ whose $\nu$ nodes $\mathcal J_i$ are the error-patch components $J_i$ and whose meta-edges $\varepsilon_{ij}$ are formed by minimum length paths in $\mathcal G$ between the $J_i$'s with nodes in $J\setminus E$. In the following we will indicate with $\mathcal T, \mathcal J_i$ and $\varepsilon_{ij}$ the meta-tree, its meta-nodes and its meta-edges and with $T$, $J_i$ and $e_{i,j}$ the corresponding sets of nodes in $\mathcal G$. Note that, by this identification, $T$ has at most $\beta$ nodes. We now show how to modify the meta-tree $\mathcal T$ so that the corresponding set of nodes $T$ in $\mathcal G$ is canonical for $E$. We do this in two steps: the balancing and the enlargement step.
\paragraph*{\textsc{Balancing}}
We show by induction on the number $\nu$ of the meta-nodes $\mathcal J_i$'s that it is possible to modify $\mathcal T$ so that the corresponding set of nodes $T\subseteq\mathcal{G}$ satisfies conditions \eqref{eq:cond_patches} on its overlap with the connected components of $E$.
\begin{enumerate}
\item[$\nu=1$]: the thesis is trivially verified. 
\item[$\nu >1$]: if $J$ is not canonical for $E$ then $E$ must have at least two incomplete components with respect to the patch $J$. Let $\mathcal J_{\ell}$ be a meta-leaf of $\mathcal T$ and $J_{\ell}$ its corresponding subset of nodes in $\mathcal G$. We iteratively remove from $T$ the nodes of $J_{\ell}$, both preserving connectivity of $T$ and the size of $T\cap E$. 

For any node $q_{\lambda}$ in $J_{\ell}$, we choose a node $q_{\chi}$ such that:
\begin{enumerate}[label=\roman*.]
     \item $q_{\chi}$ belongs to some incomplete component of the error disjoint from $J_{\ell}$: $q_{\chi} \in E_{\chi}$ and $E_{\chi} \cap J_{\ell} = \emptyset$;
    \item $q_{\chi}$ is a new node i.e.\ it does not belong to $J$: $q_{\chi} \in \mathcal G \setminus J$;
   
    \item $q_{\chi}$ is connected to at least one node in some error-patch component other from $J_{\ell}$: $q_{\chi} \sim_q q_{\chi'}$, $q_{\chi'} \in J_{\chi}$ for some $\chi \neq \ell$. 
\end{enumerate}
We remove $q_{\lambda}$ from J and add $q_{\chi}$ to J,  and thereby updating T accordingly.
This process terminates when either (a) we are not able to find such a new node $q_{\chi}$ or (b) there are no more nodes $q_{\lambda}$ in $ J_{\ell}$.

Case (a) entails that $E$ has at most one incomplete component with respect to $T$. In fact, if $E$ had an incomplete component $E_{\chi}$ disjoint from $J_{\ell}$ such a node $q_{\chi}$ always exists. As a consequence, if we are not able to find a new error node to enlarge one of the error-patch components $J_{\chi} \neq J_{\ell}$ the only incomplete component of $E$ must be the one relative to $J_{\ell}$. The updated node set $T$ has the desired property, provided that we had removed nodes $q_{\lambda}$ from $J_{\ell}$ preserving connectivity (for instance, considering a spanning tree for nodes in $J_{\ell}$ and iteratively removing leaves). 
If case (b) is verified, we remove from $\mathcal T$ all the meta-edges that were incident to $\mathcal J_{\ell}$. The updated meta-tree $\mathcal T$ derived from the updated set $T$ has $\nu -1$ meta-nodes. By the induction hypothesis, it can be modified to obtain the desired form.

In other words, we pick a meta-leaf of $\mathcal T$ and we either remove part of its nodes (case (a)) or all of them (case (b)). We preserve the quantity $|T\cap E|$ by adding new error nodes to some different error-patch component that overlaps with an incomplete component of the error $E$. By choosing a leaf, we are able to preserve the connectivity of $\mathcal T$ and thus the connectivity of the corresponding node sets $T$.

We iterate this procedure over meta-leaves of $\mathcal T$ until the overlap of the corresponding set $T$ in $\mathcal G$ and the error set $E$ has the desired form.
\end{enumerate}

\paragraph*{\textsc{Enlargement}}
By contradiction, we prove that it is possible to add nodes to the set $T$ corresponding to the balanced meta-tree $\mathcal T$ so that it is connected, it has size exactly $\beta$ and $|T\cap E| = \|E\|_{\beta}$.
First note that during the balancing procedure, the number $|T\cap E|$ remains constant and it holds:
\begin{align*}
|T \cap E| = \sum_{i=1}^\nu |J_i| = |J \cap E| = \|E\|_{\beta}.
\end{align*}
Moreover, the initial tree is connected and the balancing procedure preserves connectivity. However, we only have an upper bound on the size of $T$. In fact, if $\mathcal T$ is the initial meta-tree and $T$ is its corresponding sub-graph in $\mathcal G$, it holds $T \subseteq J$ and therefore $|T|\le \beta$. During the balancing step the size of $T$ could decrease when we remove nodes of $e_{ij}$, belonging to a meta-edge $\varepsilon_{ij}$. Thus, in general, after the balancing step for the weight of $T$ holds:
\begin{align*}
|T| \le \beta.
\end{align*}
If $|T|=\beta$, then $T$ is a $\beta$-patch with maximum overlap with $E$ and, by balancing, it is canonical.
If $|T|<\beta$, then there  must exist at least $\beta - \|E\|_{\beta}$ nodes in $\mathcal G \setminus (E \cup T)$ that are connected to $T$. In fact, a connected proper subset can always be enlarged in a connected graph. If the only way to enlarge $T$ to a $\beta$-patch were by adding nodes in $E$, then we would have found a $\beta$-patch whose overlap with $E$ has size greater than its $\beta$-closeness, which contradicts the definition of $\|E\|_{\beta}$. In conclusion, any of such enlargements of the tree $T$ is a canonical $\beta$-patch for $E$.
\end{proof}

\subsection{Confinement and Stochastic Shadow Decoder}
\label{sec:confinement_sto}
Here, we first prove that the closeness function preserves confinement, as Lem.~\ref{lem:patches} states. Then, we present the Stochastic Shadow decoder (Def.~\ref{def:shadow}) and prove, in Lem.~\ref{lem:shadow_sto}, that it succeeds in correcting errors of small enough closeness. These findings, together with the percolation results of Sec.~\ref{sec:percolation}, will yield the proof of the existence of a single-shot threshold for codes with linear confinement.
\begin{lemma}[Closeness preserves confinement]\label{lem:patches}
Consider a code with qubit degree at most $ \tilde \omega$ and $(t, f)$-confinement, where $f$ is convex. Then, for any error $E$ with $\|E\|_t^{\mathrm{red}} \le \frac{t}{2}$, it holds:
\begin{align*}
f(\|\sigma(E)\|_{\omega t} )\ge \|E\|_t^{\mathrm{red}} ,
\end{align*}
where $\omega = \tilde \omega + 1$.
\end{lemma}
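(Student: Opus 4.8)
The plan is to reduce the statement to ordinary $(t,f)$-confinement applied to a single, carefully chosen sub-error, using the geometric normal form of Lemma~\ref{lem:canonical_patch} to pass between the qubit and syndrome graphs and the convexity of $f$ to recombine per-component estimates without losing constants. Roughly: the reduced closeness $\|E\|_t^{\mathrm{red}}$ witnesses a region of the qubit graph, of size $t$, carrying error mass $w\coloneqq\|E\|_t^{\mathrm{red}}$; the bounded qubit degree lets us inflate that region, on the syndrome side, to a connected patch of size at most $\omega t=(\tilde\omega+1)t$; and confinement converts the error mass inside it into syndrome mass.

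First I would fix a representative $E'$ of the class of $E$ attaining $\|E'\|_t=\|E\|_t^{\mathrm{red}}=:w\le t/2$, and among all such, one of minimum Hamming weight. Since $w<t$, part~\ref{prop:size_comp} of Lemma~\ref{lemma:bcloseness} forces every connected component of $E'$ to have size $<t$; combined with maximality of a patch realising $w$ (a connected set of size $\le t$ can always be grown to a $t$-patch), every component $E_i'$ in fact satisfies $|E_i'|\le w\le t/2$. Next I would invoke Lemma~\ref{lem:canonical_patch} for a canonical maximal $t$-patch $K$ on the qubit graph: relabelling, $K\cap E'=E_1'\cup\dots\cup E_{\nu-1}'\cup W$, where $E_1',\dots,E_{\nu-1}'$ are whole components of $E'$, $W$ is a connected subset of the component $E_\nu'$, and $|K\cap E'|=w$. (If $W=E_\nu'$ there is no partial component and the argument simplifies; when $W\subsetneq E_\nu'$, the bound $|E_\nu'|\le w$ caps the leftover mass $|E_\nu'\setminus W|$ by $w-1$.)

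Now I would transport to the syndrome graph using two facts from Section~\ref{sec:preliminaries}: $\Gamma(\cdot)$ maps connected sets to connected sets, and distinct components of an error have pairwise disjoint neighbourhoods, hence pairwise disjoint syndromes, so $\sigma(E')=\bigsqcup_i\sigma(E_i')$. Since each qubit lies in at most $\tilde\omega=\omega-1$ generators, $\Gamma(K)$ is connected with $|\Gamma(K)|\le\tilde\omega t<\omega t$, so it can be enlarged to an admissible connected $\omega t$-patch $P\supseteq\Gamma(K)$. For $i<\nu$ we have $E_i'\subseteq K$, hence $\sigma(E_i')\subseteq\Gamma(E_i')\subseteq P$, and the $\Gamma(E_i')$ are mutually disjoint, while $P$ also captures the part $\Gamma(K)\cap\sigma(E_\nu')$ stemming from $W$; thus
\begin{equation*}
\|\sigma(E)\|_{\omega t}\ \ge\ |P\cap\sigma(E')|\ \ge\ \sum_{i<\nu}|\sigma(E_i')|\ +\ \bigl|\Gamma(K)\cap\sigma(E_\nu')\bigr|.
\end{equation*}
Applying $f$ and using that $f$ is convex with $f(0)=0$, hence superadditive, I would lower-bound $f$ of the right-hand side by $\sum_{i<\nu}f(|\sigma(E_i')|)+f(|\Gamma(K)\cap\sigma(E_\nu')|)$, then apply $(t,f)$-confinement to each piece (all have reduced weight $\le t$) and to the reduced error around $W$. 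For the chosen $E'$ the whole components and the $W$-piece are themselves reduced, so $f(|\sigma(E_i')|)\ge|E_i'|$ and the $W$-term is at least $|W|$; summing yields $\sum_{i<\nu}|E_i'|+|W|=|K\cap E'|=w=\|E\|_t^{\mathrm{red}}$, as required.

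The hard part is the final step: the reduced-weight bookkeeping together with the single partial component $W$. One must show that, for the chosen $E'$, no lower-weight representative of a sub-error supported near $K$ exists — otherwise it would either drop the reduced closeness below $w$ (impossible) or contradict minimality of $|E'|$ — and this is delicate precisely because the partial component $W$ is adjacent to the part of $E_\nu'$ lying outside $K$, so the triangle inequality for $\|\cdot\|_t$ alone is too lossy to propagate the bound. This is exactly where the canonical form of Lemma~\ref{lem:canonical_patch} (which quarantines at most one partial component and otherwise uses only whole components, whose neighbourhoods sit inside $\Gamma(K)$) and the convexity of $f$ (to recombine the componentwise confinement bounds) are essential; the remaining estimates are routine.
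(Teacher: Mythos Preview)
Your overall architecture---pick a representative achieving the reduced closeness, take a canonical $t$-patch on the qubit side, transport to the syndrome graph, apply confinement per component, and recombine via superadditivity of $f$---matches the paper's proof. The gap is in \emph{which} syndrome patch you build.

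You take $P\supseteq\Gamma(K)$. This only captures $\Gamma(W)\subseteq\Gamma(K)$ from the partial component, so all you can say about the last term is $|P\cap\sigma(E_\nu')|\ge|\Gamma(K)\cap\sigma(E_\nu')|$. That set is not the syndrome of any error: stabilisers touching $W$ may commute with $E_\nu'$ because of cancellation from $E_\nu'\setminus W$, and stabilisers in $\sigma(E_\nu')$ may lie entirely in $\Gamma(E_\nu'\setminus W)\setminus\Gamma(K)$. So $|\Gamma(K)\cap\sigma(E_\nu')|$ can be arbitrarily small relative to $|W|$, and there is no version of $(t,f)$-confinement that lets you conclude $f(|\Gamma(K)\cap\sigma(E_\nu')|)\ge|W|$. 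Your minimality-of-$|E'|$ argument targets the (separate, and also delicate) issue of $|F_i|^{\mathrm{red}}$ versus $|F_i|$; it does nothing for this loss of syndrome mass at the boundary of $K$.

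The paper sidesteps the problem by a different choice of patch: instead of $\Gamma(K)$, take $J=\bigsqcup_{i=1}^{\nu}\Gamma(F_i)$ together with a connecting path $\pi_s\subseteq\mathcal G_s$ read off from the edges of $K$. The point is that you have already established $|F_i|\le t/2$ for \emph{every} $i$, including the partial one; combined with $\sum_{i<\nu}|F_i|\le\|F\|_t\le t/2$ from the canonical form, you get $\sum_{i\le\nu}|F_i|\le t$, hence $|\bigsqcup\Gamma(F_i)|\le\tilde\omega t$, and $|\pi_s|\le|K|=t$ supplies the last $t$ in $\omega t=(\tilde\omega+1)t$. Now $J$ contains the \emph{full} $\sigma(F_\nu)$ (not just the piece near $W$), so you can apply confinement directly to the whole component $F_\nu$ and bound by $|F_\nu|\ge|K\cap F_\nu|$; the partial-component difficulty simply disappears. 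In short, the fix is to spend your size budget on neighbourhoods of the error components rather than on the neighbourhood of the whole patch $K$, trading the useless $\Gamma(K\setminus E')$ for the essential $\Gamma(F_\nu\setminus W)$.
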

\begin{proof}
To ease the notation, let $F$ be an error set such that $\sigma(E)= \sigma(F)$, $\|F\|_t = \|E\|_t^{\mathrm{red}}$. If $F_1, \dots, F_{\mu}$ are the connected components of $F$, by Lem.~\ref{lem:canonical_patch} there exists a canonical patch $K$ for $F$ such that:
\begin{align*}
    |K \cap F_i| &= |F_i| &\text{if}\quad i<\nu,\\
    |K \cap F_i| &\le |F_i| &\text{if }\quad i = \nu, \\
    |K \cap F_i|& = 0 &\text{if }\quad i > \nu. 
\end{align*}
for some $\nu \le \mu + 1$.

First, we prove that there exists an $\omega t$-patch $J$ in the syndrome graph $\mathcal G_s$ such that it contains the syndrome of the connected components $F_1, \dots, F_\nu$ of the error which intersect the canonical patch $K$:
\begin{align*}
\bigsqcup_{i=1}^{\nu} \left( \sigma(F_i) \right) \subseteq J.
\end{align*}
Then, we prove that such a patch $J$ has overlap with $\sigma(F)$ of Hamming weight large enough to ensure confinement with respect to the closeness function:
\begin{align*}
    f(\|\sigma(F)\|_{\omega t}) \ge \|F\|_t.
\end{align*}
We will then find the desired bound on $E$ using the initial assumptions $\sigma(F) = \sigma(E)$ and $\|F\|_t = \|E\|_t^{\mathrm{red}}$.
\paragraph*{\textsc{Existence of $J$}}
We build a $\omega t$-patch $J$ on $\mathcal G_s$ as follows. We define $J$ as the disjoint union of the (at most) $\tilde \omega |F_i|$ connected nodes $\Gamma(F_i)$:
\begin{align*}
    J = \bigsqcup_{i=1}^{\nu} \Gamma(F_i).
\end{align*}
Let $\pi$ be the set of edges of a minimum length path in $K$ that connects all its $\nu$ disjoint error components $F_i$. These edges correspond naturally to a set $\pi_s \subseteq \mathcal G_s$ if we associate to the edge $(q_1, q_2)$, the corresponding stabilizer in $\mathcal G_s$, remembering that:
\begin{align*}
q_1 \sim_q q_2 \Leftrightarrow \{q_1, q_2\} \subseteq \supp(s), \quad s \in \mathcal G_s.
\end{align*}
Under this identification, importantly, adjacent edges are mapped into neighbouring syndrome nodes. We add the set $\pi_s$ to $J$. As a result, $J$ is now connected.
For the size of $J$, it holds:
\begin{align*}
|J|& \le \tilde \omega \sum_{i=1}^{\nu} |F_i| + |\pi_s|.
\end{align*}
By hypothesis, $t/2 \ge \|F\|_t=|K \cap F|$ and because $K$ is canonical for $F$, i.e.~$|K \cap F| \le \sum_{i=1}^{\nu} |F_i|$, we have:
\begin{align*}
    \sum_{i=1}^{\nu - 1} |F_i|  &\le \frac{t}{2}.
\end{align*}
Combining property \ref{prop:size_comp} of the closeness weight function and the assumption $\|F\|_t \le \frac{t}{2}$, yields, for any $i$, and $\nu$ in particular,
\begin{align}
\label{eq:bound_comp}
|F_i| \le \frac{t}{2}.
\end{align}
Since $\pi$ has edges in $K$, $\pi_s$ has size at most $|K|$ i.e.~:
\begin{align*}
    |\pi_s| \le t.
\end{align*}
Adding up, we obtain:
\begin{align*}
|J| \le \omega t
\end{align*}
where $\omega = \tilde \omega + 1$. By enlarging $J$ if necessary to include exactly $\omega t$ nodes, and remembering that by construction it is connected, we have found that $J$ is an $\omega t$-patch in $\mathcal G_s$, as desired.

\paragraph*{\textsc{Overlap of $J$ with the error syndrome}}
Eq.~\eqref{eq:bound_comp}  entails in particular that any connected error component $F_1, \dots, F_{\nu}$ that has non-trivial overlap with the patch $K$, has size smaller than $t$ and therefore it has confinement:
\begin{align}
\label{eq:comp_conf}
f(|\sigma(F_i)|) \ge |F_i|.
\end{align}
Because $\sigma$ maps disjoint sets of $\mathcal G_q$ in disjoint sets of $\mathcal G_s$, 
\begin{align}
\sigma \left( \bigsqcup_{i=1}^\nu F_i \right)&= \bigsqcup_{i=1}^{\nu} \sigma(F_i)  \nonumber \\
\Rightarrow \quad \Big\lvert\sigma \left( \bigsqcup_{i=1}^\nu F_i \right)\Big\rvert&= \sum_{i=1}^{\nu} |\sigma(F_i)|. \label{eq:right_summ}
\end{align}
Thus, applying $f$ to each term of the summation of Eq.~\eqref{eq:right_summ} we have:
\begin{align}
\label{eq:key_sum}
   \sum_{i=1}^{\nu} f(|\sigma(F_i)|) &\ge \sum_{i=1}^{\nu} |F_i|.
\end{align}
For the left hand side of Eq.~\eqref{eq:key_sum}, using convexity of $f$ we obtain:
\begin{align*}
    f\left( \sum_{i = 1}^{\nu} |\sigma(F_i)| \right) \ge \sum_{i=1}^{\nu} f(|\sigma(F_i)|),
\end{align*}
for the right hand side of Eq.~\eqref{eq:key_sum} instead, since $K$ is canonical for $F$, it holds that:
\begin{align*}
    \sum_{i=1}^{\nu} |F_i| &\ge |K \cap F|,
\end{align*}

Combining these two bounds for \eqref{eq:key_sum} yields:
\begin{align}
\label{eq:conf_on_sum}
   f\left( \sum_{i = 1}^{\nu} |\sigma(F_i)| \right) \ge \|F\|_{t}.
\end{align}
To obtain the thesis from Eq.~\eqref{eq:conf_on_sum}, we just need to substitute the Hamming weight on the left hand side with the closeness weight $\|\cdot \|_{\omega t}$.  
By construction, for $J$ it holds that:
\begin{align}
\label{eq:wt_1}
 |J \cap \sigma(F)|\ge \sum_{i=1}^{\nu} \lvert \sigma(F_i) \rvert.
\end{align}
Moreover, since $J$ is a $\omega t$-patch:
\begin{align}
\label{eq:wt_2}
\|\sigma(F)\|_{\omega t}&\ge |J \cap \sigma(F)|
\end{align}
Using the monotonicity of $f$ and combining Eq.~\eqref{eq:wt_1}, \eqref{eq:wt_2} and \eqref{eq:conf_on_sum} yields:
\begin{align*}
f(\|\sigma(F)\|_{\omega t})    \ge \|F\|_t.
\end{align*}
\paragraph*{\textsc{Conclusion}}
Because $F$ is an error set equivalent to $E$ i.e.~$\sigma(F)=\sigma(E)$, such that $\|F\|_t = \|E\|_t^{\mathrm{red}}$, we conclude:
\begin{align*}
f(\|\sigma(E)\|_{\omega t}) \ge \|E\|_{t}^{\mathrm{red}}.
\end{align*}
for $\omega = \tilde \omega+ 1$. 

\end{proof}

Lem.~\ref{lem:patches} in particular entails that the closeness weight is in fact a sensible quantity to look at when dealing with errors on confined codes.

We now introduce the Stochastic Shadow decoder. The difference between this variant and the one previously presented (Def.~\ref{def:ball_decoder}) is on the weight functions used. While the standard/Adversarial Shadow decoder tries to minimise the Hamming weight of the residual error, the Stochastic Shadow decoder attempts to keep under control its closeness.
\begin{defin}[Stochastic Shadow decoder]
\label{def:shadow}
The Stochastic Shadow decoder has variable parameters $0< \alpha \le 1$, and $0< \beta, \gamma \in \mathbb Z$. Given an observed syndrome $S = \sigma(E) + S_e$ where $S_e \subseteq \mathcal{G}_s$ is the syndrome error, the Stochastic Shadow decoder of parameters $(\alpha, \beta, \gamma)$ performs the following 2 steps:
\begin{enumerate}
\item Syndrome repair: find $S_r$ of minimum $\gamma$-closeness $\| S_r\|_{\gamma}$ such that $S + S_r$ belongs to the $(\alpha, \beta)\tshadow$ of the code, where
\begin{equation*}
(\alpha, \beta)\tshadow = \{\sigma(E) \text{ s.t. }\| E \|_{\beta} \le \alpha \beta\}.
\end{equation*}
\item Qubit decode: find $E_r$ of minimum $\beta$-closeness $\| E_r \|_\beta$ such that $\sigma(E_r)= S + S_r$.
\end{enumerate}
We call $R = E + E_r$ the residual error.
\end{defin}
We have the following promise on the Stochastic Shadow decoder, which mirrors the results of Lem.~\ref{lem:ball_decoder} for the Adversarial Shadow decoder.
\begin{lemma}\label{lem:shadow_sto}
Consider a stabiliser code that has $(t, f)$-confinement and qubit degree $\le \omega -1$. Provided that the original error pattern $E$ has $\| E\|_t^{\mathrm{red}} \le t/2$, on input of the observed syndrome $S = \sigma(E) + S_e$, the residual error $R$ left by the Stochastic Shadow decoder of parameter $(\frac{1}{2}, t, \omega t)$ satisfies:
\begin{align}
  \| R\|_t^{\mathrm{red}} \leq f(2\| S_e\|_{\omega t})  .  
\end{align}
\end{lemma}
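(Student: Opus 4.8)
The plan is to reprise the two-step argument from the proof of Lemma~\ref{lem:ball_decoder}, but with the $\beta$-closeness weight (at scale $\beta = t$ on the qubit graph and $\gamma = \omega t$ on the syndrome graph) in place of the Hamming weight throughout, and with Lemma~\ref{lem:patches} --- closeness preserves confinement --- in place of the bare confinement inequality. I would rely on the elementary properties of $\|\cdot\|_{\beta}$ collected in Lemma~\ref{lemma:bcloseness} (positivity, monotonicity, the triangle inequality, and the size-versus-component bound), on linearity of $\sigma(\cdot)$, and on the fact, established in Sec.~\ref{sec:preliminaries}, that $\sigma$ maps disjoint non-connected qubit sets to disjoint syndrome sets.

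First I would extract the consequences of the decoder's definition. Since the syndrome-repair step lands $S + S_r$ in the $(1/2,t)\tshadow = \{\sigma(E') : \|E'\|_t \le t/2\}$, there is an error of $t$-closeness at most $t/2$ with that syndrome, and as the qubit-decode step returns the $E_r$ of minimum $t$-closeness with $\sigma(E_r) = S + S_r$, we obtain $\|E_r\|_t \le t/2$. The hypothesis $\|E\|_t^{\mathrm{red}} \le t/2$ says that $\sigma(E)$ itself lies in the $(1/2,t)\tshadow$, so $S_r = S_e$ is an admissible syndrome repair; minimality of $\|S_r\|_{\omega t}$ then forces $\|S_r\|_{\omega t} \le \|S_e\|_{\omega t}$. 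By linearity, $\sigma(R) = \sigma(E) + \sigma(E_r) = S_e + S_r$, so the triangle inequality for $\|\cdot\|_{\omega t}$ gives
\begin{align*}
\|\sigma(R)\|_{\omega t} \le \|S_e\|_{\omega t} + \|S_r\|_{\omega t} \le 2\|S_e\|_{\omega t}.
\end{align*}
Feeding $R$ into Lemma~\ref{lem:patches} and then invoking monotonicity of $f$ would close the argument: $\|R\|_t^{\mathrm{red}} \le f(\|\sigma(R)\|_{\omega t}) \le f(2\|S_e\|_{\omega t})$.

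The step I expect to be the main obstacle is discharging the hypothesis of Lemma~\ref{lem:patches}, namely that the residual $R$ has reduced $t$-closeness at most $t/2$: writing $R \sim F + E_r$ for a minimum-closeness representative $F$ of $E$, the triangle inequality only yields $\|R\|_t^{\mathrm{red}} \le \|F\|_t + \|E_r\|_t \le t$, which is a factor of two short. To bridge this I would sharpen the estimate by passing to a minimum-closeness representative of $R$ and decomposing it into connected components: the individual pieces inherited from $F$ and from $E_r$ each have size at most $t/2$ by the size/component property of Lemma~\ref{lemma:bcloseness}, so bare $(t,f)$-confinement applies to any residual component of size at most $t$, and re-running the canonical-$\beta$-patch construction of Lemma~\ref{lem:patches} componentwise --- using convexity of $f$ to recombine the componentwise confinement inequalities --- keeps the induced patch on the syndrome graph within $\omega t$ nodes. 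In the downstream application to Thm.~\ref{thm:threshold} this is further controlled because the fresh errors and residuals are kept below $t/4$, so the residual genuinely stays inside the window where Lemma~\ref{lem:patches} is valid. The delicate bookkeeping lies in the borderline case where a residual component has size close to $t$; once that is handled, the remainder is the routine chaining of the inequalities above.
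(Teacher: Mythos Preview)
Your approach is essentially identical to the paper's: invoke Lem.~\ref{lem:patches} to transfer confinement to the closeness weight, then rerun the argument of Lem.~\ref{lem:ball_decoder} with $\|\cdot\|_t$ and $\|\cdot\|_{\omega t}$ in place of the Hamming weight. The paper's proof does exactly this and no more.

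The factor-of-two obstacle you flag is real, and the paper does \emph{not} address it. The paper simply bounds $\|E+E_r\|_t \le \|E\|_t + \|E_r\|_t \le t$ via the triangle inequality and then invokes monotonicity of $f$ together with Lem.~\ref{lem:patches} to conclude $\|E+E_r\|_t^{\mathrm{red}} \le f(2\|S_e\|_{\omega t})$ --- without checking that the hypothesis $\|R\|_t^{\mathrm{red}}\le t/2$ of Lem.~\ref{lem:patches} holds for the residual. So you are being more careful than the paper here, not less. Your observation that the downstream proof of Thm.~\ref{thm:threshold} sidesteps the issue (since there both the fresh error and the previous residual are kept below $t/4$, forcing $\|R\|_t^{\mathrm{red}}\le t/2$) is exactly the right resolution in context; your componentwise sharpening is plausible but the paper does not pursue it, and for the purposes of matching the paper's argument you need not either.
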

\begin{proof}
Thanks to Lem.~\ref{lem:patches}, we know that the closeness function preserves confinement. The proof is then a straightforward adaption of the proof of Lem.~\ref{lem:ball_decoder}, where the Hamming weight has to be substituted with $\|\cdot\|_t$ on error sets and $\|\cdot\|_{\omega t}$ on syndrome sets respectively. We here briefly report the proof for completeness.

Assume $\|E\|_t^{\mathrm{red}}\le  t/2$, and let $E_r$ be the output of the qubit decode step. By construction, it has minimum $t$-closeness among the errors with syndrome $S+S_r$, which belongs to the  $(\frac{1}{2}, t)\tshadow$ of the code. In particular, $\|E_r\|_t\le \frac{t}{2}$. 
We recall that the $+$ operation between two error sets in $\mathcal G_q$ denotes the support of the product of the two corresponding Pauli operators and, as such, it holds that (see Sec.~\ref{sec:preliminaries}):
\begin{align*}
E + E_r \subseteq E \cup E_r.
\end{align*}
By the property of the closeness weight function, this entails:
\begin{align*}
\|E + E_r\|_t \le \|E \cup E_r\|_t \le \|E \| + \| E_r\|_t. 
\end{align*}
The linearity of the syndrome function $\sigma(\cdot)$ yields:
\begin{align*}
\sigma(E + E_r) = \sigma(E) + \sigma(E_r) = S_e + S_r.
\end{align*}
Since $S_e$ is a possible solution of the syndrome repair step $\|S_r\|_{\omega t} \le \|S_e\|_{\omega t}$ and so,
\begin{align*}
\|S_e + S_r\|_{\omega t} &\le \|S_e\|_{\omega t} + \|S_r\|_{\omega t} \\
&\le 2 \|S_e\|_{\omega t}.
\end{align*}
Combining this and the monotonicity of $f$ gives:
\begin{align*}
\|E+E_r\|^{\mathrm{red}}_t \le f(2\|S_e\|_{\omega t}).
\end{align*}
\end{proof}
Lem.~\ref{lem:patches} tells us that the Stochastic Shadow decoder succeeds whenever the $t$-closeness of the error is small enough. Importantly then, if we are able to bound the probability of the complement of this event, we could infer an upper bound on the failure probability of our decoder.  This is the subject of the next Section.
\subsection{Percolation results and proof of Theorem \ref{thm:threshold}}
\label{sec:percolation}
We consider error sets $E$ on the qubit graph $\mathcal G_q$ and error sets $S_e$ on the syndrome graph $\mathcal G_s$ and we assume that the probability of observing a particular error is at most exponential in its size. Formally, we use this error model
\begin{defin}[Local stochastic error]
An error set $E$ on a graph $\mathcal G$ is local stochastic of parameter $p$ if, for all set of nodes $G \subseteq \mathcal G$, holds:
\begin{align*}
 \mathbb P(G \subseteq E)\le p^{|G|}.
 \end{align*}
\end{defin}
We then use some results in percolation theory, Lem.~\ref{lem:number_of_patches} and Lem.~\ref{lem:lemma_prob} below, to understand the probability that errors of closeness linear in the patch size (i.e.~$\|E\|_{\beta}=\alpha \beta $ for some $0< \alpha \le 1$) occur when the noise is local stochastic. 
\begin{lemma}[Corollary 28 of \cite{leverrier18}]\label{lem:number_of_patches}
Let $\mathcal G$ be a graph with vertex degree upper bounded by $z$. Then the number $N_{\beta}$ of connected components of size $\beta$ ($\beta$-patches) satisfies 
\begin{align*}
N_{\beta}\le |\mathcal G| \Phi^\beta
\end{align*}
where $\Phi = (z-1)\left(1+\frac{1}{z-2}\right)^{z-2}$.
\end{lemma}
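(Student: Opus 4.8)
The plan is to bound $N_\beta$ by passing through three successively more tractable counting problems: rooted connected subsets, then rooted subtrees, then subtrees of the universal covering tree, for which a closed form is available via Lagrange inversion.

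\textbf{Step 1: reduce to a rooted count.} For a vertex $v$ let $a_\beta(v)$ denote the number of connected $\beta$-subsets of $\mathcal G$ containing $v$. Since each connected $\beta$-subset $K$ contains exactly $\beta$ vertices, $\beta N_\beta = \sum_{v\in\mathcal G} a_\beta(v) \le |\mathcal G|\max_v a_\beta(v)$, hence $N_\beta \le \tfrac{1}{\beta}|\mathcal G|\max_v a_\beta(v)$. Next, every connected $\beta$-subset $K\ni v$ admits a spanning tree, and the vertex set of that tree recovers $K$; therefore $a_\beta(v)$ is at most the number $\tau_\beta(v)$ of $\beta$-vertex subtrees of $\mathcal G$ containing $v$.

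\textbf{Step 2: dominate by the universal cover.} Root each such subtree at $v$. Because every vertex of $\mathcal G$ has at most $z$ neighbours, each non-root vertex of a rooted subtree has at most $z-1$ children, and the root has at most $z$; so a standard lifting/DFS argument injects the rooted subtrees of $\mathcal G$ of size $\beta$ into the $\beta$-vertex subtrees containing the root of the infinite tree $\mathcal T_z$ in which the root has $z$ children and every other vertex has $z-1$ children. Writing $C_\beta$ for the number of such subtrees of $\mathcal T_z$, we get $\tau_\beta(v)\le C_\beta$, hence $N_\beta\le \tfrac{1}{\beta}|\mathcal G|\,C_\beta$.

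\textbf{Step 3: compute $C_\beta$ and estimate.} Let $f(x)=\sum_\beta c_\beta x^\beta$ generate the subtrees hanging from a non-root vertex; such a subtree is the vertex together with an independent inclusion-or-not of each of its $z-1$ child subtrees, so $f=x(1+f)^{z-1}$, and Lagrange inversion gives $c_\beta=\tfrac1\beta\binom{(z-1)\beta}{\beta-1}$. Since the root has $z$ children, its series is $x(1+f)^z$, and a second application of Lagrange inversion yields $C_\beta=\tfrac{z}{\beta-1}\binom{(z-1)\beta}{\beta-2}$. Bounding $\binom{(z-1)\beta}{\beta-2}\le\binom{(z-1)\beta}{\beta}$ and applying Stirling's formula in the sharp form $n!=\sqrt{2\pi n}(n/e)^n e^{\theta_n}$ gives $\binom{(z-1)\beta}{\beta}\le c(z)\,\beta^{-1/2}\,\Phi^\beta$ with $\Phi=\tfrac{(z-1)^{z-1}}{(z-2)^{z-2}}=(z-1)\bigl(1+\tfrac1{z-2}\bigr)^{z-2}$ and a constant $c(z)<1$; feeding this together with the two $1/\beta$-type prefactors into $N_\beta\le\tfrac1\beta|\mathcal G|C_\beta$ gives $N_\beta\le|\mathcal G|\Phi^\beta$, with the finitely many small values of $\beta$ (where the polynomial prefactors do not yet dominate) checked directly, e.g.\ $\beta=1$ counts singletons and $\beta=2$ counts edges, both trivially $\le|\mathcal G|\Phi^\beta$ since $\Phi\ge z-1$.

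\textbf{Main obstacle.} The difficulty is obtaining the precise constant $\Phi\approx e(z-1)$ rather than a cruder bound: encoding a subtree by a depth-first walk (a Catalan ``shape'' plus $\le z-1$ choices per descent) only gives $(4(z-1))^\beta$, so one genuinely needs the exact enumeration of $C_\beta$ above together with a sufficiently precise version of Stirling's formula; reconciling the resulting polynomial prefactors with the small-$\beta$ regime is the only other piece of bookkeeping.
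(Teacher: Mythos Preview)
The paper does not prove this lemma: it is quoted verbatim as Corollary~28 of Ref.~\cite{leverrier18} and used as a black box, so there is no in-paper argument to compare against. Your proof is the standard one for this kind of bound and is essentially correct; the reduction to rooted subtrees, the domination by the infinite tree $\mathcal T_z$, and the Fuss--Catalan count via Lagrange inversion are all sound.

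Two small cleanups would tighten the write-up. First, Stirling is heavier machinery than you need: the elementary entropy bound $\binom{n}{k}\le 2^{nH(k/n)}$ with $n=(z-1)\beta$, $k=\beta$ gives $\binom{(z-1)\beta}{\beta}\le\Phi^\beta$ \emph{exactly}, with no asymptotic error term and no separate small-$\beta$ regime to patch. Second, once you rewrite $C_\beta=\frac{z\beta}{((z-2)\beta+2)((z-2)\beta+1)}\binom{(z-1)\beta}{\beta}$ and observe that the rational prefactor is at most $1$ for all $z\ge3$ and $\beta\ge2$, you get $C_\beta\le\Phi^\beta$ uniformly, so $N_\beta\le\tfrac1\beta|\mathcal G|C_\beta\le|\mathcal G|\Phi^\beta$ without any ``finitely many small $\beta$'' caveat (and $\beta=1$ is the trivial vertex count). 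This removes the only loose end in your argument, namely that the number of exceptional $\beta$ in your Stirling route grows with $z$.
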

\begin{lemma}\label{lem:lemma_prob}
Let $\mathcal{G}$ be a graph with vertex degree upper bounded by $z$. Let $t$ be a positive integer and $0< \alpha \le 1$. Then there exists $p_{\mathrm{th}} > 0$ such that, for local stochastic errors $E$ of parameter $p < p_{\mathrm{th}}$, we have
\begin{align}
\label{eq:lemma_prob}
\mathbb P (\|E\|_t \ge \alpha t) \le \frac{|\mathcal G|}{1-2^{h(\alpha)/\alpha}p}\left( \frac{p}{p_{\mathrm{th}}}\right)^{\alpha t},   
\end{align}
where $h(\alpha)=\alpha \log_2(\frac{1}{\alpha}) + (1-\alpha) \log_2 \frac{1}{1-\alpha}$ is the binary entropy function.
\end{lemma}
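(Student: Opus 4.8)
The plan is to bound $\mathbb P(\|E\|_t \ge \alpha t)$ by a union bound over all subsets of the graph that could witness a large closeness. By Definition~\ref{def:closeness}, $\|E\|_t \ge \alpha t$ means there exists a connected $t$-patch $K$ (a $\beta$-patch with $\beta = t$) with $|K \cap E| \ge \alpha t$. So I would first decompose the event: for each $t$-patch $K$ and each subset $G \subseteq K$ with $|G| \ge \alpha t$, the event ``$G \subseteq E$'' occurs with probability at most $p^{|G|} \le p^{\alpha t}$ by the local stochastic assumption. Summing over $G$ inside a fixed $K$ with $|K| = t$ contributes a factor $\sum_{j \ge \alpha t} \binom{t}{j} p^j$, which I would bound using $\binom{t}{j} \le 2^{t h(j/t)}$ and the fact that $h$ is increasing on $[0,1/2]$ (assuming $\alpha \le 1/2$; the case $\alpha > 1/2$ is handled by a coarser bound or is irrelevant because closeness is automatically $\le t$). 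This yields something like $2^{t h(\alpha)} \sum_{j \ge \alpha t} p^j \le \frac{2^{t h(\alpha)} p^{\alpha t}}{1-p}$; reindexing $j = \alpha t + i$ gives the cleaner $\frac{(2^{h(\alpha)/\alpha})^{\alpha t} p^{\alpha t}}{1 - 2^{h(\alpha)/\alpha} p}$ form appearing in \eqref{eq:lemma_prob}, provided $2^{h(\alpha)/\alpha} p < 1$.

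Next I would multiply by the number of $t$-patches. By Lemma~\ref{lem:number_of_patches}, this number $N_t$ is at most $|\mathcal G|\, \Phi^t$ where $\Phi = (z-1)(1+\tfrac{1}{z-2})^{z-2}$. Putting the two pieces together:
\begin{align*}
\mathbb P(\|E\|_t \ge \alpha t) \le |\mathcal G|\, \Phi^t \cdot \frac{(2^{h(\alpha)/\alpha} p)^{\alpha t}}{1 - 2^{h(\alpha)/\alpha} p} = \frac{|\mathcal G|}{1 - 2^{h(\alpha)/\alpha} p}\left(\Phi^{1/\alpha}\, 2^{h(\alpha)/\alpha}\, p\right)^{\alpha t}.
\end{align*}
Defining $p_{\mathrm{th}} := \Phi^{-1/\alpha} 2^{-h(\alpha)/\alpha}$ (which is a strictly positive constant depending only on $z$, $\alpha$, but not on $t$ or $|\mathcal G|$), the bracketed quantity becomes $(p/p_{\mathrm{th}})^{\alpha t}$, matching \eqref{eq:lemma_prob} exactly. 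One then checks that $p < p_{\mathrm{th}}$ indeed implies $2^{h(\alpha)/\alpha} p < 1$ since $\Phi \ge 1$, so the geometric-series step is valid and the denominator is positive.

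The main obstacle, and the part requiring the most care, is the combinatorial counting: I am summing over $t$-patches $K$ and then over witness subsets $G \subseteq K$, which double-counts configurations where $E$ meets many patches, but this only inflates the union bound so it is harmless for an upper bound. A subtler point is getting the binomial-to-entropy estimate in the right regime — the bound $\binom{t}{j} \le 2^{t h(j/t)}$ with $h$ monotone up to $1/2$ forces the implicit restriction $\alpha \le 1/2$ (consistent with the decoder parameters $(\tfrac12, t, \omega t)$ used in Lemma~\ref{lem:shadow_sto}), and one must be slightly delicate about whether $\alpha t$ is an integer, rounding up where needed without changing the exponent. Everything else is bookkeeping; the structure follows the Peierls-type argument already used in \cite{leverrier18, grospellier2019constant}, so I would lean on those references for the analogous steps rather than re-deriving the percolation machinery from scratch.
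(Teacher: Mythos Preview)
Your proposal is correct and follows essentially the same approach as the paper: union bound over all $t$-patches $K$, inner sum $\sum_{m\ge\alpha t}\binom{t}{m}p^m$ bounded via the Stirling/entropy estimate to get $\frac{(2^{h(\alpha)/\alpha}p)^{\alpha t}}{1-2^{h(\alpha)/\alpha}p}$, then multiply by $N_t\le|\mathcal G|\Phi^t$ from Lemma~\ref{lem:number_of_patches} and set $p_{\mathrm{th}}=(\Phi\,2^{h(\alpha)})^{-1/\alpha}$. Your additional remarks on the $\alpha\le 1/2$ regime, integrality of $\alpha t$, and the check that $p<p_{\mathrm{th}}$ forces $2^{h(\alpha)/\alpha}p<1$ via $\Phi\ge 1$ are points the paper leaves implicit.
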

\begin{proof}
The proof is a straightforward adaption of the proof of Theorem 17 in \cite{leverrier18}.
By expanding the left hand side of Eq.~\eqref{eq:lemma_prob}, we find:
\begin{align*}
\mathbb P(\|E\|_{t} \ge \alpha t ) &= \mathbb P (\exists K \, t\text{-patch : } |K \cap E| \ge \alpha t)\\
&\le \sum_{K \text{ is a $t$-patch}}\mathbb P(|K \cap E| \ge \alpha t).
\end{align*}
Observe that, for a $t$-patch $K$, 
\begin{align}
\mathbb P(|K \cap E| \ge \alpha t) &\le \sum_{m \ge \alpha t}\sum_{\substack{K' \subseteq K\\|K'|=m}}\mathbb P\left(K \cap E = K'\right) \nonumber\\
    &\le \sum_{m \ge \alpha t} \sum_{\substack{K' \subseteq K\\|K'|=m}}\mathbb P\left(K' \subseteq E\right)\nonumber\\
    &\le \sum_{m \ge \alpha t}\sum_{\substack{K' \subseteq K\\|K'|=m}}p^m\nonumber\\
    &\le \sum_{m \ge \alpha t}\binom{t}{m}p^m. \label{eq:last_patches}
\end{align}
By Stirling's approximation\footnote{$\binom{n}{k} \simeq 2^{n h(k/n)}$, where $h(x) = x \log_2\frac{1}{x}+ (1-x)\log _2\frac{1}{(1-x)}$ is the binary entropy function.}:
\begin{align}\label{eq:prob_patches}
    \mathbb P\left( |K \cap E| \ge \alpha t\right) \le \frac{(2^{h(\alpha)/\alpha}p)^{\alpha t}}{1-2^{h(\alpha)/\alpha}p}.
\end{align}
Combining Eq.~\eqref{eq:last_patches}, \eqref{eq:prob_patches} and Lem.~\ref{lem:number_of_patches} yields:
\begin{align*}
    \mathbb P\left(\|E\|_t \ge \alpha t \right) &\le N_{t} \frac{(2^{h(\alpha)/\alpha}p)^{\alpha t}}{1-2^{h(\alpha)/\alpha}p}\\
    & \le \frac{|\mathcal G|}{1-2^{h(\alpha)/\alpha}p}\cdot \left(\Phi 2^{h(\alpha)}p^\alpha\right)^{t}
\end{align*}
By imposing the right hand side to decrease with $t$, we find 
\begin{align*}
p \le \left(\frac{1}{\Phi 2^{h(\alpha)}}\right)^{\frac{1}{\alpha}} \coloneqq p_{\mathrm{th}}
\end{align*}
And in conclusion:
\begin{align*}
    \mathbb P (\|E\|_t \ge \alpha t ) \le \frac{|\mathcal G|}{1-2^{h(\alpha)/\alpha}p} \left(\frac{p}{p_{\mathrm{th}}}\right)^{\alpha t}.
\end{align*}
\end{proof}
Finally, we are able to prove  that there exists a threshold under which the probability of local stochastic errors to be non-correctable via the Stochastic Shadow decoder becomes exponentially small in the system size, provided that the graphs $\mathcal G_s$ and $\mathcal G_q$ have bounded degree and linear confinement.
\begin{proof}[Proof of Thm.~\ref{thm:threshold}]
By Lem.~\ref{lem:shadow_sto}, the residual error left by the Stochastic Shadow decoder on a $(t, f)$-confined code is kept under control provided that 
\begin{align}
\label{eq:cond_sto}
\|E\|_{t} \le \frac{t}{4} \quad \text{and}\quad f(2\|S_e\|_{\omega t})\le \frac{t}{4}. 
\end{align}
If the function $f$ is linear, i.e. $f(x)=\kappa x$ for some $\kappa > 0 \in \mathbb Z$, then conditions \eqref{eq:cond_sto} can be written as
\begin{align}
\|E\|_{t} \le \frac{t}{4} \quad \text{and}\quad \|S_e\|_{\omega t} \le \frac{t}{8\kappa}. 
\end{align}
If the qubit error $E$ is local stochastic of parameter $p$ and the syndrome error $S_e$ is local stochastic of parameter $q$, thanks to Lem.~\ref{lem:lemma_prob}, we obtain:
\begin{align*}
\mathbb P (\|E\|_{t} \ge t/4) &\le \frac{|\mathcal G_q|}{1-2^{4h(\frac{1}{4})}p}\left(\frac{p}{p_{\mathrm{th}}}\right)^{\frac{t}{4}}\\
&\quad \coloneqq C_q |\mathcal G_q| \left(\frac{p}{p_{\mathrm{th}}}\right)^{\frac{t}{4}}
 \end{align*}
 and
\begin{align*}
\mathbb P \left(\|S_e\|_{\omega t} \ge \frac{t}{8 \kappa} \right) &\le \frac{|\mathcal G_s|}{1-2^{8 \omega \kappa h(\frac{1}{8 \omega \kappa})}q}\left(\frac{q}{q_{\mathrm{th}}}\right)^{\frac{t}{8 \omega \kappa}}\\
&\quad \coloneqq C_s |\mathcal G_s|\left(\frac{q}{q_{\mathrm{th}}}\right)^{\frac{t}{8 \omega \kappa}}
\end{align*}
where:
\begin{align*}
&p_{\mathrm{th}}\coloneqq \left( \frac{1}{ \Phi_{q} 2^{h(\frac{1}{4})}}\right)^4 \quad \text{ and }\quad q_{\mathrm{th}}\coloneqq \left(\frac{1}{\Phi_{s} 2^{h(\frac{1}{8\omega \kappa})}}\right)^{8 \omega \kappa} 
\end{align*}
As a result, by Lem.~\ref{lem:shadow_sto}, the residual error is correctable except with probability at most 
\begin{align*}
\max \left\{ C_q |\mathcal G_q| \left(\frac{p}{p_{\mathrm{th}}}\right)^{\frac{t}{4}}, \quad C_s |\mathcal G_s|\left(\frac{q}{q_{\mathrm{th}}}\right)^{\frac{t}{8 \omega \kappa}}\right\}.
\end{align*}
In other words, for local stochastic noise of intensity $p \le p_{\mathrm{th}}$ on the qubits and $q \le q_{\mathrm{th}}$ on the syndrome, the Stochastic Shadow decoder has a sustainable single-shot threshold.
\end{proof}

We conclude by noting that the assumption of linear confinement is key in the proof of Thm.~\ref{thm:threshold}. However, we speculate that the limitations of Thm.~\ref{thm:threshold} are an artefact of our proof and \emph{super linear} confinement is a sufficient condition for a family of codes to exhibit a single-shot threshold. In fact, the existence of a threshold $p_{\mathrm{th}}$ and $q_{\mathrm{th}}$ relies on the bounds given in Lem.~\ref{lem:lemma_prob}. There, it is fundamental that the relation between the chosen size of the patch and the size of the overlap with the error is linear (see Eq.\eqref{eq:last_patches} and Eq.\eqref{eq:prob_patches}). In other words, Lem.~\ref{lem:lemma_prob} states that, if we take $\beta$-patches on the error graph and $\gamma$-patches on the syndrome graph, we are able to estimate the probability that errors have closeness less than $\alpha \beta$ and $\tilde \alpha \gamma$ respectively. By \eqref{eq:cond_sto}, in order to bound the closeness of the residual error left by the Stochastic Shadow decoder, we need:
\begin{align*}
\|S_e\|_{\gamma} \le \frac{1}{2}f^{-1}(\alpha\beta).
\end{align*}
As a consequence, combining this with the requirements of Lem.~\ref{lem:lemma_prob}, entails
\begin{align*}
\gamma = \kappa \left(\frac{1}{2}f^{-1}(\alpha\beta)\right),
\end{align*}
for some positive constant $\kappa$.
In conclusion, building up on the results of Lem.~\ref{lem:lemma_prob}, we either need to prove that confinement is preserved if we take on the syndrome graph patches of size linear in $f^{-1}(\alpha\beta)$ or, using our Lem.~\ref{lem:patches} without modification, that the function is itself linear.
\section{Qubit placement on a 3D lattice}
\label{app:3dspace}
Here we detail how to embed a 3D product code on a cubic lattice, where qubits sit on edges, $Z$-stabilisers on vertices, $X$-stabilisers on faces and metachecks on cells.

Let $C^0$ and $C^1$ be two vector spaces over $\mathbb{F}$ with basis $\mathcal{B}^0 = \{e^0_1, \dots, e^0_n\}$ and $\mathcal{B}^1 = \{e^1_1, \dots, e^1_m\}$ respectively. Given a linear map from $C^0$ into $C^1$, it can be represented as a $m \times n$ matrix $\delta$ over $\mathbb F$ such that its action on the elements of the basis $\mathcal B^0$ is given by:
\begin{align}
\label{eq:basis_delta}
    \delta: C^0 &\longrightarrow C^1 \nonumber\\
    e^0_i &\longmapsto \delta e^0_i= \sum_{\alpha = 1}^m \delta_{\alpha, i} e^1_\alpha.
\end{align}
Expression \eqref{eq:basis_delta} allows us to write the support of vectors in $ \delta(\mathcal{B}^0) = \left\{\delta e^0_i\right\}_{i}$ in a compact form. In fact, the support of $\delta e^0_i$ is the subset of $\mathcal{B}^1$:
\begin{align*}
    \supp(\delta e_i^0)= \left\{e_\alpha^1 : \delta_{\alpha, i}\neq 0\right\}_\alpha.
\end{align*}
Since basis vectors are uniquely identified by their index, we can compactly write \eqref{eq:basis_delta} as a relation $*$ on the set of indices of the basis $\mathcal{B}^0$ and $\mathcal{B}^1$:
\begin{align}
\label{eq:indexing_1}
    \{1, \dots, n\} &\longrightarrow \{1, \dots, m\} \nonumber\\
    \kappa &\longrightarrow {\kappa}^*, 
\end{align}
where
\begin{align*}
    {\kappa}^* &= \{\eta : \delta_{\eta, \kappa} \neq  0\}_{\eta}.
\end{align*}    
Similarly, the transpose $\delta^T$ of the matrix $\delta$ induces a map from $C^1$ to $C^0$ which is defined on $\mathcal B^1$ as
\begin{align*}
    \delta^T: C^1 &\longrightarrow C^0 \nonumber\\
    e_\alpha^1 &\longmapsto \delta^T e_\alpha^1 = \sum_{i = 1}^n \delta_{\alpha, i} {e_i^0},
\end{align*}
yields the relation on indices
\begin{align}
\label{eq:indexing_2}
    \{1, \dots, m\} &\longrightarrow \{1, \dots, n\} \nonumber \\
    \eta &\longrightarrow {\eta}^*,
\tag{\ref*{eq:indexing_1} T}    
\end{align}
where
\begin{align*}
    \eta^* &= \{\kappa : \delta_{\eta,\kappa}\neq 0\}_\kappa.
\end{align*}

Referring to the chain complex $\mathcal{C}$ described in Sec.~\ref{sec:3d_product}, we choose bases $\mathcal {B}^{\tau}_{\ell}=\left\{e^{\ell_\tau}_{\iota}\right\}_{\iota}$ of $C^{\tau}_{\ell}$ for $\tau = 0, 1$ and $\ell = A,B,C$. We accordingly fix matrix representations of the maps $\delta_A, \delta_B$ and $\delta_C$; with slight abuse of notation, we indicate with the same symbol the $m_{\ell} \times n_{\ell}$ matrix representation of a map and the map itself.
We indicate with $i, j, k$ indices of $\mathcal{B}^0_A, \mathcal{B}^0_B$ and $\mathcal{B}^0_C$ respectively and with $\alpha, \beta, \gamma$ indices of $\mathcal{B}^1_A, \mathcal{B}^{1}_A, \mathcal{B}^{1}_C$.
Since we deal with 3-fold tensor product spaces (e.g. $C^0_A \ox C^0_B \ox C^0_C$) we consider triplets $(i, j, k)$ of valid indices; we indicate with $(i^*, j, k )$ the set of indices $\{(\eta, j, k) : \eta \in i^* \}$, and similarly for any possible triplet combination of starred ($\iota^*$) and non starred ($\iota$) indices.

As illustrated in Sec.~\ref{sec:3d_product}, when defining a CSS code on the chain complex $\mathcal{C}$, the following relations hold:
\begin{enumerate}
    \item basis elements of $\mathcal{C}_0$ are in one-to-one correspondence with a generating set of $Z$-stabilisers;
    \item basis elements of the vector space $\mathcal{C}_1$ are in one-to-one correspondence with the qubits;
    \item basis elements of the vector space $\mathcal C_2$ are in one-to-one correspondence with a generating set of $X$-stabilisers;
    \item basis elements of $\mathcal C_3$ are in one-to-one correspondence with a generating set of metachecks.
\end{enumerate}
Combining these with \eqref{eq:indexing_1} and \eqref{eq:indexing_2}, we obtain the relations reported in Table \ref{table:indexing}. More precisely, we choose as bases for the spaces $\mathcal{C}_3, \mathcal{C}_2, \mathcal{C}_1$ and $\mathcal{C}_0$ the product bases obtained by combining $\mathcal{B}^0_{\ell=A, B, C}$ and $\mathcal{B}^1_{\ell=A, B, C}$ and we index qubits, stabilisers and metachecks on $\mathcal{C}$ accordingly. Equivalently, basis vectors are labelled with consecutive integers so as to preserve the ordering induced by the bases.
\begin{table}[htb!]
\begin{equation*}
\begin{tabular}{c|c|c}
\textsc{Object } & \textsc{Indexing} &\textsc{Basis vector}\\
\hline
qubits & 
$\begin{aligned} (\alpha, j, k)\\
\\
(i, \beta, k)\\
 \\
(i, j, \gamma)
\end{aligned}$ &
$\begin{aligned}
\begin{pmatrix}e^{A_1}_{\alpha} \ox e^{B_0}_{j} \ox e^{C_0}_{k}, &0, &0\end{pmatrix}\\
\\
\begin{pmatrix}0,& e^{A_0}_{i} \ox e^{B_1}_{\beta} \ox e_{k}^{C_0},& 0\end{pmatrix}\\
\\
\begin{pmatrix}0,& 0,& e_{i}^{A_0}\ox e_{j}^{B_0} \ox e_{\gamma}^{C_1}\end{pmatrix}
\end{aligned}$
\\
\hline
$X$-stabilisers & 
$\begin{aligned}
(\alpha, \beta, k)\\
\\
(\alpha, j, \gamma)\\
\\
(i, \beta, \gamma)
\end{aligned}$
& 
$\begin{aligned}
\delta_2^T \begin{pmatrix}
e_{\alpha}^{A_1} \ox e_{\beta}^{B_1} \ox e_{k}^{C_0}, & 0, & 0\end{pmatrix}
\\
\\
\delta_2^T \begin{pmatrix}
0, & e_{\alpha}^{A_1} \ox e_{j}^{B_0} \ox e_{\gamma}^{C_1}, &0
\end{pmatrix}
\\
\\
\delta_2^T \begin{pmatrix}
0, & 0, & e_{i}^{A_0} \ox e_{\beta}^{B_1} \ox e_{\gamma}^{C_1}\end{pmatrix}
\end{aligned}$\\
\hline
$Z$-stabilisers & $(i, j, k)$
&
$\delta_1(e_{i}^{A_0} \ox e_{j}^{B_0} \ox e_{k}^{C_0})$
\\
\hline
metacheck & $(\alpha, \beta, \gamma)$ &
$\delta_3^T(e_{\alpha}^{A_1} \ox e_{\beta}^{B_1}\ox e_{\gamma}^{C_1})$
\\
\hline
\end{tabular}
\end{equation*}
\caption{Notation and correspondences between objects of the chain complex $\mathcal C$.}
\label{table:indexing}
\end{table} 

We use the relations of Table \ref{table:indexing} to visualise the chain complex $\mathcal C$ on a 3D cubic lattice. In order to do so, we first fix a coordinate system
\begin{center}
\begin{tikzpicture}
\node[anchor=east] at (0,1,0) {$O$};
\draw[thick,->] (0,1,0) -- (1,1,0)
 node[anchor=north west]{$x$};
\draw[thick,->] (0,1,0) -- (0,0,0) node[anchor=north east]{$y$};
\draw[thick,->] (0,1,0) -- (0,1,-1) node[anchor=south west]{$z$};
\end{tikzpicture}
\end{center}
where $O$ is the origin. Since basis vectors are labelled with integers (the $i$th basis vector corresponds to the integer $i$, and vice versa) we can build a 3D grid of points where any point corresponds to a basis vector of $\mathcal{C}_0, \mathcal{C}_1, \mathcal{C}_2$ or $\mathcal{C}_3$. More precisely we fix a set of valid coordinates for points in the grid:
\begin{enumerate}
    \item integer coordinates $(z, y, x) = (i, j,k)$ for $i = 1, \dots, n_a$, $j = 1, \dots, n_b$, and $k = 1, \dots, n_c$;
    \item half-integers coordinates $(z, y, x) = (\alpha + 0.5, \,\beta +0.5,\, \gamma + 0.5)$ for $\alpha = 1, \dots, m_a$, $\beta = 1, \dots, m_b$ and $\gamma = 1, \dots, m_c$;
    \item the origin has coordinates $O = (1, 1, 1)$.
\end{enumerate} 
In this way, any point with valid coordinates uniquely identifies a basis vector (and therefore an object in the chain complex, see Table~\ref{table:indexing}). For example:
\begin{enumerate}
    \item the point $(1, 4, 2)$ corresponds to the basis vector $(e^{A_0}_{1}\ox e^{B_0}_{4} \ox e^{C_0}_{2}) \in  \mathcal{C}_0$ ($Z$-stabilisers);
    \item the point $(1.5, 4, 2)$ corresponds to the basis vector $(e^{A_1}_{1}\ox e^{B_0}_{4}\ox e^{C_0}_{2},\, 0,\, 0) \in \mathcal{C}_1$ (qubits);
    \item the point $(1.5, 4, 2.5)$ corresponds to the basis vector $(0,\, e^{A_1}_{1}\ox e^{B_0}_{4}\ox e^{C_1}_{2},\, 0) \in \mathcal{C}_2$ ($X$-stabilisers);
        \item the point $(1.5, 4.5, 2.5)$ corresponds to the basis vector $(e^{A_1}_{1}\ox e^{B_1}_{4}\ox e^{C_1}_{2}) \in \mathcal{C}_3$ (metachecks);

\end{enumerate}

We draw an edge for any qubit of the code defined on $\mathcal C$. Qubits are in one-to-one correspondence with basis element of $\mathcal C_1$ and therefore are of three different types: $(v,\, 0, \, 0)$, $(0,\, v,\, 0)$ and $(0,\, 0,\, v)$. Accordingly, we draw edges of three different types as detailed in Table \ref{table:qubit} (see also Fig.~\ref{fig:lattice_1}). In other words, any point with two integer entries and one half integer entry is the middle point of an edge of unit length, which corresponds to a qubit. 
In this way we obtain a cubic lattice with (possibly) some missing edges. 

\begin{table}[htb!]
\begin{tabular}{C|C}
\textsc{Qubit}    & \textsc{Edge}\\
\hline
\text{transverse qubits}&\text{edges parallel to the
$z$ axis}\\
(\alpha, j, k)&
\text{middle point: } (\alpha+0.5, j, k )\\
\hline
\text{vertical qubits} &\text{edges parallel to the $y$ axis}\\
(i, \beta, k)& 
\text{middle point: } 
(i, \beta + 0.5, k)\\
\hline
\text{horizontal qubits}&\text{edges parallel to the
$x$ axis}\\
(i, j, \gamma)&
\text{middle point: } (i, j, \gamma + 0.5) 

\end{tabular}
\caption{Correspondence between qubits in $\mathcal C_1$ and edges of the lattice.}
\label{table:qubit}
\end{table}

Points with two half-integer and one integer entries do not intersect any edge and sit in the center of a (possibly incomplete) square face. These points correspond to $X$-stabilisers which we therefore identify with faces. Given a triplet corresponding to one of such a point, the associated $X$-stabiliser has support contained in the set of edges which are parallel to the edges of the square, forming a cross in a plane. $X$-stabilisers, like qubits, are of three different types, being in one-to-one correspondence with basis elements of $\mathcal{C}_2$. Namely, each $X$-stabiliser in $\mathcal C_2$ has support in two out of three type of qubits: transverse-vertical, transverse-horizontal or vertical-horizontal (see Table~\ref{table:correspondence_lattice} and Fig.~\ref{fig:cross2d}).

Points with integer coordinates are associated to $Z$-stabilisers; these are points where endpoints of edges intersect. The $Z$-stabiliser corresponding to $(i, j, k)$ has support on a 3D cross of edges/qubits centered in $(z, y, x) = (i, j, k)$ (see Table~\ref{table:correspondence_lattice} and Fig.~\ref{fig:cross3d}).

Points with half-integer coordinates sit in the center of a cube. To any such cube is associated a metacheck in $\mathcal{C}_3$. Metachecks have support on a 3D cross of faces/$X$-stabilisers parallel to the faces of the cube they are associated to (see Table~\ref{table:correspondence_lattice}).

\begin{figure}
\subfloat[]{\includegraphics[scale=0.35]{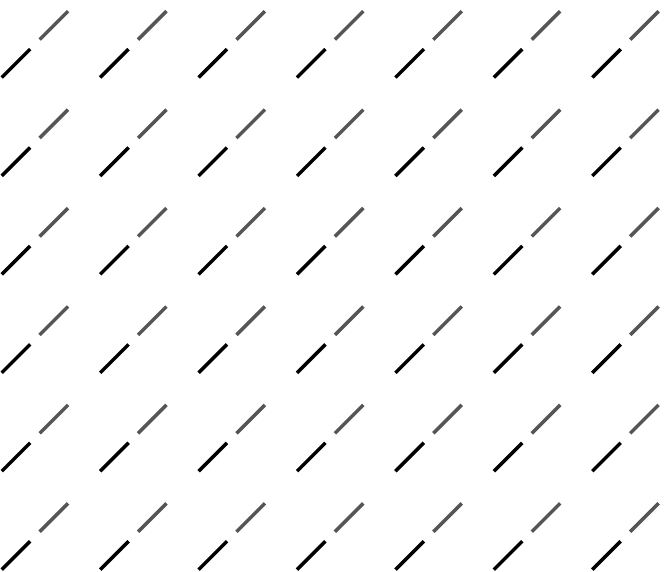}}
\hfill
\subfloat[]{\includegraphics[scale=0.35]{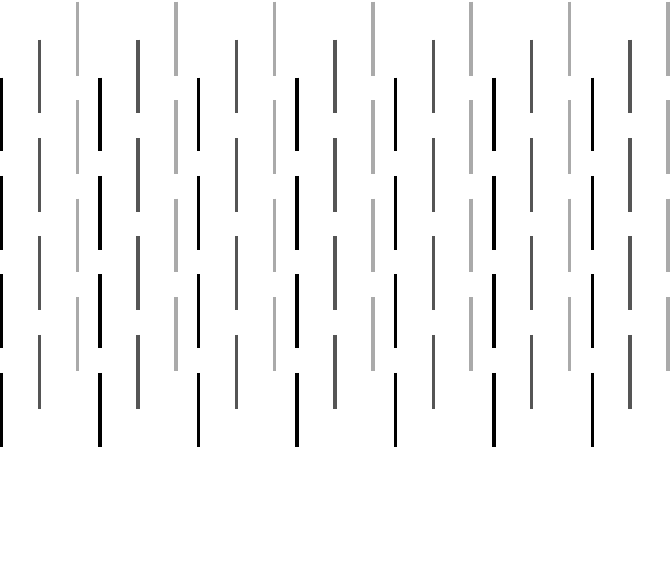}}
\hfill
\subfloat[]{\includegraphics[scale=0.35]{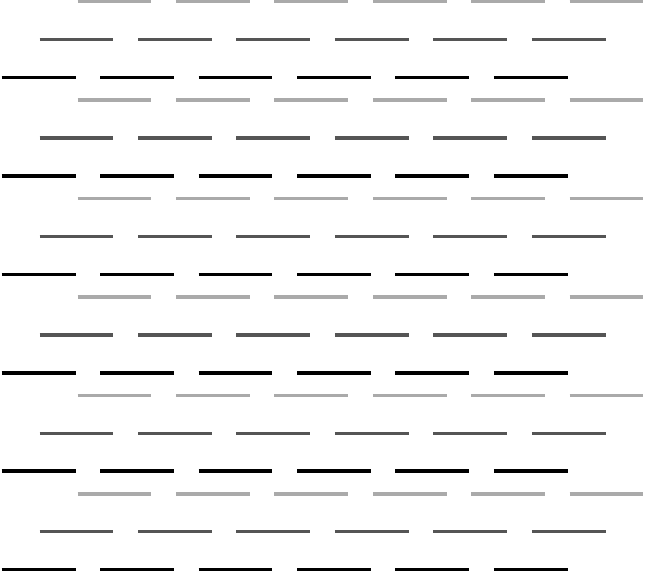}}
\hfill
\subfloat[]{\includegraphics[scale=0.35]{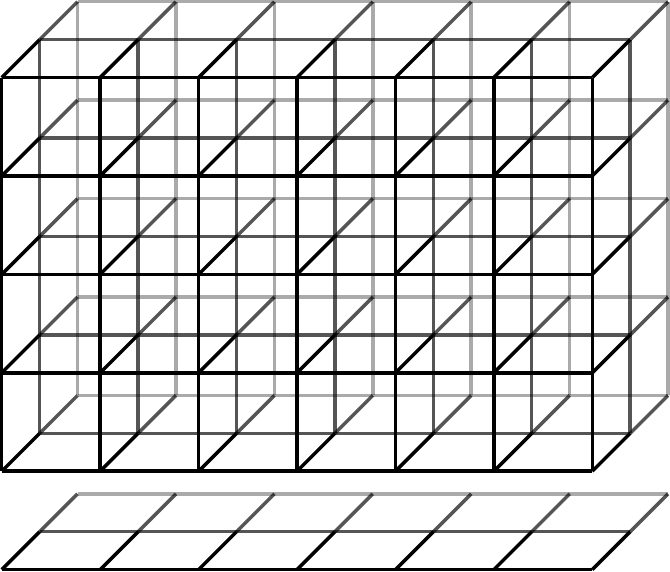}}
\hfill
\caption{Graphical representation of the cubic lattice associated to a 3D product code where the seed matrices $\delta_A$, $\delta_B$, $\delta_C$ have size $2\times 3$, $4 \times 6$, and $6 \times 7$ respectively. In (a), (b) and (c) only transversal, vertical and horizontal edges are depicted. In (d) we can see the complete lattice obtained by matching the origin $O=(1,1,1)$ of the three lattices of edges.  
}
\label{fig:lattice_1}
\end{figure}
\begin{figure}
\subfloat[]{\includegraphics[scale=0.35]{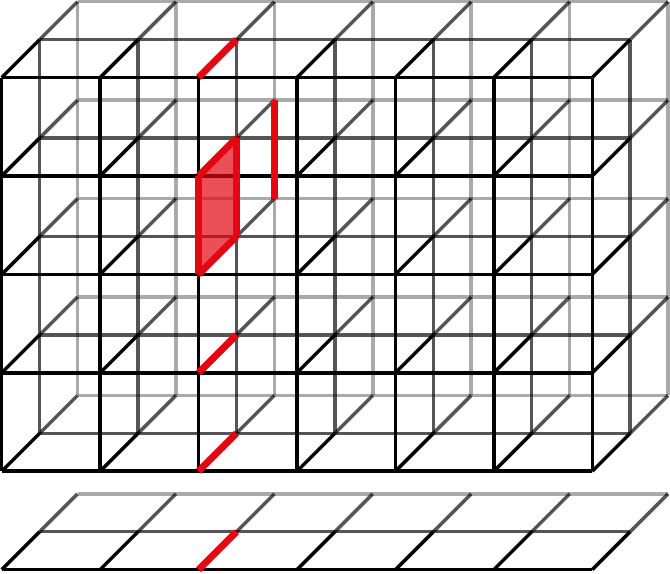}
}
\hfill
\subfloat[]{\includegraphics[scale=0.35]{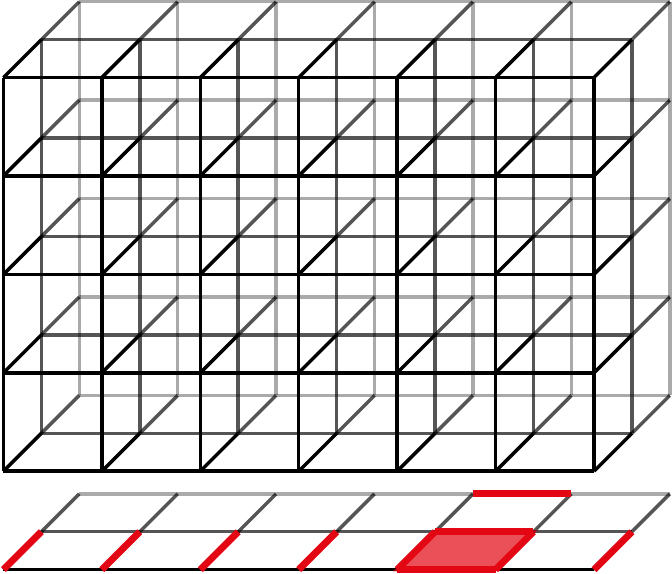}}
\hfill
\subfloat[]{\includegraphics[scale=0.35]{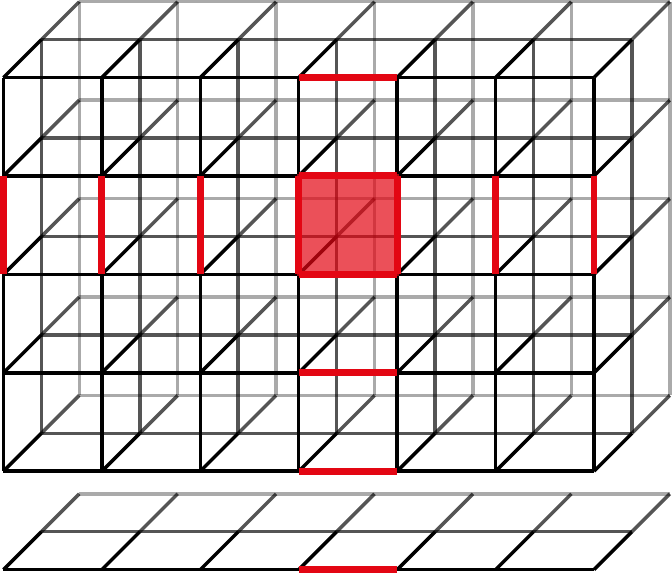}}
\caption{
$X$-stabilisers on the lattice described in Fig.~\ref{fig:lattice_1}.
(a) $X$-stabiliser corresponding to the transversal-vertical square indexed by $(\alpha, \beta, k)=(1, 2, 3)$; its support is contained in the cross of transversal and vertical qubits (red edges) in the $yz$-plane $\{x=3\}$. The crossing has coordinates $(z, y, x) = (1.5, 2.5, 3)$ and sits in the center of the red square. (b) $X$-stabiliser corresponding to the transversal-horizontal square indexed by $(\alpha, j, \gamma)=(1, 6, 5)$; its support is contained in the cross of transversal and vertical qubits (red edges) in the $xz$-plane $\{y=6\}$. The crossing has coordinates $(z, y, x) = (1.5, 6, 5.5)$ and sits in the center of the red square.
(c) $X$-stabiliser corresponding to the vertical-horizontal square indexed by $(i, \beta, \gamma)=(1, 4, 2)$; its support is contained in the cross of transversal and vertical qubits (red edges) in the $xy$-plane $\{z=1\}$. The crossing has coordinates $(z, y, x) = (1, 2.5, 4.5)$ and sits in the center of the red square.
}
\label{fig:cross2d}
\end{figure}
\begin{figure}
\includegraphics[scale=0.35]{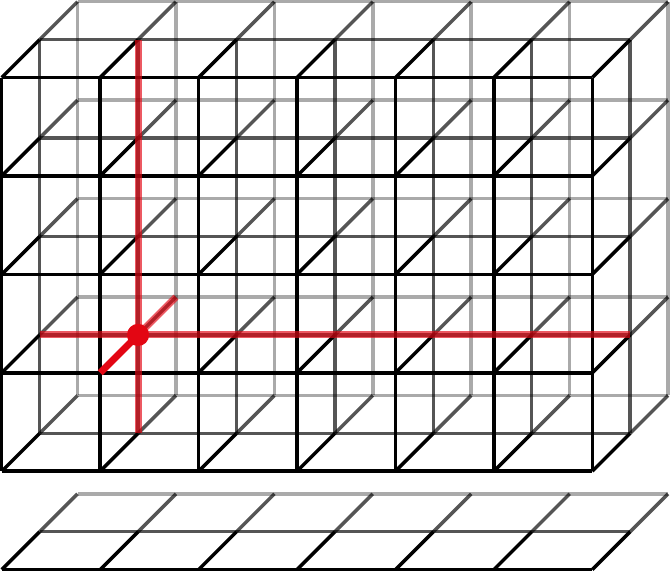}
\includegraphics[scale=0.35]{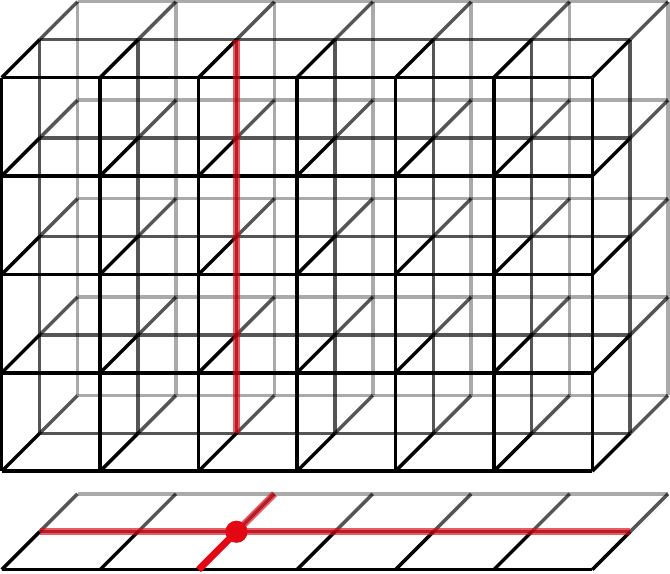}
\caption{
$Z$-stabilisers on the lattice described in Fig.~\ref{fig:lattice_1}. (a) $Z$-stabiliser corresponding to the vertex indexed by $(i, j, k)=(2, 4, 2)$; its support is contained in the cross of qubits highlighted as red edges in the picture. The crossing has coordinates $(z, y, x) = (2, 4, 2)$ (red circle). (b) $Z$-stabiliser corresponding to the vertex indexed by $(i, j, k)=(3, 6, 2)$; its support is contained in the cross of qubits highlighted as red edges in the picture. The crossing has coordinates $(z, y, x) = (3, 6, 2)$ (red circle).}
\label{fig:cross3d}
\end{figure}

\begin{table*}[htb!]
\begin{tabular}{c|C|R}
\textsc{Operator} &\textsc{Type} & \textsc{Support}\\
\hline
\multirow{6}{*}{$X$-stabilisers}&
\text{transverse-vertical square}&\text{transverse qubits: } (\alpha, \beta^*, k)\\
&(\alpha, \beta, k)&
\text{vertical qubits: } (\alpha^*, \beta, k)\\
\cline{2-3}

&
\text{transverse-horizontal square}&\text{transverse qubits: } (\alpha, j, \gamma^*)\\
&(\alpha, j, \gamma)
&\text{horizontal qubits: } (\alpha^*, j, \gamma)\\

\cline{2-3}

& \text{vertical-horizontal square}&\text{vertical qubits: } (i, \beta, \gamma^*)\\
&(i, \beta, \gamma)& \text{horizontal qubits: } (i, \beta^*, \gamma)\\
\hline
\multirow{3}{*}{$Z$-stabilisers}
&\multirow{3}{*}{$(i, j, k)$} &
\text{transverse qubits: } (i^*, j, k)\\
&& \text{vertical qubits: } (i, j^*, k)\\
&&\text{horizontal qubits: } (i, j, k^*)\\
\hline
\multirow{3}{*}{metachecks}&  \multirow{3}{*}{$(\alpha, \beta, \gamma)$}&
       \text{transverse-vertical faces: } (\alpha, \beta, \gamma^*)\\
&&\text{transverse-horizontal faces: } (\alpha, \beta^*, \gamma)\\
&&\text{vertical-horizontal faces: } (\alpha^*, \beta, \gamma)
\end{tabular}
\caption{Correspondence between operators of the chain complex $\mathcal C$, their type as geometric objects on the lattice, and their support. Note that the support of $X$ and $Z$ stabilizers is a set of qubits/edges while the support of metachecks is a set of $X$-stabilisers/faces.
\label{table:correspondence_lattice}}
\end{table*}

\subsection{On geometric locality}

\label{app:3dspace_locality}
One interesting feature of the embedding of 3D product codes on a cubic lattice is that it preserves some locality properties of the seed matrices $\delta_A, \delta_B$ and $\delta_C$. Thus, if we were able to place qubits on a 3D cubic lattice we could use the 3D homological product to build LDPC codes with nearest neighbour interactions.

Let $\delta$ be an $m\times n$ matrix with row/column indices $\alpha \in \{1, \dots, m\}$ and $i \in \{1, \dots, n\}$ respectively and let $\nu = \max\{m, n\}$.
We say that $\delta$ is \textit{geometrically $\rho$-local on a torus} if for any row and any column index:
\begin{align}
\label{eq:geom_local_torus}
    \alpha^* \subseteq U_{\rho,\, \nu}(\alpha) \quad\text{ and }\quad i^* \subseteq U_{\rho,\,\nu}(i),
\end{align}
where $U_{\rho, \,\nu}(\zeta)$ is any set of $\rho$ consecutive integers modulo $\nu$ which contains $\zeta$. In particular, we require the $\alpha$th rows to have support on columns with index that is \textit{close} to the integer $\alpha$, and similar for columns. The reason for this  choice will be clear when we prove Prop.~\ref{prop:locality}. Briefly, conditions \eqref{eq:geom_local_torus} says that $\delta$ is geometrically $\rho$-local on a torus if: (1) any of its rows has support on a bounded box of $\rho$-columns, and the box for row $\alpha+1$ is a right shift of the box for row $\alpha$; (2) any of its columns has support on a bounded box of $\rho$ rows, and the box for column $i +1 $ is a downward shift of the box for column $i$. In particular, if we associate row/column indices with integer points on a circle of $\nu$ points:
\begin{center}
\begin{tikzpicture}[scale=0.5, transform shape]
\foreach \ang\lab\anch in {90/1/north, 45/2/{north east}, 0/3/east, 270/ /south, 180/{
\nu-1}/west, 135/\nu/{north west}}{
  \draw[fill=black] ($(0,0)+(\ang:3)$) circle (.08);
  \node[anchor=\anch] at ($(0,0)+(\ang:2.8)$) {$\lab$};
}
\foreach \ang\lab in {90/1,45/2,180/{n-1},135/n}{
  \draw[-,shorten <=7pt, shorten >=7pt] ($(0,0)+(\ang:3)$) arc (\ang:\ang-45:3);
}
\draw[-,shorten <=7pt] ($(0,0)+(0:3)$) arc (360:325:3);
\draw[-,shorten >=7pt] ($(0,0)+(305:3)$) arc (305:270:3);
\draw[-,shorten <=7pt] ($(0,0)+(270:3)$) arc (270:235:3);
\draw[-,shorten >=7pt] ($(0,0)+(215:3)$) arc (215:180:3);
\foreach \ang in {310,315,320,220,225,230}{
  \draw[fill=black] ($(0,0)+(\ang:3)$) circle (.02);}
\end{tikzpicture}
\end{center}
locality means that any set $\alpha^*/i^*$ is contained in a closed interval on the circle such that (i) it has length at most $\rho$ and (ii) it contains the point $\alpha/i$. 
For instance, the degenerate parity check matrix of the repetition code:
\begin{equation*}
\begin{pmatrix}
1 & 1 & 0 & 0 & 0\\
0 & 1 & 1 & 0 & 0\\
0 & 0 & 1 & 1 & 0\\
0 & 0 & 0 & 1 & 1\\
1 & 0 & 0 & 0 & 1\\
\end{pmatrix}
\end{equation*}
is $\rho$-local for $\rho = 2$. 

A closely related notion of locality on a torus is geometric locality in Euclidean space. We say that an $m \times n$ matrix is \textit{geometrically $\rho$-local in Euclidean space} if for any row and column index:
\begin{align}
\label{eq:geom_local_eu}
    \alpha^* \subseteq U_{\rho}(\alpha) \quad\text{ and }\quad i^* \subseteq U_{\rho}(i),
\end{align}
where $U_{\rho}(\zeta)$ is any set of $\rho$ consecutive integer in $[1, \dots, \nu]$, $\nu=\max\{m, n\}$, which contains $\zeta$.
In this case we can graphically picture locality by associating row/column indices with integer points on a line of $\nu$ points:
\begin{center}
\begin{tikzpicture}[scale=1, transform shape]
\draw[-] (1, 0) -- (3, 0); 
\draw[dotted,thick] (3, 0) -- (6, 0); 
\foreach \x [count = \xi] in {1,2,3, , {\nu-1}, \nu}{
\node at (\xi, +0.5) {$\x$};
\draw [fill] (\xi, 0) circle [radius=0.05];
};
\end{tikzpicture}
\end{center}
A matrix is local if any set $\alpha^*/i^*$ is contained in a closed interval on the line such that (i) it has length at most $\rho$ and (ii) it contains the point $\alpha/i$. For example, the full-rank parity check matrix of the repetition code:
\begin{equation*}
\begin{pmatrix}
1 & 1 & 0 & 0 & 0\\
0 & 1 & 1 & 0 & 0\\
0 & 0 & 1 & 1 & 0\\
\end{pmatrix}
\end{equation*}
is $2$-local.

Geometric locality also applies to codes other than the repetition code. For instance, the matrix 
\begin{equation*}
    H = \begin{pmatrix}
1 & 1 & 0 & 0 & 0 & 0 & 0 & 0 & 0\\
0 & 1 & 0 & 0 &0 & 0 & 1 & 0 & 0\\
0 & 0 & 1 & 1 &0 & 0 & 0 & 0 & 0\\
0 & 0 & 0 & 1 &0 & 0 & 0 & 1 & 0\\
0 & 0 & 0 & 0 &1 & 1 & 0 & 0 & 0\\
0 & 0 & 0 & 0 &0 & 1 & 0 & 0 & 1\\
1 & 0 & 1 & 0 &1 & 0 & 0 & 0 & 0\\
    \end{pmatrix},
\end{equation*}
obtained via the edge augmentation procedure presented in \cite{roffe2020decoding} is $7$-local on a torus. We remark that geometric locality is a property of matrices. For example, the matrix with same row as $H$ but different ordering $\{1, 2, 3, 4, 7, 5, 6\}$, is geometrically $5$-local on torus. 

In general, geometric locality is a relaxation of the locality property of the repetition code which only allows for interactions between pairs of nearest bits. Importantly, as Prop.~\ref{prop:locality} states, it is preserved by the 3D product construction. For this reason, geometrically local classical codes, combined with the 3D product construction, could be good candidates in the quest to quantum local codes beyond the toric and the surface codes. 

The remainder of this Appendix is organised as follows. We first state Prop.~\ref{prop:locality} and prove that geometric locality is preserved by the 3D product construction. We conclude by observing how this proof provides an explicit identification of the 3D toric and surface codes as 3D product codes.  

To ease the notation, in the following we will shortly refer to codes as geometrically local, dropping the specification on a torus/in Euclidean space. When considering qubits on a cubic lattice, the lattice would be on a torus or in Euclidean space depending on the definition of locality that applies to the seed matrices.
\begin{proposition}
\label{prop:locality}
Consider the 3D product code obtained from three seed matrices geometrically $\rho$-local.
If its qubits are displayed on the edges of a cubic lattice as detailed in Sec.~\ref{app:3dspace}, then it is geometrically $\rho$-local in the following sense:
\begin{enumerate}
\item any $X$-stabiliser generator has weight at most $2\rho$ with support contained in a 2D box of size $\rho \times \rho$,
\item any $Z$-stabiliser generator has weight at most $3\rho$ with support contained in a 3D box of size $\rho \times \rho \times \rho$.

    \end{enumerate}
\end{proposition}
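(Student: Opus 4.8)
The plan is to read the support of each stabiliser generator off Table~\ref{table:correspondence_lattice} and then bound its diameter in each coordinate direction using the geometric $\rho$-locality of the seed matrices. Recall that geometric $\rho$-locality of a matrix $\delta$ means every row support $\alpha^*$ and every column support $i^*$ lies in a window $U_\rho(\cdot)$ of $\rho$ consecutive integers (on the line, or modulo the relevant $\nu$ on the torus) that also contains the corresponding index; in particular all row and column supports have cardinality at most $\rho$ and diameter at most $\rho-1$.

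I would treat the $X$-stabilisers first. Take the transverse--vertical face indexed by $(\alpha,\beta,k)$; by App.~\ref{app:3dspace} its support is the planar cross
\[
\{(\alpha,j,k):j\in\beta^*\}\ \cup\ \{(i,\beta,k):i\in\alpha^*\},
\]
contained in $\{x=k\}$, where $\beta^*$ is a row support of $\delta_B$ and $\alpha^*$ a row support of $\delta_A$. The two arms are qubits from different direct summands of $\mathcal C_1$, hence disjoint, so the weight equals $|\beta^*|+|\alpha^*|\le 2\rho$. For the box: the transverse qubits have constant $z$-coordinate $\alpha+\tfrac12$ and $y$-coordinates indexed by $\beta^*$, while the vertical qubits have constant $y$-coordinate $\beta+\tfrac12$ and $z$-coordinates indexed by $\alpha^*$; since $\beta\in\beta^*\subseteq U_\rho(\beta)$, the set $\beta^*\cup\{\beta+\tfrac12\}$ has diameter $<\rho$, and likewise for $\alpha^*\cup\{\alpha+\tfrac12\}$, so the cross fits in a $\rho\times\rho$ box inside that plane. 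The transverse--horizontal and vertical--horizontal faces are handled identically after permuting the roles of $(\delta_A,\delta_B,\delta_C)$.

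For the $Z$-stabilisers I would run the analogous argument on the vertex $(i,j,k)$, whose support by App.~\ref{app:3dspace} is the $3$D cross
\[
\{(\alpha,j,k):\alpha\in i^*\}\ \cup\ \{(i,\beta,k):\beta\in j^*\}\ \cup\ \{(i,j,\gamma):\gamma\in k^*\},
\]
with $i^*,j^*,k^*$ now the column supports of $\delta_A,\delta_B,\delta_C$ respectively. The three arms are pairwise disjoint, giving weight $|i^*|+|j^*|+|k^*|\le 3\rho$; and each arm varies in exactly one coordinate (transverse in $z$, vertical in $y$, horizontal in $x$), so bounding each coordinate extent by $\rho$ exactly as above places the whole support in a $\rho\times\rho\times\rho$ box. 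On the torus every window and box is read modulo the corresponding $\nu$, and nothing else changes.

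The only point needing a little care --- and it is bookkeeping rather than a genuine obstacle --- is handling the half-integer midpoint offsets of the qubits, i.e.\ checking that adjoining a coordinate like $\alpha+\tfrac12$ to the integer window $U_\rho(\alpha)$ keeps the diameter below $\rho$; this holds because $U_\rho(\alpha)$ already contains $\alpha$ and has diameter $\rho-1$, so the combined diameter is at most $\rho-\tfrac12$. Once Prop.~\ref{prop:locality} is established, the identification of the 3D toric and surface codes as 3D product codes follows by checking that the circulant and full-rank parity-check matrices of the repetition code are $2$-local seed matrices, which makes their cubic-lattice embeddings the standard ones.
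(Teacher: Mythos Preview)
Your argument is essentially the same as the paper's: read off the cross-shaped support from Table~\ref{table:correspondence_lattice}, bound each arm by the $\rho$-locality of the corresponding seed matrix, and observe that the arms share the central index so the whole support fits in a $\rho\times\rho$ or $\rho\times\rho\times\rho$ box. The paper only writes out the $Z$-stabiliser case explicitly and appeals to symmetry for the $X$-case, whereas you do both; one small slip to fix is the assertion ``$\beta\in\beta^*$'' --- in general a row index need not lie in its own row support, and what you actually need (and what the locality definition gives) is that both $\beta$ and $\beta^*$ lie in the common window $U_\rho(\beta)$.
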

\begin{proof}
We prove the condition on the $Z$-stabilisers, the proof for the $X$-stabiliser being similar.

Let $S_z$ be a $Z$-stabiliser generator. As reported in Table~\ref{table:indexing} and \ref{table:correspondence_lattice}, it is the image of a basis vector $(e^{A_0}_{i}\ox e^{B_0}_{j} \ox e^{C_0}_{k}) \in \mathcal{C}_0$ via the map $\delta_1$ and it corresponds to the point on the lattice of integers coordinates $(i, j, k)$. By exploiting the choice of the basis for the spaces $\mathcal{C}_0$ and $\mathcal{C}_1$ and some linear algebra :
\begin{align*}
\delta_1 (e^{A_0}_i\ox e^{B_0}_{j}\ox e^{C_0}_{k})&=\sum_{ \alpha \in i^*} (e^{A_1}_{\alpha} \ox e^{B_0}_{j} \ox e^{C_0}_{k}, \, 0, \,0 ) \\
& + \sum_{ \beta \in j^*} (0, \, e^{A_0}_{i} \ox e^{B_1}_{\beta} \ox e^{C_0}_k,\,0)\\
& + \sum_{ \gamma \in k^*} (0, \,0,\, e^{A_0}_{i} \ox e^{B_0}_{j} \ox e^{C_1}_{\gamma}).
\end{align*}
Again using Table~\ref{table:indexing}, the set of indices which corresponds to this sum of basis vectors of $\mathcal{C}_1$ can be written as:
\begin{align*}
\mathrm{indices(S_z)} =     \quad\{(\alpha, j, k) : \alpha \in i^*\}&\\
\cup\, \{(i, \beta, k) : \beta \in j^*\} &\\
\cup\, \{(i, j, \gamma) : \gamma \in k^*\}&.
\end{align*}
Following the nomenclature of qubits as traversal, vertical and horizontal, we see that the three components of the support of $S_z$ given above respect this division and therefore we can write:
\begin{align*}
    \mathrm{indices}(S_z)= \mathrm{indices}(S_z)_{t}\cup \mathrm{indices}(S_z)_{v} \cup \mathrm{indices}(S_z)_{h}.
\end{align*}
Using \eqref{eq:geom_local_eu} (or \eqref{eq:geom_local_torus}), we see that the sets $\mathrm{indices}(S_z)_t$, $\mathrm{indices}(S_z)_v$ and $\mathrm{indices}(S_z)_h$ correspond respectively to the three sets of consecutive coordinates on the lattice:
\begin{align*}
\Pi_t &= \{(\bar i, j, k)   : \bar i \in U_{\rho}(i) \},  \\
\Pi_v &=\{(i, \bar j, k)   : \bar j \in U_{\rho}(j)\},\\ \Pi_h &= \{(i, j, \bar k)   : \bar k \in U_{\rho}(k)\}.
\end{align*}
Since we required $\zeta \in U_{\rho}(\zeta)$ (or $\zeta \in U_{\rho, \nu}(\zeta)$), the three sets of coordinates intersect on the point $(z, y, x) = (i, j, k)$. Moreover, all three intervals $\Pi_t, \Pi_v, \Pi_h$ have length at most $\rho$. Combining these, we find that the support of $S_z$ indexed by $(i, j, k)$ is contained in in a $\rho \times \rho \times \rho$ neighborhood of the point $(z, y, x) = (i, j, k)$ and has cardinality at most $3\rho$. In other words, we have showed that the support of $S_z$ is contained on a 3D cross of qubits with arms of length at most $\rho$.
\end{proof}

As previously said, the 3D toric and planar codes are particular instances of the 3D product construction. Furthermore, it is well known that they are local on a torus and in the Euclidean space respectively. To see how this is the case, we remind the reader that the 3D toric code is obtained by choosing as seed matrices the degenerate parity check matrix of the repetition code in the standard basis. For matrix size $L \times L$, it holds that
\begin{align*}
    \{1, \dots, L\} &\longleftrightarrow \{1, \dots, L\}\\
    i &\longrightarrow \{i, {i+ 1 \bmod{L}}\}\\
    \{\alpha, {\alpha+1 \bmod{L}}\} &\longleftarrow \alpha
\end{align*}
Therefore, stabilisers have support on pairs of consecutive edges, and it is straightforward to see that they have the usual shape:
\begin{enumerate}
    \item $Z$-stabilisers have support on edges incident to a vertex;
    \item $X$-stabilisers have support on edges on the boundary of a square face;
    \item metachecks have support on the faces of a cube.
\end{enumerate}
A similar argument holds for the 3D surface code, which is local in Euclidean space.


\section{All 3D product codes have X-confinement}
\label{app:3dconfinement}
In this Section we prove Thm.~\ref{thm:3d} which states that all 3D product codes have $X$-confinement. Our proof follows the proof of soundness for 4D codes given in \cite{campbell2019theory} with some minor adaptions and it is here reported for completeness.

First, we show that an opportunely chosen length-2 chain complex has confined maps. Secondly, we explain how to use this chain complex as a building block of the length-3 chain complex $\mathcal C$ described in Sec.~\ref{sec:3d_product}. Lastly, we prove that the confinement property is preserved and thus 3D codes defined on $\mathcal{C}$ as explained in Sec.~\ref{sec:3d_product} have $X$-confinement.

Let $\delta_A : C_A^0 \rightarrow C_A^1$ and $\delta_B: C_B^0 \rightarrow C_B^1$ be two length-1 chain complexes. We consider the length-2  product complex $\tilde{\mathcal{C}}$ defined as:~\\
\begin{tikzpicture}[transform canvas={scale=0.95}]
\begin{tikzcd}
& C_A^1 \ox C_B^1 \ar[from=2-1]   \ar[from=2-3, swap]&\\
C_A^1 \ox C_B^0 && C_A^0 \ox C_B^1 \\
& C_A^0 \ox C_B^0 \ar[to=2-1] \ar[to=2-3, swap]&
\end{tikzcd}\hspace{0.5 cm}
\begin{tikzcd}
\mathcal{\tilde C}_2\\
\mathcal{\tilde C}_1 \ar[to=1-1, swap, "\tilde \delta_{1}"]\\
\mathcal{\tilde C}_0 \ar[to=2-1, swap, "\tilde \delta_{0}"]\\
\end{tikzcd}
\end{tikzpicture}
where
\begin{align*}
\tilde \delta_0 &= \begin{pmatrix}\delta_A \ox \1& \1 \ox \delta_B\end{pmatrix},\\
\tilde \delta_1 &= \begin{pmatrix}\1 \ox \delta_B\\\delta_A \ox \1\end{pmatrix};
\end{align*}
We first show that the map $\tilde \delta_0$ has confinement. 
\begin{lemma}\label{lemma:confinement_1}
$\tilde \delta_0$ has $(t, f)$-confinement where $t = \min\{d_A, d_B\}$ and $f(x) = x^2/4$.
\end{lemma}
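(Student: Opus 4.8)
The plan is to unpack the structure of the map $\tilde\delta_0 = \begin{pmatrix}\delta_A\ox\1 & \1\ox\delta_B\end{pmatrix}$ acting on $\tilde{\mathcal C}_1 = (C_A^1\ox C_B^0)\oplus(C_A^0\ox C_B^1)$ and to reduce the confinement bound to a statement about ranks of ``rows'' and ``columns'' of a suitable matrix/tensor picture. Write a vector in $\tilde{\mathcal C}_1$ as a pair $(x,y)$ with $x\in C_A^1\ox C_B^0$ and $y\in C_A^0\ox C_B^1$, so that $\tilde\delta_0(x,y) = (\delta_A\ox\1)\,x + (\1\ox\delta_B)\,y \in C_A^0\ox C_B^0$. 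The key observation is that $(\delta_A\ox\1)x$ is supported on a union of ``$A$-lines'' of the $C_A^0\times C_B^0$ grid determined by the $B$-columns on which $x$ is supported, and similarly $(\1\ox\delta_B)y$ is supported on a union of ``$B$-lines'' determined by the $A$-rows on which $y$ is supported. I would first reduce to a reduced-weight representative: given an error $e=(x,y)$ with $|e|^{\mathrm{red}}\le t = \min\{d_A,d_B\}$, pick $(x,y)$ achieving the reduced weight, and argue that each of $x$ and $y$, when restricted to a single $A$-line (resp.\ $B$-line), has weight below $d_A$ (resp.\ $d_B$) unless it can be further reduced by adding an element of $\ker(\delta_A\ox\1)$ or $\ker(\1\ox\delta_B)$; this is where the hypothesis $t=\min\{d_A,d_B\}$ enters, since a line-restriction of weight $\ge d_A$ would be either reducible or would contradict $|e|^{\mathrm{red}}\le t$.

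Next I would count syndromes. Suppose $x$ is supported on $a$ distinct $B$-columns and $y$ on $b$ distinct $A$-rows (after the reduction above, and discarding rows/columns where the contribution is killed). Then $\tilde\delta_0(x,y)$ can only be nonzero inside the $a\times b$ sub-grid formed by the intersections of those $A$-lines and $B$-lines plus the diagonal contributions; more precisely, on each of the $b$ relevant $A$-rows the vector $(\delta_A\ox\1)x$ restricted there is a fixed nonzero codeword-syndrome (nonzero because $x$ restricted to that column is nonzero and has weight $<d_A$, hence $\delta_A$ applied to it is nonzero), so $|\sigma(e)|$ picks up at least one contribution per relevant line, but cancellation between the two blocks could occur only on the $a\times b$ intersection grid. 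Summing the uncancellable parts gives $|\sigma(e)| \ge$ (something like) $\max\{a,b\}$ up to constants, while $|e|^{\mathrm{red}} \le a\cdot(\text{max column weight of }x) + b\cdot(\text{max row weight of }y)$, and here I would bound the per-line weights in a balanced way: both factors are at most on the order of $|\sigma(e)|$, so $|e|^{\mathrm{red}}\le |\sigma(e)|^2/4$ after optimizing the product $a\cdot c \le ((a+c)/2)^2$ against the AM–GM bottleneck. The precise constant $1/4$ comes from splitting $|\sigma(e)|$ into the two types of line-contributions and applying $uv\le (u+v)^2/4$.

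The main obstacle I expect is controlling cancellations between the two summands $(\delta_A\ox\1)x$ and $(\1\ox\delta_B)y$ inside the $a\times b$ intersection grid: naively the syndrome could be made very small by arranging these two contributions to almost perfectly cancel, which would break the bound. Handling this requires showing that after cancellation the error can be ``localized'' — i.e.\ that a cancellation on a line forces that line's contribution to be removable by a stabilizer (an element of $\im\tilde\delta_1 = \ker\tilde\delta_2$ is irrelevant here, but an element of $\ker\tilde\delta_0$ allows reducing $(x,y)$), contradicting minimality of the reduced-weight representative. Concretely I would argue: if $(\delta_A\ox\1)x$ and $(\1\ox\delta_B)y$ agree on all of a given $A$-line, then along that line $x$ looks like a full $\delta_A$-image of weight $\ge d_A$ or we can cancel; iterating, any line on which total cancellation happens can be stripped off by modifying $(x,y)$ within $\ker\tilde\delta_0$ without increasing weight, so in the reduced representative no line is fully cancelled and the counting above goes through. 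This mirrors the soundness argument for 4D codes in~\cite{campbell2019theory}, and I would follow that template, substituting the length-2 product complex $\tilde{\mathcal C}$ and the distances $d_A,d_B$ in place of the analogous objects there.
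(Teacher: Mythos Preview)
Your proposal rests on a misreading of the domain and codomain of $\tilde\delta_0$. In the chain complex $\tilde{\mathcal C}$ the map $\tilde\delta_0$ goes from $\tilde{\mathcal C}_0 = C_A^0 \ox C_B^0$ (a single tensor-product space) \emph{to} $\tilde{\mathcal C}_1 = (C_A^1 \ox C_B^0)\oplus(C_A^0 \ox C_B^1)$; the error is therefore a single vector $v\in C_A^0\ox C_B^0$, described after reshaping by one matrix $V$, and the syndrome is the pair $(\delta_A V,\,V\delta_B^T)$. You have placed the error in $\tilde{\mathcal C}_1$ as a pair $(x,y)$ and the syndrome in $C_A^0\ox C_B^0$, which reverses the map; indeed your expression $(\delta_A\ox\1)\,x$ for $x\in C_A^1\ox C_B^0$ does not typecheck, since $\delta_A$ maps $C_A^0\to C_A^1$.

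Because of this reversal, the central obstacle you worry about---cancellation between two summands landing in a common target---never arises: the two syndrome blocks $\delta_A V$ and $V\delta_B^T$ sit in different direct summands and cannot interfere. The paper's actual proof is then almost immediate. The hypothesis $|V|\le t=\min\{d_A,d_B\}$ forces every nonzero column of $V$ out of $\ker\delta_A$ and every nonzero row out of $\ker\delta_B^T$, so $\mathrm{col}(\delta_A V)=\mathrm{col}(V)$ and $\mathrm{row}(V\delta_B^T)=\mathrm{row}(V)$; hence $|S|\ge \mathrm{col}(V)+\mathrm{row}(V)$, and AM--GM gives $|S|^2/4\ge \mathrm{col}(V)\cdot\mathrm{row}(V)\ge |V|$. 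The line-stripping and stabiliser-reduction you sketch are closer in spirit to what is needed for the \emph{next} lemma (confinement of $\breve\delta_1$), where the error genuinely is a pair and one does have to control interactions between blocks.
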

In order to prove Lem.~\ref{lemma:confinement_1} we first introduce some useful notation. When considering vectors $v$ in a two-fold tensor product space $\mathbb F^{n_1} \ox \mathbb F^{n_2}$ it can be handy to consider their reshaping, which is $n_1 \times n_2 $ matrix on $\mathbb F$. Namely, fixed bases $\mathcal{B}^1 = \{a_1, \dots, a_{n_1}\}$ and $\mathcal{B}^2=\{b_1, \dots, b_{n_2}\}$ of $\mathbb F^{n_1}$ and $\mathbb F^{n_2}$ respectively, their product 
\begin{align*}
  \mathcal{B} = \{a_i \ox b_j\}_{\substack{i = 1, \dots, n_1\\
j = 1, \dots, n_2}}  
\end{align*}
is a basis of $\mathbb{F}^{n_1} \ox \mathbb{F}^{n_2}$. Therefore, we can write 
\begin{align}
\label{eq:reshaping_1}
    v = \sum_{a_i \ox b_j \in \mathcal B} v_{ij} a_i \ox b_j
\end{align}
for some $v_{ij} \in \mathbb F$. We call the matrix $V$ whose entries are the coefficient $v_{ij}$ the reshaping of $v$. Given matrices $M$ and $N$ of size $n_1 \times m_1$ and $n_2 \times m_2$ associated to linear maps from $\mathbb F^{n_1}$ and $\mathbb F^{n_2}$ respectively, the map $M \ox N$ from $\mathbb F^{n_1} \ox \mathbb F^{n_2}$ to $\mathbb{F}^{m_1}\ox \mathbb{F}^{m_2}$ acts on the reshaping of $v$ as:
\begin{align}
\label{eq:reshaping_2}
    (M \ox N) V \longmapsto M V N^T.
\end{align}
In the following we will always indicate with lower-case symbol vectors and with the corresponding upper-case symbols their reshaping.
We can now use this notation to prove Lem.~\ref{lemma:confinement_1}. 
\begin{proof}
Let $v \in C_0^A \ox C_0^B$ and let $s = \tilde \delta_0(v)$. By reshaping, 
\begin{align*}
S= \begin{pmatrix}
\delta_A V\\
\\
V \delta_B^T
\end{pmatrix}.
\end{align*}
If we assume $|v| = |v|^{\mathrm{red}} \le t = \min\{d_A, d_B\}$ then $V$ has no column in $\ker \delta_A$ and no row in $\ker \delta_B^T$ so that
\begin{equation*}
    \mathrm{col}(\delta_A V) = \mathrm{col}(V) \text{ and } \mathrm{row}(V\delta_B^T) = \mathrm{row}(V),
\end{equation*} 
where $\mathrm{col}(V)/\mathrm{row}(V)$ is the number of non-zero columns/rows of the matrix $V$. Therefore, for the weight of $S$, it holds that:
\begin{align*}
|S| = |\delta_A V | + |V \delta_B^T| &\ge \mathrm{col}(\delta_A V ) + \mathrm{row}(V \delta_B^T)\\
&=\mathrm{col}(V) + \mathrm{row}(V).
\end{align*} 
Combining this with $(a+b)^2/4 \ge ab$ for integers $a, b$ yields:
\begin{equation*}\label{eq:confinement_square}
|S|^2/4 \ge \mathrm{col}(V)\cdot \mathrm{row}(V) \ge |V|.
\end{equation*}
\end{proof}
We want to use Lem.~\ref{lemma:confinement_1} to infer that the code defined on the chain complex $\mathcal{C}$ has $X$-confinement. To see how this is the case, we consider an `asymmetrical' version of $\mathcal C$ as the product of the length-2 chain complex $\tilde{\mathcal C}$ and the length-1 chain complex $\delta_C : C_0^C \rightarrow C_1^C$.
The asymmetric product complex $\breve{\mathcal C}$ is then ~\\
\begin{tikzpicture}
\begin{tikzcd}
& \tilde C_2 \ox C^1_C\ar[from=2-1]  \ar[from=2-3, swap] &\\
\tilde C_1 \ox C^1_C & & \tilde C_2\ox C_0^C\\
\tilde C_0 \ox C^1_C \ar[to=2-1] & & \tilde C_1 \ox C_0^C \ar[to=2-1] \ar[to=2-3]\\
 &\tilde C_0 \ox C_C^0  \ar[to=3-1] \ar[to=3-3, swap]&
\end{tikzcd}\hspace{0.5 cm}
\begin{tikzcd}
\mathcal{\breve C}_3\\
\mathcal{\breve C}_2 \ar[to=1-1, swap, "\breve\delta_{2}"]\\
\mathcal{\breve C}_1 \ar[to=2-1, swap, "\breve\delta_{1}"]\\
\mathcal{\breve C}_{0}\ar[to=3-1, swap, "\breve\delta_{0}"]
\end{tikzcd}
\end{tikzpicture}
where
\begin{align*}
\breve \delta_{0}&=\begin{pmatrix}
\1 \ox \delta_C \\
\tilde \delta_0 \ox \1
\end{pmatrix},  \\ &\\
\breve \delta_1 &= \begin{pmatrix}
\tilde \delta_0 \ox \1 & \1 \ox \delta_C\\
0 & \tilde \delta_1 \ox \1
\end{pmatrix}, \\ &\\
\breve\delta_2 &= \begin{pmatrix}
\tilde \delta_1\ox \1 & \1 \ox \delta_C
\end{pmatrix}.
\end{align*}
\begin{claim}\label{claim:m}
Let $(v, w)\in \breve{\mathcal C_1}$ have weight less than $t$ and $s = \breve{\delta}_1(v, w)$ be its syndrome . If $(V, W)$ is the reshaping of the vector $(v, w)$ then the following Syndrome Equation holds: 
\begin{align}\label{eq:syndrome_equation}\tag{SE}
S = \begin{pmatrix}
S_1\\
\\
S_2
\end{pmatrix}
= \begin{pmatrix}
\tilde \delta_0 V+ W \delta_C^T\\
\\
\tilde \delta_1 W
\end{pmatrix},
\end{align}
where $S$ is the reshaping of $s$.

Note that a stabiliser for the chain complex $\breve{\mathcal C}_0 \rightarrow \breve{\mathcal C}_1 \rightarrow \breve{\mathcal C}_2\rightarrow \breve{\mathcal C}_3$ and the syndrome map $\breve{\delta}_1(\cdot)$ has the form $\breve{\delta}_0(m)$ for some $m \in \breve{\mathcal C}_0$. By construction, we can add any stabiliser to $(v, w)$ without violating the Syndrome Equation \eqref{eq:syndrome_equation}.
In particular,
\begin{enumerate}
\item $|(v, w)|<t$ entails that its reshaping satisfies the following properties:
\begin{enumerate}
\item Both $V$ and $W$ have at most $t$ non-zero rows. Thus all their columns have weight at most $t$.
\item Both $V$ and $W$ has at most $t$ non-zero columns. Thus all their rows have weight at most $t$.
\end{enumerate}
\item Fix a row index $i$ and a column index $j$. Let $M$ be a matrix in $\breve{\mathcal C}_0$ with columns 
\begin{align*}
M^h = \begin{cases}
V^j &\text{ for $h$ in }\supp(W \delta_C^T)_i= \supp(W_i \delta_C^T),\\
0 &\text{elsewhere}.
\end{cases}
\end{align*}
Its image $(M \delta_C^T, \,\tilde{\delta}_0 M)$ through $\breve\delta_0$ is a stabiliser. Define $V^{\star}$ and $W^{\star}$ as
\begin{align*}
V^{\star}= V + M \delta_C^T\quad\text{ and }\quad W^{\star}= W + \tilde{\delta}_0M.
\end{align*}
Observe that:
\begin{enumerate}
\item $M$ is a matrix whose non-zero columns are equal to a column of $V$. Therefore $M$ has row support contained in the row support of $V$:
\begin{align}\label{eq:row_m}
\mathrm{row}(V^{\star}) \subseteq \mathrm{row}(V).
\end{align}
\item $M$ is a matrix whose column support is $\supp(W_i\delta_C^T)$ for some row $W_i$ of $W$. Therefore $M$ has column support contained in the column support of $W$:
\begin{align}\label{eq:col_m}
\mathrm{col}(W^{\star}) \subseteq \mathrm{col}(W).
\end{align}
\end{enumerate}
\end{enumerate}
\end{claim}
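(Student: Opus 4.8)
The plan is to verify the assertions of the claim in turn, each by a short reshaping computation. For the Syndrome Equation, I would write $\breve\delta_1$ in the block form displayed above and apply the reshaping identity \eqref{eq:reshaping_2} block by block: $\tilde\delta_0 \ox \1$ sends the reshaping $V$ of $v$ to $\tilde\delta_0 V$, $\1 \ox \delta_C$ sends the reshaping $W$ of $w$ to $W\delta_C^T$, and $\tilde\delta_1 \ox \1$ sends $W$ to $\tilde\delta_1 W$; assembling these componentwise gives exactly $S = (\tilde\delta_0 V + W\delta_C^T,\ \tilde\delta_1 W)^T$, with no hypothesis on the weight of $(v,w)$. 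For the remark on stabilisers, the same identity applied to $\breve\delta_0$ shows that a stabiliser $\breve\delta_0(m)$ reshapes to the pair $(M\delta_C^T,\ \tilde\delta_0 M)$; substituting $(V,W) \mapsto (V + M\delta_C^T,\ W + \tilde\delta_0 M)$ into the Syndrome Equation then leaves $S$ invariant, either by invoking $\breve\delta_1\breve\delta_0 = 0$ or directly, since the two copies of $\tilde\delta_0 M\delta_C^T$ produced in $S_1$ cancel over $\mathbb F_2$ and the extra term $\tilde\delta_1\tilde\delta_0 M$ in $S_2$ vanishes.

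For item 1, the point is that reshaping is a support-preserving linear bijection, so $|(v,w)| = |V| + |W|$; the hypothesis $|(v,w)| < t$ therefore forces $|V| < t$ and $|W| < t$. A binary matrix with fewer than $t$ nonzero entries has strictly fewer than $t$ nonzero rows and strictly fewer than $t$ nonzero columns, and the weight of any single one of its columns (respectively rows) is bounded above by its number of nonzero rows (respectively columns); this gives (a) and (b).

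For item 2, I would read the conclusions off the explicit shape of $M$. By construction every nonzero column of $M$ is a copy of the single column $V^j$, so the set of nonzero rows of $M$ is contained in $\supp(V^j) \subseteq \mathrm{row}(V)$; since right-multiplication by $\delta_C^T$ can only delete nonzero rows, never create them, $\mathrm{row}(M\delta_C^T) \subseteq \mathrm{row}(M) \subseteq \mathrm{row}(V)$, whence $\mathrm{row}(V^\star) = \mathrm{row}(V + M\delta_C^T) \subseteq \mathrm{row}(V)$, which is \eqref{eq:row_m}. Dually, by construction the set of nonzero columns of $M$ is the support of the $i$-th row of $W\delta_C^T$, which is contained in $\mathrm{col}(W)$; since left-multiplication by $\tilde\delta_0$ can only delete nonzero columns, $\mathrm{col}(\tilde\delta_0 M) \subseteq \mathrm{col}(M) \subseteq \mathrm{col}(W)$, whence $\mathrm{col}(W^\star) = \mathrm{col}(W + \tilde\delta_0 M) \subseteq \mathrm{col}(W)$, which is \eqref{eq:col_m}; that $(M\delta_C^T, \tilde\delta_0 M)$ is itself a stabiliser is precisely the reshaping computation of the first paragraph applied to $m$.

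The only genuinely delicate part is keeping straight which tensor factor of the iterated product supplies the row index and which the column index of each reshaped matrix --- in particular verifying in item 2 that the column support prescribed for $M$ really does sit inside $\mathrm{col}(W)$ rather than being spread out by $\delta_C$. Once the reshaping conventions of \eqref{eq:reshaping_1}--\eqref{eq:reshaping_2} are fixed consistently across all three seed spaces, everything else is immediate from those conventions, from the chain-complex identities $\tilde\delta_1\tilde\delta_0 = 0$ and $\breve\delta_1\breve\delta_0 = 0$, and from the elementary fact that multiplying a matrix on the right never enlarges its set of nonzero rows while multiplying on the left never enlarges its set of nonzero columns.
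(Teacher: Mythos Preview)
Your proposal is correct and matches the paper's approach: in the paper this claim is stated as a list of observations with only inline justification, not a separate proof, and you have simply spelled out those justifications using exactly the tools the paper intends --- the reshaping identity \eqref{eq:reshaping_2}, the chain-complex relations $\tilde\delta_1\tilde\delta_0=0$ and $\breve\delta_1\breve\delta_0=0$, and the elementary fact that left (resp.\ right) multiplication cannot enlarge column (resp.\ row) support. Your flag about the indexing in item~2(b) is apt and is the only genuinely non-mechanical point; once the column conventions are fixed, your argument that $\mathrm{col}(\tilde\delta_0 M)\subseteq\mathrm{col}(M)\subseteq\mathrm{col}(W)$ is exactly what the paper's one-line ``Therefore'' is appealing to.
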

\begin{lemma}[Inheritance of confinement] $\breve{\delta}_1$ has $(t, f)$-confinement where $t = \min\{d_A, d_B, d_C\}$ and $f(x)=x^3/2$.
\end{lemma}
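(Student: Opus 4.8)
The plan is to reduce the three‑dimensional confinement statement to two applications of the two‑dimensional estimate of \cref{lemma:confinement_1}, by slicing the error along the ``extra'' coordinate $C_C$. First I fix a $Z$‑error $e=(v,w)\in\breve{\mathcal C}_1$ with $|e|^{\mathrm{red}}\le t:=\min\{d_A,d_B,d_C\}$ and, since confinement only restricts the reduced weight, replace $e$ by a minimum‑weight representative of its coset, so that $|(v,w)|=|e|^{\mathrm{red}}\le t$ and the reshaped syndrome $S=(S_1,S_2)=\breve\delta_1(v,w)$ obeys the Syndrome Equation~\eqref{eq:syndrome_equation}, i.e.\ $S_1=\tilde\delta_0 V+W\delta_C^T$ and $S_2=\tilde\delta_1 W$. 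By \cref{claim:m} every row and every column of $V$ and of $W$ then has Hamming weight at most $t$, so every one‑dimensional slice of the error has weight not exceeding the distance of each seed code; in particular none of the relevant rows/columns of a slice lies in $\ker\delta_A$, $\ker\delta_B$ or their transposes, which is exactly the hypothesis under which \cref{lemma:confinement_1} bites.

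Next I slice along $C_C$, writing $W^k$ for the $C_C^0$‑slices of $W$ (each $W^k\in\tilde{\mathcal C}_1$) and $V^\gamma$ for the $C_C^1$‑slices of $V$ (each $V^\gamma\in\tilde{\mathcal C}_0$). The Syndrome Equation decouples into: (i) a two‑dimensional equation $\tilde\delta_1 W^k=S_2^k$ for each $k$; and (ii) a two‑dimensional equation $\tilde\delta_0 V^\gamma=S_1^\gamma+\sum_{k\in\supp((\delta_C)_\gamma)}W^k$ for each $\gamma$, where $(\delta_C)_\gamma$ is the $\gamma$th row of $\delta_C$ and the last sum is the $\gamma$‑slice of the cross‑term $W\delta_C^T$. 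The $Z$‑stabilisers of the 3D code act on these slices as $W^k\mapsto W^k+\tilde\delta_0 M_k$ and $V^\gamma\mapsto V^\gamma+\sum_k M_k$, coupled through $\delta_C$ precisely in the form used in \cref{claim:m}.

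For the $V$‑part this already gives the bound with no further work: since every column (resp.\ row) of $V^\gamma$ has weight $\le t\le d_A$ (resp.\ $\le d_B$), the argument of \cref{lemma:confinement_1} gives $|\tilde\delta_0 V^\gamma|\ge\mathrm{col}(V^\gamma)+\mathrm{row}(V^\gamma)$ and hence $|V^\gamma|\le|\tilde\delta_0 V^\gamma|^2/4$; summing over $\gamma$ and using $\sum_\gamma a_\gamma^2\le\big(\sum_\gamma a_\gamma\big)^2$ yields $|V|\le|S_1+W\delta_C^T|^2/4$. The $W$‑part is the genuine obstacle: $\tilde\delta_1$, unlike $\tilde\delta_0$, is \emph{not} confined (a 2D ``$Z$‑string'' has syndrome of weight $2$ but unbounded length, reflecting the cancellation possible in $\tilde\delta_1(\mathcal P,\mathcal Q)=\mathcal P\delta_B^T+\delta_A\mathcal Q$), so $W^k$ cannot be controlled from $S_2^k$ alone. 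This is exactly where \cref{claim:m} does the work: one uses the support‑preserving stabiliser modifications of part~2 of the Claim — which never enlarge $\mathrm{col}(W)$ or $\mathrm{row}(V)$ — to put $(v,w)$ into a canonical form in which the support of $W$, and therefore of the cross‑term $W\delta_C^T$, is confined to a box whose side‑lengths along the $A$‑ and $B$‑directions are governed by $|S_2|$; together with the fact that $(v,w)$ is globally minimum‑weight, this bounds both $|W|$ and $|W\delta_C^T|$ by $O(|S_2|^2)$ without any low‑density constants. This step is an adaptation of the soundness proof for 4D product codes of \cite{campbell2019theory}, and is where essentially all of the effort goes.

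Finally I combine the two estimates: substituting the bound on $|W\delta_C^T|$ into the $V$‑bound and adding $|W|$, the total $|(v,w)|=|V|+|W|$ becomes a cubic expression in $|S|=|S_1|+|S_2|$; the two factors of $1/4$ from the two uses of \cref{lemma:confinement_1}, together with the single extra power of $|S|$ contributed by summing over the $C_C$‑direction, are then tracked to give $|(v,w)|\le|S|^3/2$, i.e.\ $(t,x^3/2)$‑confinement (the book‑keeping is where a better constant, if any, would appear, matching the ``or better'' in \cref{thm:3d}). In summary, the hard part is the $W$‑estimate — controlling $W$ and $W\delta_C^T$ in the absence of confinement for $\tilde\delta_1$, which forces the delicate canonical‑form construction of \cref{claim:m} — while, once the error has been normalised, the two applications of \cref{lemma:confinement_1} and the final constant‑chasing are routine.
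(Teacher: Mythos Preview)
Your proposal has the right skeleton but a genuine gap in the control of $W$. You correctly note that $\tilde\delta_1$ is not confined, yet you then assert that the stabiliser moves of Claim~\ref{claim:m} put $W$ into a box whose $A$- and $B$-extents are governed by $|S_2|$, giving $|W|,\,|W\delta_C^T|=O(|S_2|^2)$. This cannot work: $S_2=\tilde\delta_1 W$ is precisely the non-confined direction, and the moves of Claim~\ref{claim:m} only guarantee that $\mathrm{col}(W^\star)\subseteq\mathrm{col}(W)$ and $\mathrm{row}(V^\star)\subseteq\mathrm{row}(V)$ --- they say nothing about $\mathrm{row}(W)$ in terms of $S_2$. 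Your $V$-bound $|V|\le\tfrac14|S_1+W\delta_C^T|^2$ is therefore vacuous, since $|W\delta_C^T|$ is uncontrolled.

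The paper closes this gap differently: the row support of $W^\star$ is pinned down by $S_1$, not $S_2$. One uses the moves of Claim~\ref{claim:m} iteratively (Steps~1 and~2) to delete every row and every column of $W\delta_C^T$ lying outside the support of $S_1$, reaching $\mathrm{row}(W^\star\delta_C^T)\subseteq\mathrm{row}(S_1)$ and $\mathrm{col}(W^\star\delta_C^T)\subseteq\mathrm{col}(S_1)$. Since each row of $W^\star$ has weight $\le t<d_C$ (the moves preserve $\mathrm{col}(W)$, hence the row-weight bound from Claim~\ref{claim:m}), no row lies in $\ker\delta_C^T$, so $\mathrm{row}(W^\star)=\mathrm{row}(W^\star\delta_C^T)\subseteq\mathrm{row}(S_1)$; together with $\mathrm{col}(W^\star)\subseteq\mathrm{col}(W)=\mathrm{col}(S_2)$ this gives $|W^\star|\le|S_1|\,|S_2|$. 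Only now is \cref{lemma:confinement_1} invoked, and only once: column-wise on $V^\star$, using that $\tilde\delta_0 V^\star=S_1+W^\star\delta_C^T$ now has row and column support inside that of $S_1$, yielding $|V^\star|\le\tfrac14|S_1|^3$. Summing gives $|(v^\star,w^\star)|\le\tfrac14|S|^3$, hence confinement with $f(x)=x^3/2$ (or better). In short: there is a single use of \cref{lemma:confinement_1}, and the crucial handle on $W$ comes from the cross-term $W\delta_C^T$ appearing in $S_1$, not from $S_2$.
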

\begin{proof}
Let $(v, w)\in \breve{\mathcal C_1}$ be such that $|(v, w)| = |(v, w)|^{\mathrm{red}} \le t$ and $s = \breve{\delta}_1(v, w)$ be its syndrome. Reshaping vectors into matrices (see Eq.~\eqref{eq:reshaping_1} and  \eqref{eq:reshaping_2}) yields the following Syndrome Equation:
\begin{align}\tag{SE}
S = \begin{pmatrix}
S_1\\
\\
S_2
\end{pmatrix}
= \begin{pmatrix}
\tilde \delta_0 V+ W \delta_C^T\\
\\
\tilde \delta_1 W
\end{pmatrix}
\end{align}
We will transform the vector $(V, W)$ by adding stabilisers to it in order to change its column and row support. We do this by iterating the following two steps.
\begin{enumerate}[label=Step \arabic*:]
\item Let $i, j$ be row and column indices s.t.
\begin{enumerate}[label=(\alph*)]
\item $(W\delta_C^T)_i \neq 0$ and $(S_1)_i = 0$;
\item $(\tilde \delta_0 V)_{ij} \neq 0$ and $(W\delta_C^T)_{ij} = 1$.
\end{enumerate}
Build a matrix $M$ as in Claim~\ref{claim:m}.\\
Transform $V$ and $W$ accordingly:
\begin{align*}
V \longmapsto V+ M\delta_C^T\\
W\longmapsto W + \tilde \delta_0 M
\end{align*}
Note that in this way we are able to delete row $i$ of $W\delta_C^T$.\\
Iterate this step until we obtain:
\begin{align}\label{eq:row_0}
\mathrm{row}(W \delta_C^T) \subseteq \mathrm{row} (S_1).
\end{align}
\item Let $i, j$ be row and column indices s.t.
\begin{enumerate}[label=(\alph*)]
\item $(W\delta_C^T)^j \neq 0$ and $(S_1)^j = 0$; 
this entails $(\tilde{\delta}_0V)^j = (W\delta_C^T)^j$;
\item $(\tilde \delta_0 V)_{ij} = (W\delta_C^T)_{ij} = 1$.
\end{enumerate}
Build a matrix $M$ as in Claim~\ref{claim:m}.\\
Transform $V$ and $W$ accordingly:
\begin{align*}
V \longmapsto V+ M\delta_C^T\\
W\longmapsto W + \tilde \delta_0 M
\end{align*}
Note that in this way we are able to delete row $i$ of $W\delta_C^T$ and by repeatedly doing so we can delete any column $j$ of $W(\delta_C^T)$ which does not belongs to the column support of $S_1$.\\
Iterate this step until we obtain:
\begin{align}\label{eq:col_0}
\mathrm{col}(W \delta_C^T) \subseteq \mathrm{col}(S_1).
\end{align}
\end{enumerate}
Let $\bf{M}$ be the matrix formed by summing over all the matrices $M$ found during these two steps. Define $V^{\star}$ and $W^{\star}$ as 
\begin{align}
V^{\star}= V + \mathbf{M} \delta_C^T\quad\text{ and }\quad W^{\star} = W + \tilde{\delta}_0\mathbf{M}.
\end{align}
We now proceed to prove an upper bound for the weight of $W^{\star}$ first and then one for the weight of $V^{\star}$. By combining these two bounds we obtain the desired confinement relation between the weight of the syndrome and the weight of the error.

\textsc{Bound on the weight of $W^{\star}$}
\begin{enumerate}
\item By Claim~\ref{claim:m}, no row of $W^{\star}$ has weight more than $t$ and therefore none of them belongs to $\ker \delta_C^T$ so that $\mathrm{row}(W^{\star}\delta_C^T) = \mathrm{row}(W^{\star})$. Combining this with Equation \eqref{eq:row_0}  yields:
\begin{align}\label{eq:row_1}
 \mathrm{row}(W^{\star}) \subseteq \mathrm{row}(S_1).
\end{align}
\item By Claim~\ref{claim:m}, the column support of $W^{\star}$ is contained in the column support of $W$ which is equal to the column support of $S_2$, by assumption on its weight. Summing these up:
\begin{align}\label{eq:col_1}
\mathrm{col}(W^{\star}) \subseteq \mathrm{col}(S_2).
\end{align}
\item Combining \eqref{eq:row_1} and \eqref{eq:col_1} yields:
\begin{align}\label{eq:last_w}
|S_1||S_2| \ge |W^{\star}|.
\end{align}
\end{enumerate}
\par
\textsc{Bound on the weight of $V^{\star}$}.
\begin{enumerate}
\item By rearranging the Syndrome Equation  \eqref{eq:syndrome_equation}, we can write $\tilde \delta_0 V^{\star} = S_1 + W^{\star}\delta_C^T$. Equations \eqref{eq:row_0} and \eqref{eq:col_0} therefore entail 
\begin{align}\label{eq:row_2}
\mathrm{row}(\tilde \delta_0V^\star) \subseteq \mathrm{row(S_1)},
\end{align}
and
\begin{align}\label{eq:col_2}
\mathrm{col}(\tilde \delta_0 V^{\star}) \subseteq \mathrm{col}(S_1).
\end{align}
\item By Claim~\ref{claim:m}, the row support of $V^{\star}$ is contained in the row support of $V$ which has cardinality at most $t$. In particular, all the columns of $V^{\star}$ have weight at most $t$ and therefore we can use the confinement property of the map $\tilde{\delta}_0$ column wise (see Lem.~\ref{lemma:confinement_1}). In other words, for each column $j$ of $V^{\star}$, the following holds:
\begin{align}
\frac{\lvert(\tilde{\delta}_0 V^{\star})^j\rvert}{4}^2\ge |(V^{\star})^{j}|.
\end{align}
Combining this with Equation \eqref{eq:row_2} yields:
\begin{align}\label{eq:row_3}
\frac{|\mathrm{row}(S_1)|}{4}^2 \ge |(V^{\star})^j|.
\end{align}
\item By Claim~\ref{claim:m}, no column of $V^{\star}$ has weight more than $t$ and therefore none of them belongs to $\ker \tilde \delta_0$ so that $\mathrm{col}(V^{\star}) = \mathrm{col}(\tilde{\delta}_0 V^{\star})$. By Equation \eqref{eq:col_2} this entails
\begin{align}
    \mathrm{col}(V^{\star}) \subseteq \mathrm{col}(S_1).
\end{align}
In other words, $V^{\star}$ has at most $|\mathrm{col}(S_1)|$ non-zero columns and combining this with Equation \eqref{eq:row_3} yields:
\begin{align}
\frac{|\mathrm{row}(S_1)|}{4}^2|\mathrm{col}(S_1) | \ge |V^{\star}|,
\end{align}
which entails:
\begin{align}\label{eq:last_v}
\frac{1}{4}|S_1|^3  \ge |V^\star|.
\end{align}
\end{enumerate}
Since $|S| = |S_1|+ |S_2|$ and $|(V, W)|= |V|+|W|$,  we can add the bounds found for $V^{\star}$ and $W^{\star}$. Observing that $(a+b)^3 \ge (a^3 + a^2b + ab)$ for integer $a, b$, we obtain that $(v^{\star}, w^{\star})$ is a vector equivalent to $(v, w)$ (i.e. it satisfies the Syndrome Equation \eqref{eq:syndrome_equation}) for which it holds:
\begin{align}
\frac{1}{4}|s|^3\ge |(v^{\star}, w^{\star})|.
\end{align}
In conclusion, since $|(v^{\star}, w^{\star})| \ge |(v,w)|=|(v, w)|^{\mathrm{red}}$, we have proved that $\breve{\delta}_1$ has confinement with respect to the function $f(x) = x^3/2$.
\end{proof}

\section{Fitting details \label{app:fitting}}
 
To obtain our threshold estimates, we use the standard critical exponent method~\cite{Harrington2004}. Specifically, in the vicinity of the threshold, we fit our data to the following ansatz
\begin{equation}
    a_0 + a_1 x + a_2 x^2,
    \label{eq:pth-ansatz}
\end{equation}
where the rescaled variable $x=(p-p_{\mathrm th})L^{1/\mu}$. Examples of this fit are shown in Fig.~\ref{fig:rescaled_fit}.

\begin{figure*}
    \centering
    \subfloat[]{
        \centering
        \includegraphics[width=0.33\linewidth]{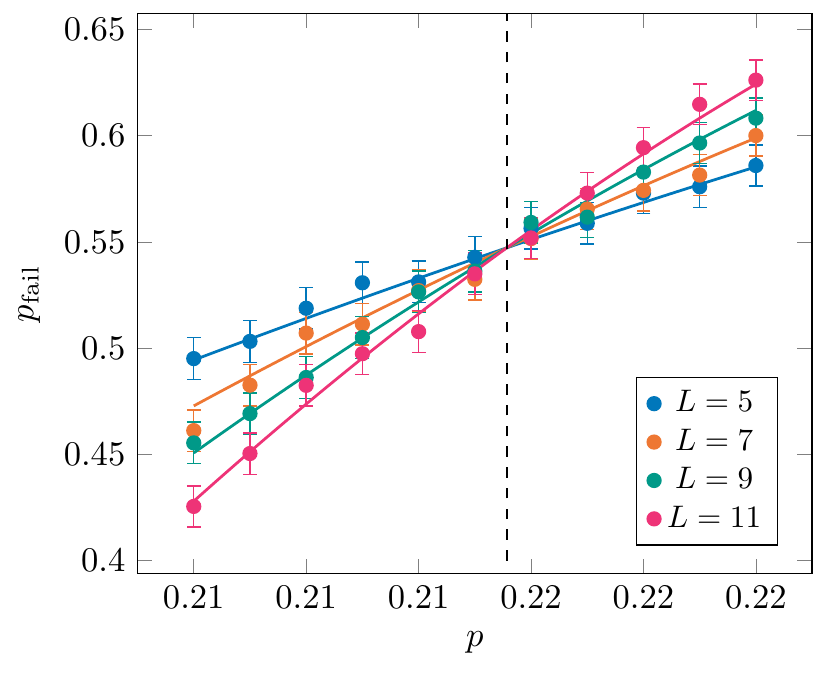}
    }
    \subfloat[]{
        \centering
        \includegraphics[width=0.33\linewidth]{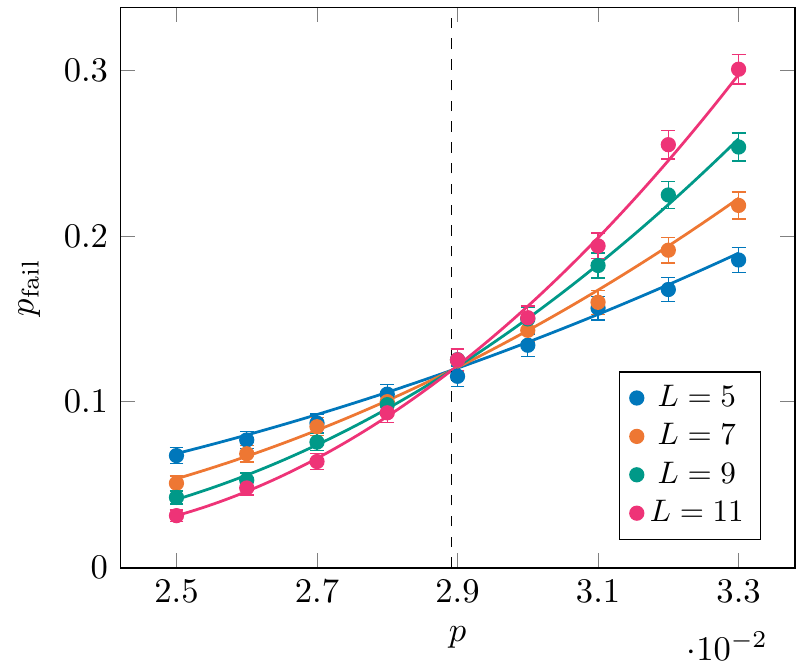}
    }
    \subfloat[]{
        \centering
        \includegraphics[width=0.33\linewidth]{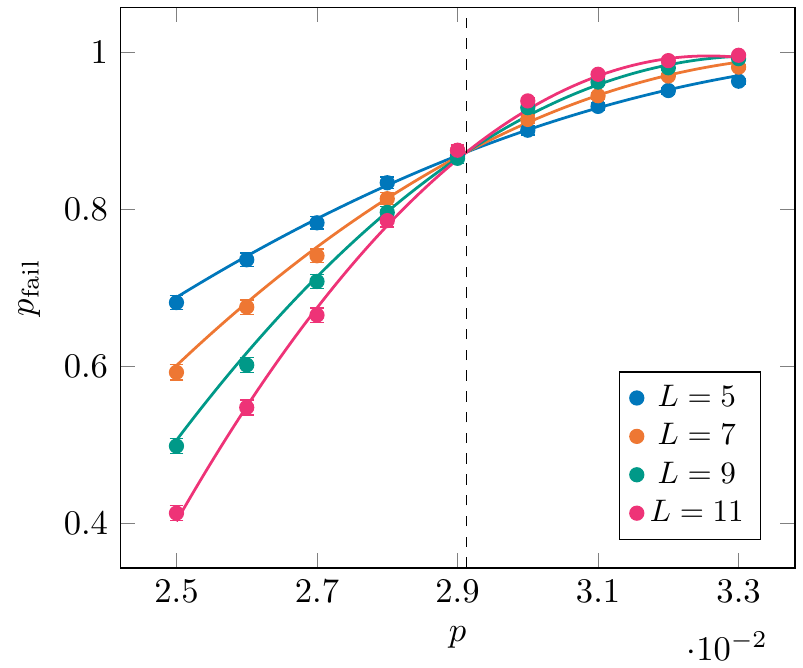}
    }
    \hfill
    \subfloat[]{
        \centering
        \includegraphics[width=0.33\linewidth]{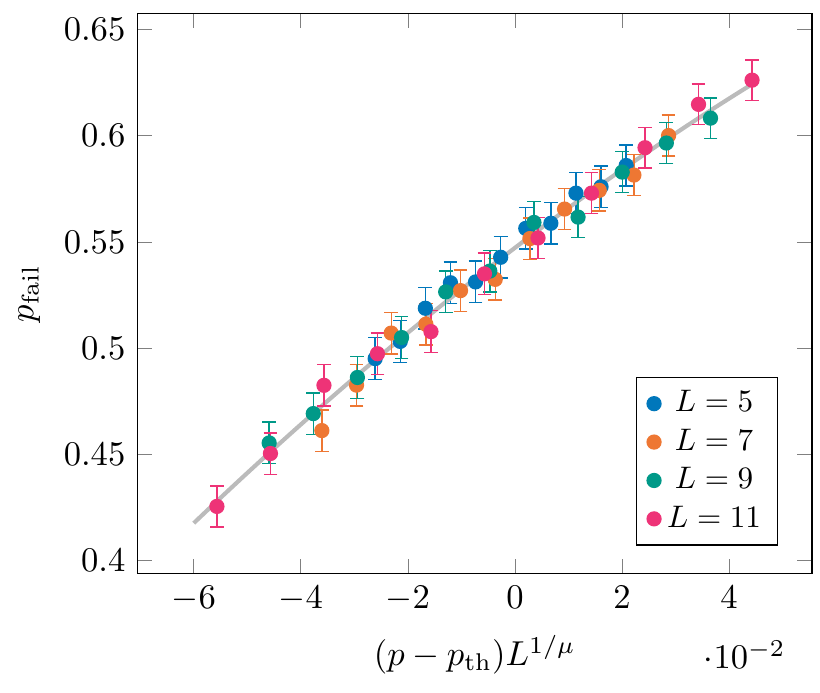}
    }
    \subfloat[]{
        \centering
        \includegraphics[width=0.33\linewidth]{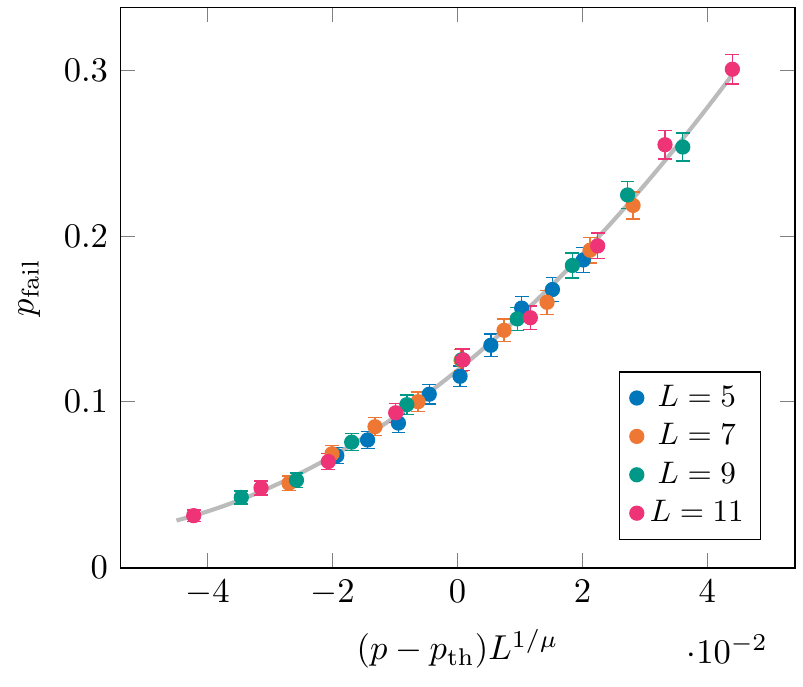}
    }
    \subfloat[]{
        \centering
        \includegraphics[width=0.33\linewidth]{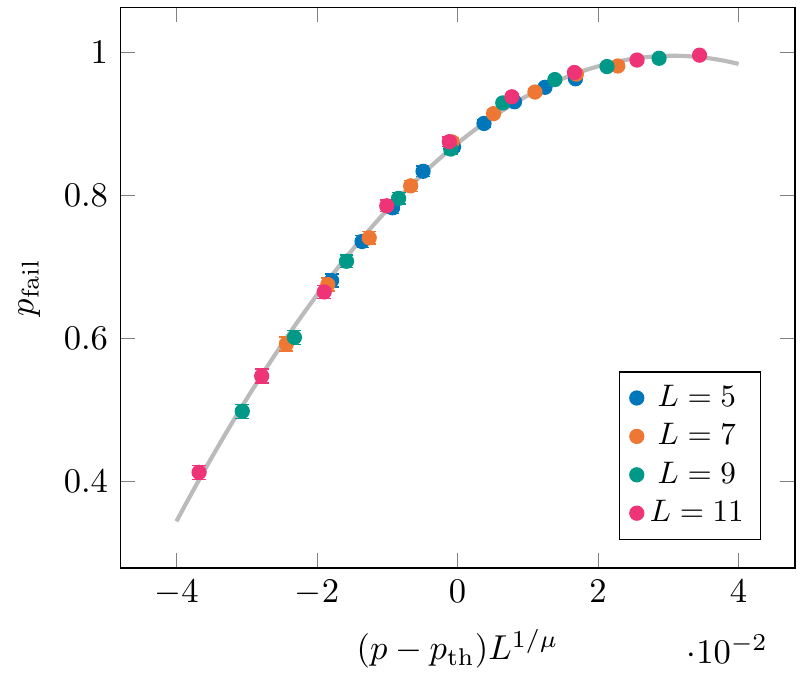}
    }
    \caption{Threshold fits for the 3D toric code using MWPM \& BP+OSD to decode. In (a), we plot the logical error rate $p_{\mathrm{fail}}$ as a function of the phase-flip error rate $p$, for values of $p$ close to the threshold. The coloured lines show the fit given by \cref{eq:pth-ansatz}, with parameters $a_0 = 0.547$, $a_1 = 1.92$, $a_2 = -4.04$, $\mu = 1.04$, and $p_{\mathrm{th}} = 0.216$ (dashed grey line). In (d), we show the same data using the rescaled variable $x=(p-p_{\mathrm{th}})L^{1/\mu}$. Subfigures (b) and (e) show equivalent data for one round of single-shot error correction, with fit parameters $a_0 = 0.119$, $a_1 = 3.04$, $a_2 = 22.9$, $\mu = 1.01$, and $p_{\mathrm{th}} = 0.0289$. Subfigures (c) and (f) show equivalent data for sixteen rounds of single-shot error correction, with fit parameters $a_0 = 0.873$, $a_1 = 7.99$, $a_2 = -130$, $\mu = 1.10$, and $p_{\mathrm{th}} = 0.0291$.   
    The error bars show the 95\% confidence intervals $p_{\rm fail} = \hat p_{\rm fail} \pm 1.96\sqrt{p_{\rm fail}(1 - p_{\rm fail})/\eta}$, where $\eta \geq 10^4$ is the number of Monte Carlo trials.}
\label{fig:rescaled_fit}
\end{figure*}

We use the fitting method described in~\cite{brown15} to understand the behaviour of the 3D toric code logical error rate for error rates $p$ significantly below threshold. Recall from Sec.~\ref{sec:numerics} that we use the following ansatz:
\begin{equation}
    p_{\mathrm{fail}}(L) \propto (p/p_{\mathrm{th}})^{\alpha L^\beta},
    \label{eq:scaling-fit-2}
\end{equation}
we take the logarithm of both sides to obtain
\begin{equation}
    \log p_{\mathrm{fail}} = \log f(L) + \alpha L^\beta \log (p/p_{\mathrm{th}}).
    \label{eq:lin_fit_1}
\end{equation}
For different values of $L$, we plot $\log p_{\mathrm{fail}}$ as a function of $\log (p/p_{\mathrm{th}})$ and fit to a straight line to obtain gradients
\begin{equation}
    g(L) = \frac{\partial \log p_{\mathrm{fail}}}{\partial u} = \alpha L^{\beta},
\end{equation}
where $u=\log (p/p_{\mathrm{th}})$. Finally, take the logarithm of both sides of the above to give
\begin{equation}
    \log g = \log \alpha + \beta \log L.
    \label{eq:lin_fit_2}
\end{equation}
We then plot $\log g$ as a function of $\log L$ and fit to a straight line to get $\alpha$ and $\beta$. Fig.~\ref{fig:scaling-fit} illustrates the above fitting procedure for code-capacity noise (no measurement errors) and for eight rounds of single-shot error correction. 

\begin{figure*}
    \centering
    \subfloat[]{
        \centering
        \includegraphics[width=0.5\linewidth]{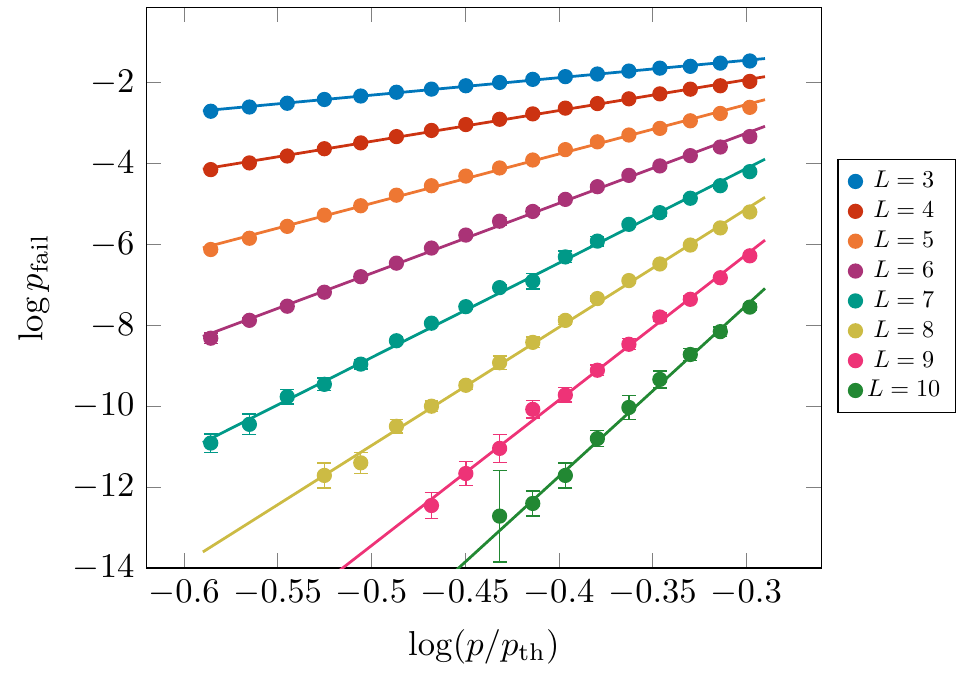}
    }
    \subfloat[]{
        \centering
        \includegraphics[width=0.43\linewidth]{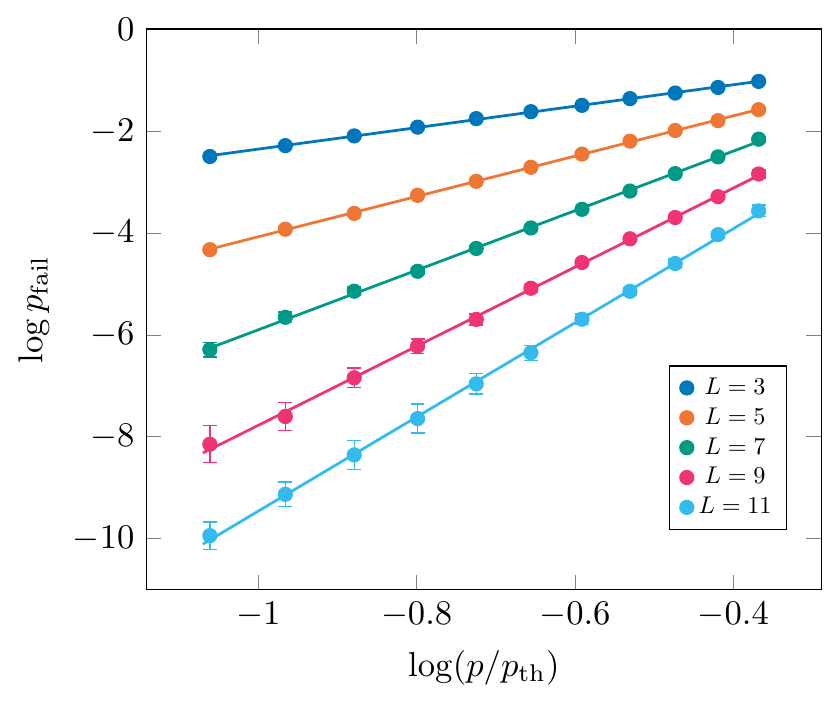}
    }
    \hfill
    \subfloat[]{
        \centering
        \includegraphics[width=0.48\linewidth]{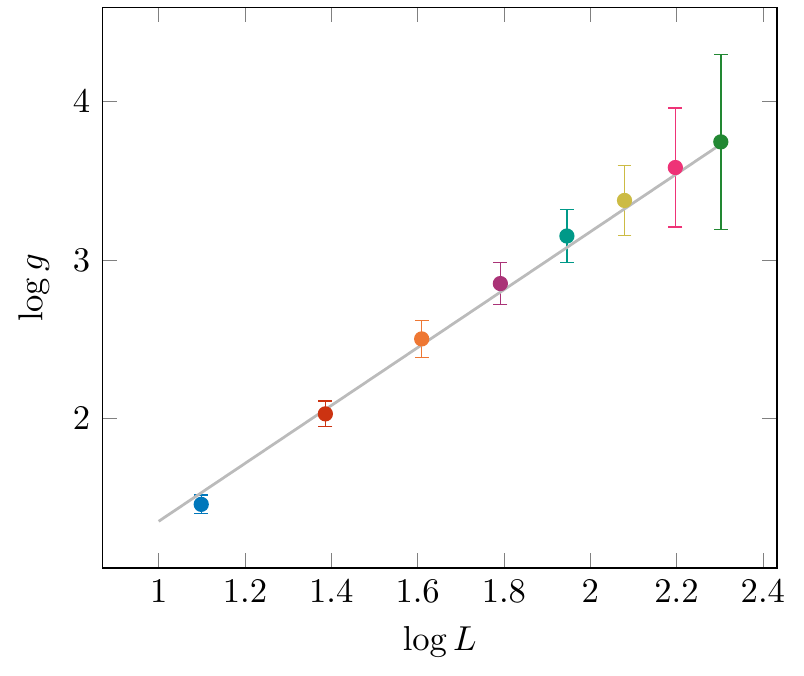}
    }
    \subfloat[]{
        \centering
        \includegraphics[width=0.48\linewidth]{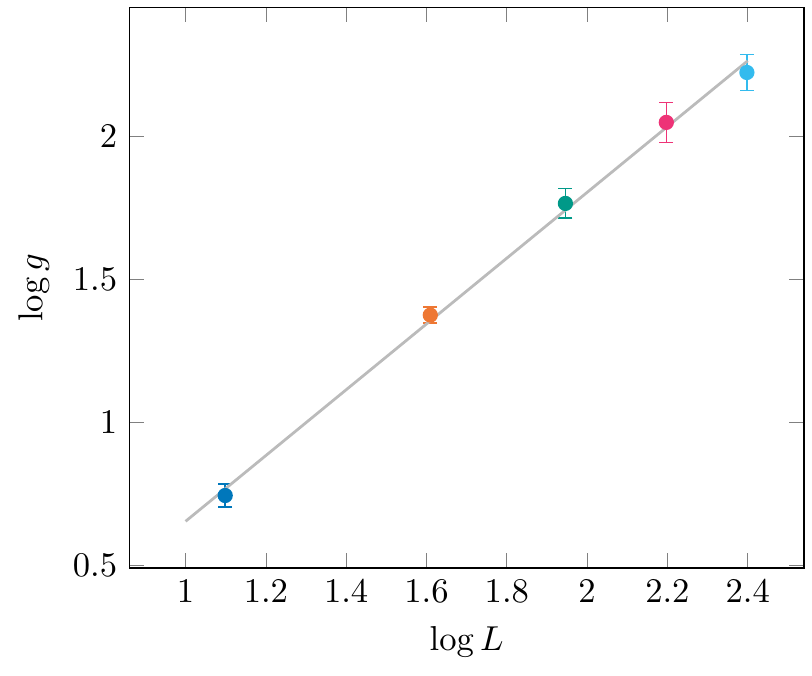}
    }
    \caption{Illustration of the fitting procedure for finding the coefficients describing the suppression of the logical error rate for phase-flip error rates substantially below threshold. (a) and (c) show data for code capacity noise (no measurement errors), and (b) and (d) show data for eight rounds of single-shot error correction. In both cases, we first plot $\log p_{\mathrm{fail}}$ as a function of $\log (p/p_{\mathrm{th}})$ for differing values of $L$, observing trends that agree with the straight line prediction of \cref{eq:lin_fit_1} [(a) and (b)]. We note that for the single-shot case there is an odd-even effect so we only include the data for odd $L$. We extract the gradients $g(L)$ from the corresponding straight line fits in (a) and (b) (grey lines), and plot the logarithms of these values against $\log L$ [(c) and (d)]. The data fit well to the linear ansatz given in \cref{eq:lin_fit_2}, which allows us to estimate the parameters $\alpha$ and $\beta$, which control the suppression of the logical error rate as per \cref{eq:scaling-fit-2}. For code capacity noise, we estimate $\alpha = 0.546(33)$ and $\beta = 1.91(3)$, and for eight rounds of single-shot error correction, we estimate $\alpha = 0.610(37)$ and $\beta = 1.15(3)$. The error bars in (a) and (b) show the 95\% confidence intervals $\log p_{\rm fail} = \log \hat p_{\rm fail} \pm \frac{1.96}{p_{\rm fail}}\sqrt{p_{\rm fail}(1 - p_{\rm fail})/\eta}$, where $\eta \geq 10^4$ is the number of Monte Carlo trials. We only include data points with at least 25 failures. The error bars in (c) and (d) show the 95\% confidence intervals given by the \textsc{LinearModelFit} function of Mathematica.
    }
    \label{fig:scaling-fit}
\end{figure*}
\clearpage
\end{document}